\newtheorem{theorem}{Theorem}[section]\newtheorem{lemma}[theorem]{Lemma}\newtheorem{claim}[theorem]{Claim}
\theoremstyle{definition}\newtheorem{definition}[theorem]{Definition}
\renewcommand{\-}{\textrm{-}}\newcommand{\e}{\emph}\renewcommand{\cal}[1]{\ensuremath{\mathcal{{#1}}}\xspace}\newcommand\eps{\ensuremath{\varepsilon}\xspace}
\newcommand{\ddp}{discrete Dubins path\xspace}\newcommand{\ddps}{discrete Dubins paths\xspace}
\newcommand\cw{clockwise\xspace}\newcommand\ccw{counterclockwise\xspace}
\renewcommand{\th}{\ensuremath{\theta}\xspace}\renewcommand{\ll}{\ensuremath{\ell}\xspace}
\renewcommand{\P}{\ensuremath{P}\xspace}
\renewcommand{\|}[1]{\ensuremath{|#1|}\xspace}
\newcommand{\conf}[1]{\ensuremath{\mathcal{#1}=(\lowercase{#1},#1)}\xspace}
\newcommand{\uu}{\ensuremath{\mathcal{U}}\xspace}\newcommand{\vv}{\ensuremath{\mathcal{V}}\xspace}
\renewcommand{\u}{\ensuremath{u}\xspace}\renewcommand{\v}{\ensuremath{v}\xspace}
\newcommand{\U}{\ensuremath{U}\xspace}\newcommand{\V}{\ensuremath{V}\xspace}
\newcommand{\ccone}{\ensuremath{\mathcal{C}_1}\xspace}\newcommand{\cctwo}{\ensuremath{\mathcal{C}_2}\xspace}
\newcommand\g{\ensuremath{\gamma}\xspace}
\renewcommand\a[2]{\ensuremath{\angle{{#1}{#2}}}\xspace}
\newcommand\pb{\ensuremath{\varphi_b}\xspace}\newcommand\pa{\ensuremath{\varphi_a}\xspace}
\begin{document}\title{Discrete Dubins Paths}\author{Sylvester Eriksson-Bique\thanks{Courant Institute \texttt{syerikss@uw.edu}} \and David Kirkpatrick\thanks{University of British Columbia \texttt{kirk@cs.ubc.ca}} \and Valentin Polishchuk\thanks{University of Helsinki \texttt{polishch@cs.helsinki.fi}}}\date{}\maketitle
\begin{abstract}A \e{Dubins path} is a shortest path with bounded curvature. The seminal result in non-holonomic motion planning is that (in the absence of obstacles) a Dubins path consists either from a circular arc followed by a segment followed by another arc, or from three circular arcs [Dubins, 1957]. 
Dubins original proof uses advanced calculus; later, Dubins result was reproved using control theory techniques [Reeds and Shepp, 1990], [Sussmann and Tang, 1991], [Boissonnat, C\'er\'ezo, and Leblond, 1994].

We introduce and study a discrete analogue of curvature-constrained motion. We show that shortest ``bounded-curvature'' polygonal paths have the same structure as Dubins paths. The properties of Dubins paths follow from our results as a limiting case---this gives a new, ``discrete'' proof of Dubins result. 
\end{abstract}
\section{Introduction}
Curvature-constrained paths are a fundamental tool in planning motion with bounded turning radius. Because the paths are \e{smooth}, they may look more appealing and realistic than \e{polygonal} paths. Still, polygonal paths are a much more common model in geometry, exactly because of their discrete nature: polygonal paths are more natural to plan, describe and follow. For instance, in air traffic management---one of our motivating applications---an aircraft flight plan (``waypoints list'') is represented on the strategic level by a polygonal path whose vertices are the GPS waypoints; see Fig.~\ref{fig:flighplans} in Appendix~\ref{app:flightpaths} for examples. The actual smoothly turning trajectory at a waypoint is decided by the pilot on the tactical level when executing the turn (see~\cite{krozel} for more on curvature-constrained route planning in air transport). We are thus motivated to work out a \e{discretized} model of curvature-constrained motion.

\begin{wrapfigure}{r}{.4\columnwidth}\centering\includegraphics{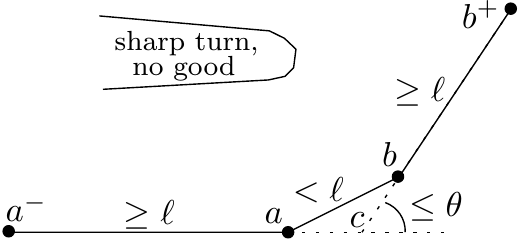}\end{wrapfigure}
\paragraph{Angles and Lengths}The first and foremost constraint put on a ``bounded-curvature'' polygonal path \P is that it never turns ``too sharply''. Formally, let $0\le\th\le\pi/2$ be a given angle (assume that \th perfectly divides $2\pi$); we require that at its every vertex \P turns by at most~\th. Such a restriction alone, however, does not yet guarantee that \P will serve as a bona fide discrete analogue of a bounded-curvature path: many short successive edges of \P, each turning only slightly, can simulate a sharp turn. Thus we also introduce a length constraint.

Specifically, let $\ll>0$ be a given length; call a path edge \e{short} if it is shorter than~\ll. We require that in a discrete curvature-constrained path no two short edges are adjacent, and that the turn from one edge adjacent to a short edge to the other edge adjacent to the short edge is at most~\th. More specifically, let $ab$ be a short non-terminal edge; let $a^-a,bb^+$ be the (non-short) edges adjacent to $ab$. Suppose that $ab$ is a \e{non-inflection} edge, i.e., that $a^-a$ and $bb^+$ lie on the same side of the supporting line of $ab$. Let $c$ be the point of intersection of supporting lines of $a^-a$ and $bb^+$. We require that the supplementary angle of the angle $acb$ is at most~\th.

Summarizing, we arrive at the following definition:
\begin{definition}\label{def:dccp}A \e{discrete curvature-constrained path} is a polygonal path with the following properties:
\begin{list}{}{}
\item[\e{Turn constraints:}]\label{constr:turn}the turn at every vertex is at most \th;
\item[\e{Length constraints:}]\label{constr:length}there are no adjacent short edges;
\item[\e{Turn-over-length constraints:}]\label{constr:turn-length}for any short non-inflection edge $e$ the turn from one non-short edge adjacent to $e$ to the other non-short edge adjacent to $e$ is at most~\th.
\end{list}\end{definition}
\begin{figure}[b]\centering\includegraphics{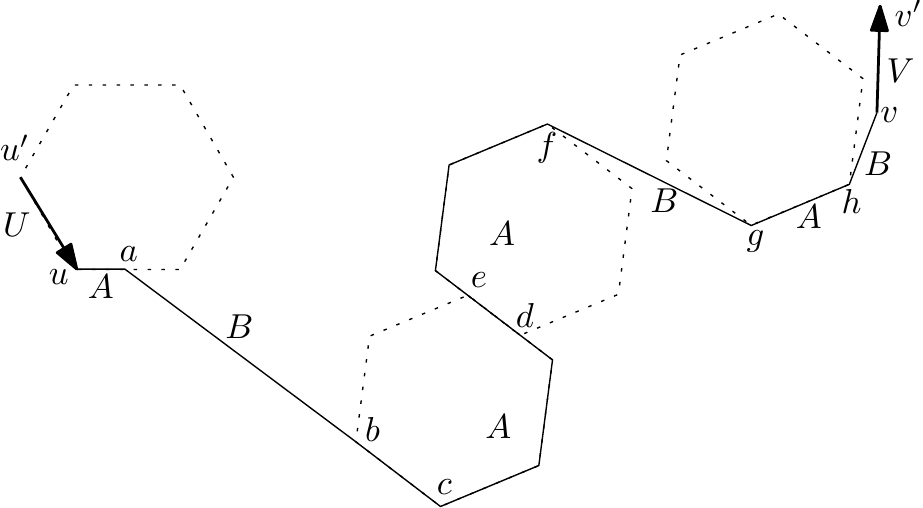}\caption{A \uu-\vv path of type $ABAABAB$ (see Section~\ref{sec:typeANDcanon} for definition of the type and classification of different parts of the path.)}\label{fig:example}\end{figure}
\paragraph{Configurations}A \e{configuration} \conf{U} is a pair (point, direction), where the direction is
a length-\ll vector. If a robot follows a curvature-constrained path from \conf{U}, then the robot starts at \u having velocity~\U. In our discrete setting we require that the robot starts at $\u$ and ``does not turn too sharply'' there, which we formalize as follows: imagine that a ``pre-edge'' $\u'\u$ is attached to \P at \u, where $\u'=\u-\U$ (Fig.~\ref{fig:example}); we require that this longer path is also a discrete curvature-constrained path, i.e., that the turn at $u$ is at most \th, and if the first edge of \P is short, the turn from the pre-edge to the second edge of \P is at most~\th. (The pre-edge is not an edge of \P; we attach it only 
conceptually, to formalize the restriction of \P's turn at \u.) We say that \P \e{starts at} \uu. Similarly, let \conf{V} be another configuration. We say that \P \e{ends at} \vv if $\v$ is the last vertex of \P and the path obtained by appending to \P the segment $\v\v'$, where $\v'=\v+\V$, is still a curvature-constrained path.
\begin{definition}A \e{\ddp,} or \e{discrete curvature-constrained
geodesic ($DC^2G$)}, is a shortest discrete curvature-constrained path that starts at \uu and ends at~\vv.\end{definition}
\subsection{Related work}The study of curvature-constrained path planning has a rich history that long predates and goes well beyond robot motion planning, for example the work of Markov~\cite{m-1887} on the construction of railway segments. A fundamental result in curvature-constrained motion planning, due to Dubins~\cite{d-cmlca-57} (hence the name---\e{Dubins paths}---for shortest bounded-curvature paths), states that, in the absence of obstacles, shortest paths
in the plane,
whose mean curvature is bounded by 1, are one of two types: a (unit) circular arc followed by a line segment followed by another arc, or a sequence of three circular arcs.
This result has been rederived and extended using techniques from optimal control theory in \cite{rs-opctg-90,bcl-spbcp-94,st-sprsc-91}. Variations and generalizations of the problem were studied in \cite{bsbl-spsdn-94,bb-arcto-94,f-dpmrr-04,s-s3dpp-95,rw-nudkm-96,crashing,airplane,tsp,mbp-vmcilfdlv-08,surfaces,receding,chitsazPhD,dsdw-nmrlot-99,lumelsky,dir,dolinskaya}.



The books \cite{l-rmp-91,lc-nmp-92} are general references; for some very recent work on bounded-curvature paths see \cite{needle,fb-gdcmtsbfrtp-10,vendittelli,minsteer,needleinverse,et-itapocuavuga-11,reachability,lengthVScurv}. A discretization of curvature-constrained motion was studied by Wilfong \cite{w-mpav-88,w-spav-89}; his setting is different from ours since he considered discretized \e{environment}, and not discrete paths. A practical way of producing paths with length and turn constraints is presented in~\cite{szczerba}.
\subsection{Our approach}We study properties of \ddps, as well as their continuous counterparts. In Section~\ref{sec:edges} we examine different types of edges of a \ddp, and conclude that \e{at least} half of the edges of the path must have length \e{exactly}~\ll. We go on to proving that the path has \e{at most one} edge with length \e{strictly greater} than \ll, and \e{at most two} inflection edges; moreover, if the path does contain a longer-than-\ll edge, then it does not have inflection edges. Overall we obtain that a \ddp ``mostly'' consists of non-inflection length-\ll edges, which prompts us to define ``discrete circular arcs'' as chains of length-\ll edges with turns \th from one edge to the next (Section~\ref{sec:arcs}). We use the arcs to define the ``type'' of a \ddp similarly to how Dubins defined the type of a smooth path (Section~\ref{sec:typesDDP}): we split the path into (discrete) circular arcs and ``bridges''. Using the freedom of bridges, formalized in Section~\ref{sec:typesDDP}, together with the results from Section~\ref{sec:edges}, we eliminate certain types as infeasible for a \ddp, and arrive at the characterization of \ddps in terms of the type (Theorem~\ref{dccg}).

In Section~\ref{sec:limit} we obtain a new proof of Dubins' result for the type of smooth curvature-constrained shortest paths as the limiting case of our results for \ddps (Theorem~\ref{dubins}). In comparison to the earlier proofs of Dubins' characterization (which used calculus, Pontryagin's principle, Arzel\'a--Ascoli theorem, etc.) \cite{d-cmlca-57,rs-opctg-90,bcl-spbcp-94,st-sprsc-91}, our argument is more ``discrete'' and thus is
arguably simpler than the previous ones.
Similar methods to prove ``continuous'' results using discretization were already used by Shur in his paper of 1921; interestingly, exactly these problems, considered by Schur~\cite{asc-usek-21} and Schmidt~\cite{esc-uebr-25},
led Dubins to his result.



\section{Edges of \ddps}\label{sec:edges}We now start our investigation of \ddps edges. Recall that an internal edge $e$ of a polygonal path is an \e{inflection} edge if the edges adjacent to $e$ lie on the opposite sides of (the supporting line of)~$e$. Inflection edges receive special treatment from us (which shows up already in the definition of discrete curvature-constrained paths -- we ignored turn-over-length constraints for short inflection edges) because, intuitively, a path does not ``gain'' curvature while traversing an inflection edge. This is due to an equivalent definition of an inflection edge: $e$ is an inflection edge if the path makes opposite turns at the vertices of~$e$.

Recall also that an edge is \e{short} if it is shorter than \ll. We say that an edge is \e{normal} (resp.\ \e{long}) if its length is exactly (resp.\ larger than)~\ll.

In this section we present a series of results showing
that, with only a small constant number of exceptions, a \ddp consists entirely of non-inflection normal edges.

\subsection{Normal edges are ``frequent''}Our first two observations are that in a \ddp long, short and inflection edges are separated by normal edges. The proofs of the next two lemmas are elementary applications of shortcutting arguments; we nevertheless spell out the full proofs since they serve as examples of the kind of arguments we use throughout.
\begin{lemma}\label{lem:ln}A long or a short edge can be adjacent only to a normal edge.\end{lemma}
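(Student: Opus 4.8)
The plan is to prove the lemma by contradiction: suppose a long or short edge $e$ is adjacent to an edge $f$ that is \emph{not} normal, i.e.\ $f$ is itself long or short, and then exhibit a strictly shorter discrete curvature-constrained path with the same endpoints and endpoint configurations, contradicting that the original path is a \ddp. First I would observe that by the length constraint of Definition~\ref{def:dccp}, two short edges cannot be adjacent, so the only cases to rule out are (i) a long edge adjacent to a long edge, (ii) a long edge adjacent to a short edge, and (iii) a short edge adjacent to a long edge (which is the same configuration as (ii) read in the other direction). So really there are two cases: two adjacent long edges, and a long edge adjacent to a short edge.

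For the shortcutting itself, let $e=ab$ and $f=bc$ be the two offending adjacent edges meeting at the vertex $b$. The idea is to slide $b$ slightly toward $a$ along $e$, or perturb the broken path $a$--$b$--$c$ to a nearby broken path $a$--$b'$--$c$ with $b'$ chosen so that the total length $|ab'|+|b'c|$ strictly decreases. When the turn at $b$ is nonzero (the path genuinely bends at $b$), replacing $b$ by a point $b'$ on the segment $ac$, or more conservatively by a point obtained by cutting the corner, strictly shortens the path. The delicate point is to maintain \emph{feasibility}: one must check that after the perturbation (a) the turn at every vertex is still at most \th, (b) no two short edges become adjacent, and (c) the turn-over-length constraint still holds at any short non-inflection edge. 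Because $e$ is long (or one of the two adjacent edges is long), there is ``slack'': shortening a long edge keeps it of length at least \ll for a sufficiently small perturbation, and a long edge never triggers the turn-over-length constraint. One must also ensure the perturbation does not affect the pre-edge/post-edge conditions at \uu and \vv; choosing the perturbed vertex $b$ to be internal, and keeping the perturbation small, handles this.

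The case of two adjacent long edges $ab$, $bc$: here, if the turn at $b$ is zero then $a$, $b$, $c$ are collinear and we can simply delete $b$, merging into one edge (still normal-or-longer, still feasible) and strictly shortening only if... actually the length is the same, so instead if the turn at $b$ is zero we do not get a contradiction from length, but then the ``edge'' $ab$ was never a genuine edge — or we argue that a zero-turn vertex can be absorbed and the real statement is about genuine turns; more carefully, if the turn at $b$ is positive, cutting the corner at $b$ strictly shortens the path while keeping both new edges of length $\geq\ll-\delta>\ll'$... I would instead cut the corner by a small amount so both edges remain long, the new turn at $b$ is smaller (hence still $\le\th$), and the turns at $a$ and $c$ change continuously so remain $\le\th$ for $\delta$ small. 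The case of a short edge $ab$ adjacent to a long edge $bc$ is similar but one shortens on the long side: move $c$... no, the cleaner move is to cut the corner at $b$, which shortens the long edge $bc$ (keeping it long for small $\delta$) and the short edge $ab$ (keeping it short, and not creating adjacent short edges because $ab$'s other neighbor $a^-a$ is normal by the hypothesis we are about to establish — careful, circular; so instead I shorten on the long side only, keeping $b$ fixed and moving along $bc$, or slide $b$ along the bisector).

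The main obstacle I anticipate is bookkeeping the feasibility constraints under the perturbation — in particular making sure the turn-over-length constraint (Turn-over-length constraints in Definition~\ref{def:dccp}) at a possibly-newly-created or possibly-modified short non-inflection edge is not violated, and handling the interaction with inflection vs.\ non-inflection status, which can flip under perturbation. The safe route is to make the corner-cut arbitrarily small so that all angles and all edge lengths change by an arbitrarily small amount, all strict inequalities (length exactly \ll on normal edges elsewhere, turns strictly less than \th where they are, long edges staying above \ll) are preserved by continuity, and the only quantity that changes by a definite amount is the total length, which strictly decreases because we are cutting a genuine (positive-angle) corner. If the corner angle at $b$ is zero, then $e$ and $f$ are collinear and we argue separately that this cannot happen for a long–long pair in a shortest path (one could merge them, or the configuration is degenerate), which closes the argument.
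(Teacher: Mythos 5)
Your overall plan is the same as the paper's: argue by contradiction, and use a local shortcut around the bad vertex $b$ to produce a strictly shorter feasible path. You also correctly reduce to two cases (long--long and long--short) via the length constraint. But the shortcut you settle on for the long--long case has a real gap.

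You propose to replace $b$ by a \emph{single} nearby point $b'$ (on $ac$, on the bisector, or on one of the edges), and to rescue feasibility ``by continuity'' for small $\delta$. The trouble is that any single-point move of $b$ changes the direction of at least one of the new edges $ab'$, $b'c$, so the turn at the neighboring vertex $a$ or $c$ changes. If, say, $b'$ is chosen on $ab$, the turn at $a$ is preserved, but the turn at $c$ (from $\vec{b'c}$ to $\vec{cc^+}$) strictly \emph{increases} whenever $bc$ is a non-inflection edge; if the turn at $c$ was already exactly $\th$, no $\delta>0$ works. (Contrast with the long--short case: there $bc$ is short, so the turn-over-length constraint on $bc$ already caps the combined turn over $b$ and $c$ by $\th$, and the new turn at $c$ stays below that cap; that is why the paper \emph{can} afford a single-point move in that case.) Your ``continuity'' argument silently assumes that the neighboring turns are strictly less than $\th$, which is not given.

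The fix is the paper's \emph{two-point} corner cut: pick $a'\in ab$ and $c'\in bc$ both close to $b$, and replace $abc$ by $aa'c'c$. Because $aa'$ has the same direction as $ab$ and $c'c$ the same direction as $bc$, the turns at $a$ and at $c$ do not change at all. The turns at the new vertices $a'$ and $c'$ are each strictly smaller than the original turn at $b$, and the turn over the new short edge $a'c'$ (if non-inflection) equals the original turn at $b$, which was $\le\th$; the length constraints hold because $aa'$ and $c'c$ can be kept long. Total length strictly decreases by the triangle inequality. This is the move you should commit to.

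Two smaller points. First, the circularity you worry about in the long--short case is illusory: you only need that $a^-a$ is \emph{not short}, and that follows directly from the length constraint of Definition~\ref{def:dccp} applied to the original feasible path (the short edge $ab$ cannot have a short neighbor), not from the lemma being proved. Second, for the zero-turn case: a $0$-turn vertex can be removed without loss (the paper assumes paths have no such vertices), so you needn't treat it as a separate case; just make the perturbation strict when the turn at $b$ is positive.
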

\begin{wrapfigure}{r}{.2\columnwidth}\centering\includegraphics{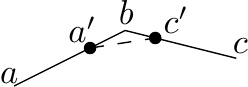}\end{wrapfigure}
\noindent\e{Proof. }
Suppose that $ab,bc$ are two adjacent long edges. Choose points $a'\in ab,b'\in c'c$ close to $b$ so that $aa'$ and $c'c$ are long, and $a'c'$ is short. Replacing the subpath $abc$ with $aa'c'c$ results in a shorter path, which is a feasible discrete curvature-constrained path. Indeed, the turns at $a'$ and at $c'$ are smaller than the original turn at $b'$ -- hence the turn constraints are satisfied (the other turns of the path do not change because of the shortcut). The length constraint is satisfied because $aa'$ and $c'c$ are both long. Finally, the turn-over-length constraint is satisfied for $a'c'$ because the turn from $aa'$ to $c'c$ is the same as it was from $ab$ to $bc$.

\begin{wrapfigure}{r}{.2\columnwidth}\centering\includegraphics{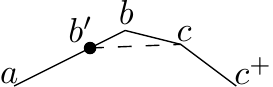}\end{wrapfigure}
Suppose now that a long edge $ab$ is adjacent to a short edge $bc$. Select a point $b'\in ab$ close to $b$ so that $ab'$ is still long and $b'c$ is still short. The path going through $b'$ instead of $b$ is shorter; let us check that it is still feasible: The turn at $b'$ is smaller than the turn at $b$ was, so the turn constraints are satisfied. The length constraints are obviously satisfied. Finally, the turn from $ab'$ to the edge $cc^+$ following $b'c$ is the same as the turn from $ab$ to $cc^+$, so the turn-over-length constraints are satisfied too.
\qed
\begin{lemma}\label{lem:infl}An inflection edge can be adjacent only to a non-inflection normal edge.\end{lemma}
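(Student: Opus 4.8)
The plan is to mirror the shortcutting strategy of Lemma~\ref{lem:ln}, but now handling an inflection edge $ab$. By Lemma~\ref{lem:ln} we already know that if $ab$ were long or short, then its neighbours would have to be normal; the remaining content of the present lemma is twofold: first, that an inflection edge cannot be adjacent to another inflection edge, and second, that the normal neighbour guaranteed above (or, when $ab$ is itself normal, each neighbour) is in fact non-inflection. So I would split the argument into these two claims and dispatch each by a local perturbation that strictly shortens the path while preserving all three constraints of Definition~\ref{def:dccp}.

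For the ``no two adjacent inflection edges'' claim, suppose $ab$ and $bc$ are both inflection edges. The key observation is that the path makes opposite turns at the two endpoints of an inflection edge; tracing this through $ab$ and then $bc$ forces a particular sign pattern of turns at $a$, $b$, $c$ that I can exploit. I would slide the shared vertex $b$ slightly along one of the two edges (or, symmetrically, replace the two-edge subpath $abc$ by a nearby shorter two- or three-edge detour near $b$, exactly as in Lemma~\ref{lem:ln}), chosen so that the new turns at the perturbed vertices are no larger than the turn at $b$ was. Because an inflection edge contributes no turn-over-length constraint, the only constraints to recheck are the turn constraints and the ``no adjacent short edges'' constraint, and both are easily arranged by taking the perturbation small and, if needed, keeping the perturbed edges long or normal.

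For the ``neighbour is non-inflection'' claim, suppose the inflection edge $ab$ is adjacent to an edge $bc$ that is also an inflection edge or to a normal edge $bc$ that is itself inflection — but the first case is already excluded, so really I must rule out that the normal neighbour of an inflection edge is inflection. Here I would argue again by contradiction using a shortcut at the shared vertex: if both $ab$ and $bc$ are inflection, the curvature ``unwinds'' twice in a row, and a short shortcut near $b$ strictly decreases length while the turn at the new vertex (or vertices) is bounded by the old turn at $b$; the turn-over-length constraint is vacuous for inflection edges and, for any short edge created, one checks the turn between its two non-short neighbours equals the original turn from $ab$ to $cc^+$ or is smaller, just as in the proof of Lemma~\ref{lem:ln}.

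The main obstacle I anticipate is bookkeeping the turn \emph{directions} (signs) carefully enough to be sure the perturbed path's turns are genuinely no larger than the ones they replace — in the inflection setting the two relevant turns are of opposite orientation, so one must verify that the shortcut does not inadvertently \emph{increase} a turn by ``straightening'' across an inflection. I expect this to be handled exactly as in Lemma~\ref{lem:ln}: the shortcut edge $a'c'$ lies between the supporting lines of the outer edges, so each new turn is a sub-angle of the old turn at $b$, and the old turn-over-length relationship is inherited verbatim. A secondary subtlety is the edge case where $ab$ or $bc$ is terminal (so there is no $a^-a$ or $cc^+$), but then the relevant turn-over-length constraint simply does not apply and the argument only gets easier. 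I would also remark that, combined with Lemma~\ref{lem:ln}, this yields that inflection edges are isolated among the ``exceptional'' edges, setting up the later counting argument.
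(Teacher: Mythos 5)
There is a genuine gap in your decomposition. You reduce the normality half of the lemma to Lemma~\ref{lem:ln}, arguing that if the inflection edge $ab$ were long or short, that lemma would already force its neighbours to be normal. That is true, but it says nothing about the case in which $ab$ is itself \emph{normal}. Nothing in Lemma~\ref{lem:ln} prevents a normal inflection edge $ab$ from sitting next to a short or long edge $bc$; that case must be handled directly, and your plan never does so. The paper's proof closes it uniformly: letting $bc,cd$ be the edges following the inflection edge $ab$, if $bc$ is not normal, rotate $ab$ about $a$ while sliding $b$ a little way along $bc$; this strictly shortens the path, and feasibility survives because the turns at the endpoints of $ab$ only decrease, the length of $cd$ is unchanged and $ab$ only lengthens (so $bc$ stays non-adjacent-to-short if it was short), and for the turn-over-length check the turn from $ab$ to $cd$ only decreases. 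That single perturbation rules out a non-normal $bc$ regardless of the length of $ab$.

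This omission also undermines your second claim. Your shortcut argument for ``no adjacent inflection edges'' and ``the neighbour is non-inflection'' tacitly needs the shortcut $ac$ to be long, which the paper obtains only \emph{after} it has already pinned $bc$ to be normal: with $|bc|=\ll$ and the turn at $b$ at most $\th\le\pi/2$ the angle $\angle abc$ is obtuse, so $|ac|>|bc|=\ll$, and then replacing $b$ by the direct edge $ac$ is a feasible, strictly shortening modification (one also checks $\angle acd>\angle bcd$, so the turn at $c$ does not worsen). If $bc$ could be short you would not be able to guarantee $ac$ is long, and the shortcut could create two adjacent short edges or violate the turn-over-length constraint. So the order of the argument matters: first force $bc$ to be normal by the rotate-and-slide move, and only then do the shortcut at $b$ — a structure your proposal does not reproduce.
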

\begin{proof}Let $bc,cd$ be the edges that follow an inflection edge $ab$. If $bc$ is not normal, rotate $ab$ around $a$ while sliding $b$ along $bc$ (Fig.~\ref{fig:infl}). Clearly, this shortens the path. To see that the path remains feasible, let us check that all constraints in the definition of a discrete curvature-constrained path are satisfied:
\begin{list}{}{}
\item[\e{Turn constraints}]The inflection edge endpoints are the only vertices the turns at which change due to the modification of the path; however, both turns only decrease.
\item[\e{Length constraints}]The only edge whose length decreases due to the rotation is $bc$. If it was short before the rotation, then it was adjacent to non-short edges, i.e., both $ab$ and $cd$ were non-short. The length of $ab$ only increases with the modification, and the length of $cd$ stays the same; thus $bc$ is still adjacent to non-short edges. On the other hand, if $bc$ was long, we can move $b$ by small enough distance to ensure that $bc$ stays long.
\item[\e{Turn-over-length constraints}]Again, the only edge to worry about is $bc$. If it is an inflection edge, the constraint is irrelevant for it. If $bc$ is non-inflection, the turn from $ab$ to $cd$ only decreases with the rotation.
\end{list}
Thus, $bc$ must be normal. Now, if $bc$ is a terminal edge ($c=v$), then it is not an inflection edge, and we are done. Otherwise $bc$ is an internal edge. However, if $bc$ is an inflection edge, then the vertex $b$ could be bypassed by connecting $a$ to $c$ directly, shortening the path.
\begin{figure}\centering\includegraphics{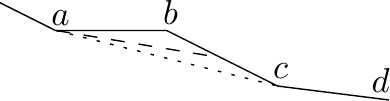}\caption{If $bc$ is long (resp.\ short), then $b$ can be slid towards $c$ slightly while keeping $bc$ long (resp.\ short) -- rotating $ab$ around $a$ decreases the turns at $a$ and $b$, and shortens the path (the dashed segment is the shortcut). If $bc$ is normal, then $ac$ (dotted) is long; since also $\angle acd>\angle bcd$, we have that $ac$ is a feasible shortcut.}\label{fig:infl}\end{figure}\end{proof}

\subsection{``Freedom'' of inflection and long edges}\label{sec:freedom}The proof of Lemma~\ref{lem:infl} used the possibility to move $b$ locally along $bc$ while rotating the edge $e=ab$ about $a$. In the next lemma, we show that $b$ has actually much more freedom than just moving along the adjacent edge: there is a whole halfplane of ``admissible'' directions.
\begin{definition}\label{def:admissible}A vector $T$ is \e{admissible} for $e$ if, when applied to $b$, the vector points to the side of (the supporting line of) $e$ on which $bc$ lies (Fig.~\ref{fig:Ifreedom}, left).\end{definition}
Let $P'$ be the part of \P starting from $b$, and let $\P(T)$ be the path obtained from \P by rigidly translating $P'$ by $T$, and rotating $e$ around $a$ so as to keep the path connected (i.e., by connecting $a$ to $b(T)=b+T$, the new position $b$); Fig.~\ref{fig:Ifreedom}, right.
\begin{figure}\centering\includegraphics{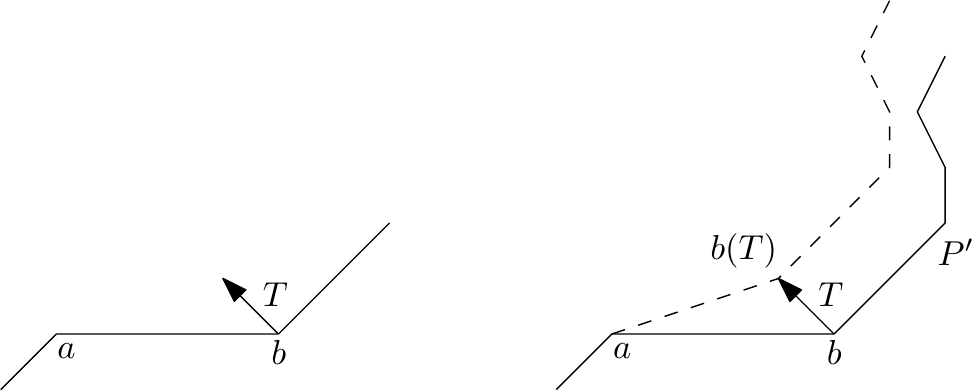}\caption{Left: If $e$ is horizontal and $b$ is a left-turn vertex, then admissible vectors point up. Right: The modification.}\label{fig:Ifreedom}\end{figure}
\begin{lemma}\label{lem:Ifreedom}For any admissible vector $T$ there exists $\eps>0$ such that $\P(\eps T)$ is a feasible discrete curvature-constrained path.\end{lemma}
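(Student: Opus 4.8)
The plan is to show that for a sufficiently small $\eps>0$, the path $\P(\eps T)$ satisfies all three constraints in Definition~\ref{def:dccp}, as well as the start/end configuration requirements, and that the only vertices/edges affected are those near the edge $e=ab$. The key structural observation is that the modification is \emph{local}: the subpath $P'$ starting at $b$ is rigidly translated, so all turns strictly inside $P'$ are unchanged, all edge lengths inside $P'$ are unchanged, and the turn at $a$ together with the position/length of $e$ are the only things that vary, plus the turn at $b$ itself. So the whole argument reduces to controlling what happens at $a$, at $b$, and to the edge $e$.

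First I would handle the turn at $b$. Since $T$ is admissible, $b(\eps T)=b+\eps T$ lies on the same side of the supporting line of $e$ as $bc$ does (strictly, for $\eps>0$, if $T$ is not parallel to $e$; if $T$ is parallel I would perturb $T$ slightly inside the open halfplane, or argue directly). As $\eps\to 0$, the direction of the new edge $e(\eps T)=a\,b(\eps T)$ converges to the direction of $e$, so the turn at $b$ — the angle between $e(\eps T)$ and $bc$ — converges to the original turn at $b$. The point is that because $b(\eps T)$ stays on the correct side, for small $\eps$ the new turn at $b$ is at most the original turn at $b$ plus $o(1)$; but I actually want a clean bound. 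The right way is: the turn at the old $b$ equals $\angle$ between $e$ and $bc$; as the endpoint slides to $b+\eps T$ on the $bc$-side of $e$, the supporting line of $e(\eps T)$ rotates \emph{towards} $bc$, so the turn at $b$ \emph{decreases} (for $\eps$ small enough that $b(\eps T)$ has not swung past). This is exactly the monotonicity already exploited in the proof of Lemma~\ref{lem:infl}; here it holds for the full halfplane of admissible $T$ by the same elementary planar-geometry reasoning. For the turn at $a$: the edge preceding $e$ (call it $a^-a$, or the pre-edge if $a$ is the first vertex) is fixed, and $e(\eps T)\to e$ in direction, so the turn at $a$ changes continuously and is within $\eps\cdot O(\|T\|)$ of the original turn at $a$; if the original turn at $a$ is strictly less than $\th$ this is fine for small $\eps$, and if it equals $\th$ I need a separate observation — but in the cases where the lemma is applied (long or inflection $e$), I can note that by Lemmas~\ref{lem:ln}--\ref{lem:infl} the neighbouring edges are normal, and the turn at $a$ being exactly $\th$ would be the degenerate case to treat, most cleanly by choosing the sign of a small perturbation of $T$ within the admissible halfplane so that the turn at $a$ does not increase. (Alternatively one restricts attention to the situation of the lemma's intended use, where $a$'s turn has slack.)

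Next, the length constraints. The only edges whose lengths change are $e$ (length goes from $\|e\|$ to $\|a\,b(\eps T)\|=\|e\|+\eps\langle \hat e,T\rangle+O(\eps^2)$, hence within $O(\eps\|T\|)$ of $\|e\|$) and nothing else. So if $e$ was long, it stays long for small $\eps$; if $e$ was normal or short, a subtlety arises only if the modification could make $e$ short while an adjacent edge is also short — but adjacent edge lengths do not change, and the original path had no adjacent short edges, so at worst $e$ stays on the same side of ``short'' up to an $O(\eps)$ amount; again choosing the perturbation of $T$ (or simply $\eps$ small) keeps ``no adjacent short edges'' valid. Finally, the turn-over-length constraints: these concern short non-inflection edges. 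Inside $P'$ nothing changes. The only edge that could newly become relevant is $e$ itself (if it becomes or stays short and non-inflection): but the turn from $a^-a$ to $bc$ is governed by the turns at $a$ and $b$, both of which we have shown do not increase (for the right perturbation), so the turn-over-length quantity for $e$ does not increase either; an analogous check applies to $a^-a$ if it is short, using that the turn at $a$ does not increase. The start/end configuration conditions are unaffected unless $a$ is the first vertex, in which case the pre-edge plays the role of $a^-a$ above and the identical argument applies.

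The main obstacle, and the only place real care is needed, is the \emph{boundary cases} where some constraint is already tight in \P: the turn at $a$ equals $\th$, or $e$ or $a^-a$ has length exactly $\ll$, or the turn-over-length angle at a short neighbour equals $\th$. In each such case a naive ``small $\eps$'' does not suffice because the perturbation could push the tight constraint over the edge. The fix is that the admissible \emph{halfplane} of directions is two-dimensional, so within it there is always a direction $T'$ arbitrarily close to $T$ for which every currently-tight constraint moves in the safe direction (or stays tight) to first order — essentially because each tight constraint imposes at most one linear inequality on $T$, and one can satisfy finitely many such inequalities simultaneously while staying in the open halfplane, \emph{provided} $T$ itself was chosen in the interior; and the statement only needs \emph{some} admissible $T$ to work, so we may as well take $T$ in the interior of the halfplane. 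I expect the write-up to state the local-modification principle once, then dispatch the three constraint types in a short bulleted list exactly as in Lemma~\ref{lem:infl}, with a remark handling the tight-constraint cases by the perturbation argument just described.
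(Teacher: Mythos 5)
The central gap is that you have not used the \emph{inflection} property of $e=ab$ at the vertex $a$. Because $e$ is an inflection edge, the preceding edge $a^-a$ (or the pre-edge, if $a=u$) lies on the \emph{opposite} side of the supporting line of $e$ from $bc$; so when $b$ is displaced by an admissible $T$ and $e$ rotates about $a$ accordingly, the rotation is \emph{away} from $a^-a$, and the turn at $a$ \emph{decreases}, exactly as the turn at $b$ does. That single observation — both endpoint turns decrease — is the whole point of Lemma~\ref{lem:Ifreedom} and makes all of the ``tight constraint at $a$'' machinery unnecessary. Because you do not see it, you instead treat the turn at $a$ as something that could increase by $O(\eps)$ and fall back on a continuity-plus-slack argument and then a perturbation of $T$ to some nearby $T'$. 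That fallback is also logically mismatched with the statement: the lemma asserts the conclusion \emph{for every} admissible $T$, so you cannot substitute a different $T'$; and the reason no slack or perturbation is needed is precisely the decrease of the turn at $a$ that you missed.

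Two secondary issues. First, the length and turn-over-length constraints are dispatched in the paper by invoking Lemma~\ref{lem:infl}: the neighbours of an inflection edge are \emph{normal} non-inflection edges, so $e$ cannot be adjacent to a short edge, and the turn-over-length constraint never applies to $e$ (it is inflection) nor to its neighbours (they are normal). Your proposal instead tries to track how lengths and composite turns drift by $O(\eps)$, which again hides the clean structural reason the constraints cannot break. Second, your parenthetical ``in the cases where the lemma is applied (long or inflection $e$)'' misreads the scope: Lemma~\ref{lem:Ifreedom} concerns inflection edges only; the analogous statement for long edges is Lemma~\ref{lem:breakLong}, which works by a different mechanism (breaking off a normal subsegment), not by the admissible-halfplane argument.
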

\begin{proof}The inflection edge endpoints are the only vertices the turns at which may change due to the modification of the path. It is easy to see that if $T$ is admissible, the turns at both endpoints of the edge decrease. Thus, the turn and the turn-over-length constraints are satisfied. As for the length constraints, the inflection edge is the only edge whose length may decrease due to the modification; for that to possibly break the length constraint, the edge must be adjacent to a short edge (and also must itself become short in $P(\eps T)$ while being normal in \P{} -- otherwise, if the edge was long in \P, it will remain long also in $P(\eps T)$, for a sufficiently small \eps). But having an inflection edge adjacent to a short edge contradicts Lemma~\ref{lem:infl}.\end{proof}
Note that even though \P was assumed to be a \e{shortest} path (optimality of \P was used in proving that the length and the turn-over-length constraints hold for neighbors of the bridge), we do not claim that $\P(\eps T)$ will be the shortest path between its start and destination; we only claim that $P(\eps T)$ is \e{feasible}.

We now argue that long non-inflection edges have even more freedom than inflection edges: no matter where a vertex of such an edge is moved, the edge can stay connected to the vertex while maintaining path feasibility. There is a price to pay for the connectivity though: the long edge does not stay intact, but gets ``broken'' into a normal edge and the remainder. That is, an additional vertex of a path appears at a point at distance \ll from one of the edge endpoints (such addition of a vertex in the middle of a long edge will be central to the notion of ``bridges'' in Section~\ref{sec:typeANDcanon}).

Specifically, let $ab$ be a long non-inflection edge; let $P''$ be the part of \P up to $a$, and let $P'$ be the part of \P starting from $b$. For a vector $T$ let $P'(T)$ be the subpath $P'$ translated by $T$.
\begin{lemma}\label{lem:breakLong}For any $T$ there exists $\eps>0$ and a point $c$ such that the path $P''\-a\-c\-b'\-P'(\eps T)$ (i.e., the concatenation of $P''$, the 2-link path $acb'$, and $P'(\eps T)$) is a feasible discrete curvature-constrained path.\end{lemma}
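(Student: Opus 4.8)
\noindent\emph{Proof proposal.}
The plan is to absorb the prescribed displacement of $b$ into a single new vertex $c$ placed at distance exactly $\ll$ from $a$, so that $ac$ is a normal edge and $cb'$, with $b'=b+\eps T$, takes up whatever is left of the old edge $ab$. I would first try the obvious choice: $c$ the point of segment $ab$ at distance $\ll$ from $a$ (it lies strictly between $a$ and $b$ because $ab$ is long), and perturb $c$ off this segment only if forced to. Since $P''$ is untouched and $P'(\eps T)$ is a rigid translate of $P'$, every turn and every edge length of the new path other than those involving $a$, $c$, $b'$ coincides with \P; in particular it still starts at \uu (and if $a=u$, resp.\ $b=v$, the conceptual pre-edge, resp.\ post-edge, plays the role of $a^-a$, resp.\ $bb^+$ below, these boundary cases being easier). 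So the whole argument reduces to four local checks: the turns at $a$, $c$, and $b'$, and --- only if $cb'$ comes out short --- the length constraint around $cb'$ together with its turn-over-length constraint when $cb'$ is non-inflection.

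The length constraint is immediate: by Lemma~\ref{lem:ln} both $a^-a$ and $bb^+$ are normal, and $ac$ is normal by construction, so $cb'$ is the only possibly-short new edge and both its neighbours are non-short. For the turns, with $c$ on segment $ab$ the direction of $ac$ equals that of $ab$, so the turn at $a$ is exactly the original turn at $a$, hence $\le\th$; and as $\eps\to0$ we have $b'\to b$, the direction of $cb'$ tends to that of $ab$, so the turn at $c$ tends to $0$ while the turn at $b'$ and the turn from $ac$ to $bb^+$ both tend to the original turn at $b$, which is $\le\th$. Thus for all sufficiently small $\eps$ every constraint holds --- unless the original turn at $b$ is already exactly $\th$ and the displacement $T$ rotates the direction of $cb'$ ``outward'', away from $bb^+$, by an $O(\eps)$ amount.

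To handle that case I would spend the remaining freedom in the position of $c$, and this is exactly where the \emph{non-inflection} hypothesis is used: since $a^-a$ and $bb^+$ lie on the same side of the line through $ab$, the directions of $a^-a$ and $bb^+$ lie on \emph{opposite} sides of the direction of $ab$. Hence sliding $c$ along the circle of radius $\ll$ about $a$ by a small angle toward the direction of $a^-a$ both decreases the turn at $a$ and rotates the direction of $cb'$ the other way, i.e.\ toward $bb^+$, which decreases the turn at $b'$; choosing this angle of order $\eps$ but large enough to override the $O(\eps)$ rotation of $cb'$ caused by $T$ restores the slack lost at $b'$, the turn at $a$ only improving and the turn at $c$ staying of order $\eps$.

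I expect the main obstacle to be the remaining case, where $cb'$ comes out \emph{short} --- which can happen only when $\ll<|ab|<2\ll$, since $cb'$ has length roughly $|ab|-\ll$. Now the turn at $b'$ and the turn-over-length quantity for $cb'$ (the turn from $ac$ to $bb^+$) must \emph{both} stay $\le\th$, and they respond oppositely to the slide of the previous paragraph, so that slide alone can no longer rescue a tight turn at $b$. I would handle this by also admitting the mirror placement --- $c$ at distance $\ll$ from $b'$ rather than from $a$, which makes $ac$ the short edge and interchanges the roles of the two endpoints of the broken edge --- and using whichever of the two placements the given $T$ leaves room for. The only configuration that could then survive is a long non-inflection edge $ab$ that is at once barely longer than $\ll$ and meets a turn of exactly $\th$ at each endpoint; I would rule this out by a shortcut argument using the optimality of \P, and this is the step I expect to require the most care.
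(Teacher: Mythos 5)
Your decomposition is the right one and is, in essence, the paper's: place the new vertex~$c$ at distance~\ll either from $a$ or from $b$, with the choice dictated by which side of the line through $ab$ the displacement $T$ points to. But the execution has two issues.

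First, the slide of $c$ along the circle about $a$ is an unnecessary complication. The paper never needs it: if $T$ points toward the side of $ab$ on which $a^-a,bb^+$ lie, then $c\in ab$ at distance~\ll from $a$ already works for all sufficiently small $\eps$, because (with $ac$ carrying the full original direction of $ab$) the turns at $c$ and $b'$ are both left and \emph{sum} to the original turn at $b$, while the turn from $ac$ to $b'(b^+{+}\eps T)$ for the turn-over-length check equals the original turn at $b$ exactly. If $T$ points the other way, one goes directly to the mirror placement. The slide step is correct when $cb'$ is non-short, but it is a detour, and it is precisely this detour that leaves you believing a hard case remains.

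Second, and more substantively, the last paragraph is a gap: there is no leftover configuration needing a separate shortcut argument. The piece you are missing is that in the mirror placement you should take $c$ to be the \emph{translate by $\eps T$} of the point of $ab$ at distance~\ll from~$b$, so that $cb'$ is a rigid translate of a normal sub-segment of $ab$ (this is what ``interchanges the roles'' really means). With that choice, the direction of $cb'$ is \emph{identical} to the direction of $ab$; hence the turn at $b'$ equals the original turn at $b$ and cannot be violated, no matter how tight that turn already is or how short $ab$ is. The turn at $a$ only decreases (here the non-inflection hypothesis is used, since $\vec{ac}$ rotates toward $\vec{a^-a}$), the turn at $c$ is $O(\eps)$, and the turn-over-length quantity for the possibly-short $ac$ is the turn from $a^-a$ to $cb'\parallel ab$, i.e.\ the original turn at $a$, which is $\le\th$. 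So the ``barely longer than \ll with $\th$-turns at both ends'' case you flag is handled already; no optimality-based shortcut and no extra care are needed. As written, reading ``$c$ at distance~\ll from~$b'$'' as a point of segment $ab'$ would indeed fail in your worrisome case (the direction of $cb'$ would then rotate outward with $T$), so this is not a cosmetic omission: without translating $c$ with $b$, your mirror placement does not close the argument.
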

\begin{proof}Assume that $ab$ is horizontal and that the path turns left at both $a$ and $b$ (so the edges adjacent to $ab$ are above it; Fig.~\ref{fig:breakLong}, left). If $T$ points up, we choose $c\in ab$ so that $|ac|=\ll$ (Fig.~\ref{fig:breakLong}, middle). The turn constraints and turn-over-length constraints are not broken because the sum of the turns at $c$ and $b'$ equals to the turn that the original path \P made at $b$. The length constraints are satisfied because even if $cb'$ is short it cannot be adjacent to a short edge (otherwise in the original path \P the long edge $ab$ was adjacent to a short edge -- a contradiction to Lemma~\ref{lem:ln}).

The proof for the case when $T$ points down is analogous: $ab$ is broken at the point $c$ at distance \ll from $b$, the normal part is translated by $T$, and the remainder is rotated to keep the connectivity (Fig.~\ref{fig:breakLong}, right).\begin{figure}\centering\includegraphics{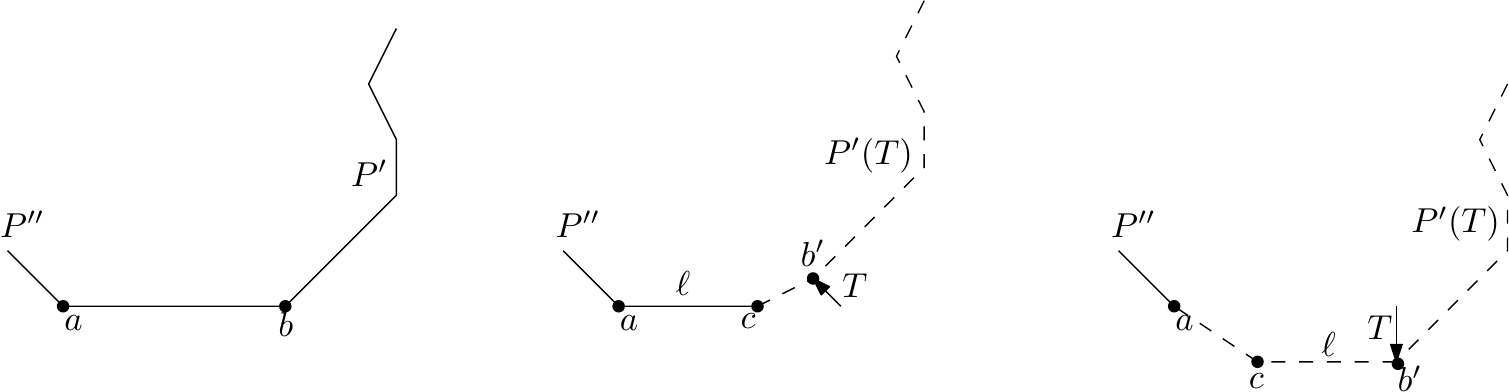}\caption{Left: The original path; $ab$ is horizontal. Middle: The modification if $T$ points up. Right: The modification if $T$ points down.}\label{fig:breakLong}\end{figure}\end{proof}
As immediate corollaries of Lemmas~\ref{lem:Ifreedom} and~\ref{lem:breakLong} we obtain that inflection and long edges have ``length freedom'':

\begin{lemma}\label{lem:IlengthFreedom}Let $ab$ be an inflection edge, and let $P'$ be the part of \P after $b$; for $\eps\in[0,1]$ let $\P(\eps)$ be the path obtained from \P by rigidly translating $P'$ towards $a$ along $ab$ by distance $\eps|ab|$. There exists $\eps>0$ such that $\P(\eps)$ is a feasible discrete curvature-constrained path.\end{lemma}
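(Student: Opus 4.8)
The plan is to derive Lemma~\ref{lem:IlengthFreedom} as a direct corollary of Lemma~\ref{lem:Ifreedom}, by exhibiting a suitable admissible vector. Recall that $ab$ is an inflection edge, so the edges adjacent to $ab$ lie on \emph{opposite} sides of the supporting line of $ab$; in particular, if $bc$ is the edge of $P'$ emanating from $b$, it lies on one definite side of $ab$, and the vector $T := a-b$ (pointing from $b$ back towards $a$, i.e., along the edge itself) must be checked against Definition~\ref{def:admissible}. Here I would observe that translating $P'$ towards $a$ \emph{along} $ab$ is exactly the modification $\P(\eps T)$ of Lemma~\ref{lem:Ifreedom} for this choice of $T$: the part of \P starting at $b$ is rigidly translated by $\eps T = \eps(a-b)$, and the edge $e=ab$ is ``rotated'' about $a$ to stay connected to $b(\eps T)=b+\eps(a-b)$, which in this degenerate case is simply a shortening of $ab$ with no actual rotation. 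Thus $\P(\eps)$ in the statement of this lemma coincides with $\P(\eps T)$, and feasibility for small $\eps$ follows.

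The one subtlety is that $T=a-b$ is not strictly admissible in the sense of Definition~\ref{def:admissible}: the admissible vectors there point to the \emph{side} of the supporting line on which $bc$ lies, whereas $a-b$ lies \emph{on} the supporting line, on its boundary. So the first step is to check that the proof of Lemma~\ref{lem:Ifreedom} goes through for this boundary vector. Re-examining that proof: the turns at the endpoints of $ab$ are the only ones that change, and one needs that they do not increase. When $b$ slides towards $a$ along $ab$, the direction of the edge $ab$ is unchanged, so the turn at $a$ (between $a^-a$ and $ab$) is literally unchanged, and the turn at $b$ (between $ab$ and $bc$) is also literally unchanged — neither increases, so the turn and turn-over-length constraints are fine. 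The length argument is identical: the only edge whose length decreases is $ab$ itself, and by Lemma~\ref{lem:infl} an inflection edge is adjacent only to non-inflection \emph{normal} edges, hence not to a short edge, so a small enough $\eps$ keeps all length constraints satisfied. (If one prefers to avoid re-checking, an alternative is to perturb: pick a genuinely admissible $T_0$ close to $a-b$, apply Lemma~\ref{lem:Ifreedom}, and note the resulting path can be further adjusted; but the direct check is cleaner, so I would go with that.)

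Putting it together, the writeup is short: state that the displacement described is precisely $\P(\eps T)$ for $T=a-b$; verify (reusing the three bullet-point checks from Lemma~\ref{lem:Ifreedom}, now trivial because the edge direction is preserved) that the constraints hold for sufficiently small $\eps$; conclude. The main ``obstacle'' — really the only point needing care — is the degenerate/boundary nature of the admissible direction, i.e., making sure the argument of Lemma~\ref{lem:Ifreedom} is not secretly using strict admissibility anywhere. Inspection shows it is not: admissibility is used only to guarantee the turns decrease (here they stay equal) and Lemma~\ref{lem:infl} is used only to rule out an adjacent short edge, which still applies. Hence the corollary follows immediately.
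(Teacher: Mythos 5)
Your proposal is correct and takes essentially the same route as the paper: derive the statement from Lemma~\ref{lem:Ifreedom} by taking $T=\vec{ba}$. The paper's own proof is precisely the one-liner ``the vector $\vec{ba}$ is admissible for $ab$,'' so you have reconstructed the intended argument. The one place you go further than the paper is in flagging and carefully dispatching the boundary issue: as you note, $\vec{ba}$ lies \emph{on} the supporting line of $ab$, so under a strict reading of Definition~\ref{def:admissible} it is not admissible, and indeed the proof of Lemma~\ref{lem:Ifreedom} speaks of the turns at the two endpoints \emph{decreasing}, which for $T=\vec{ba}$ is not strictly true (they are unchanged). Your re-examination --- that the argument in Lemma~\ref{lem:Ifreedom} only needs the turns not to increase, that the only edge whose length decreases is $ab$ itself, and that Lemma~\ref{lem:infl} rules out a short neighbor of $ab$ --- is exactly what is needed to close that small gap, and it is the right call to do the direct check rather than the perturbation alternative you mention parenthetically. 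In short, same proof as the paper, with a genuine (minor) inconsistency in the paper's definitions identified and repaired.
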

\begin{proof}The vector $\vec{ba}$ is admissible for $ab$.\end{proof}

\begin{lemma}\label{lem:Lfreedom}Let $ab$ be a long edge, and let $P'$ be the part of \P starting from $b$; for $\eps\in[0,1]$ let $\P(\eps)$ be the path obtained from \P by rigidly translating $P'$ towards $a$ along $sb$ by distance $\eps|ab|$. There exists $\eps>0$ such that $\P(\eps)$ is a feasible discrete curvature-constrained path.\end{lemma}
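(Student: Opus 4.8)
The plan is to obtain this as the long-edge counterpart of Lemma~\ref{lem:IlengthFreedom}, reading the ``length freedom'' of a long edge off Lemma~\ref{lem:breakLong} exactly the way Lemma~\ref{lem:IlengthFreedom} reads it off Lemma~\ref{lem:Ifreedom}. Concretely, I would apply Lemma~\ref{lem:breakLong} to $ab$ with the translation vector $T:=\vec{ba}$ (the vector from $b$ to $a$, so $b+T=a$ and $\eps T$ translates $P'$ exactly towards $a$ along $ab$ by distance $\eps|ab|$). Lemma~\ref{lem:breakLong} is stated for non-inflection edges; if $ab$ happens to be an inflection edge there is nothing new to prove, since $\vec{ba}$ is admissible for $ab$ and the claim is then literally Lemma~\ref{lem:IlengthFreedom}.

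So suppose $ab$ is a long non-inflection edge, and let $\eps$ and $c$ be as furnished by Lemma~\ref{lem:breakLong} for $T=\vec{ba}$ (the construction there works for all sufficiently small positive $\eps$, so I may shrink $\eps$ whenever convenient). For $\eps\in(0,1)$ the relocated vertex $b'=b+\eps T$ lies in the relative interior of $ab$, and the feasible path $P''\-a\-c\-b'\-P'(\eps T)$ produced by the lemma is exactly the path $\P(\eps)$ of the statement \e{plus} the auxiliary vertex $c$. But $c$ was placed on $ab$ with $|ac|=\ll$, and since $ab$ is long we have $|ab'|=(1-\eps)|ab|>\ll$ once $\eps$ is small enough, so $c$ lies strictly between $a$ and $b'$; hence $a,c,b'$ are collinear, the turn at $c$ is zero, and deleting $c$ changes neither the path as a point set nor any of the three constraints. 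What remains is precisely $\P(\eps)$, and its feasibility is inherited.

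I do not anticipate a genuine obstacle; the real content was already packaged in Lemma~\ref{lem:breakLong}. The only places that need a word of care -- which I would spell out -- are: (i)~the edge $ab'$ of $\P(\eps)$ stays \e{long}, hence non-short, so the ``no adjacent short edges'' constraint cannot be violated by it and no turn-over-length constraint ever refers to it; (ii)~by Lemma~\ref{lem:ln} the two edges flanking $ab$ in \P are normal and keep their lengths under the rigid translation of $P'$, so the only turns that could change -- those at $a$ and at $b'$ -- are in fact \e{unchanged}, because $ab'$ is parallel to $ab$ and $P'$ moves rigidly; and (iii)~if $a$ is the point of the starting configuration \uu, the conceptual pre-edge attached at $a$ retains its length \ll and makes the same turn with $ab'$ as with $ab$, so the start condition survives. (Items (i)--(iii) in fact constitute a short self-contained proof that sidesteps the auxiliary vertex of Lemma~\ref{lem:breakLong} altogether and, as a bonus, covers the inflection case uniformly; I would present the lemma that way if a route shorter than quoting Lemma~\ref{lem:breakLong} were preferred.)
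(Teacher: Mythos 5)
Your proof is correct, and in fact you end up rederiving the paper's own argument. The paper's entire proof is the single line ``choose $\eps$ small enough so that $|ab|-\eps|ab|>\ell$'', which is precisely your item~(i); items~(ii) and~(iii) are the (correct, though left implicit in the paper) verification that a translation \emph{along} $ab$ leaves every turn and every other edge length untouched, so the only constraint worth watching is that $ab$ not drop below length~$\ll$. Your primary route through Lemma~\ref{lem:breakLong} also works, but it is heavier than needed: that lemma is built for translations $T$ pointing \emph{off} the line of $ab$, where a new vertex $c$ is genuinely necessary to rotate half of the long edge back into place. For $T=\vec{ba}$ the construction degenerates (and strictly speaking is the boundary case between the ``$T$ up'' and ``$T$ down'' branches of that lemma's proof, not covered verbatim by either); you patch this by noting $a,c,b'$ are collinear so $c$ can be deleted, which is fine, but it is extra scaffolding only to arrive back at the observation that nothing but $|ab|$ has changed. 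Your closing self-contained version is the cleaner choice and is, modulo wording, exactly what is in the paper; it also handles the inflection sub-case uniformly, which the paper's proof does as well (the one-line argument never uses non-inflection), so the case split via Lemma~\ref{lem:IlengthFreedom} is likewise avoidable.
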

\begin{proof}Choose \eps small enough so that $|ab|-\eps|ab|>\ell$.\end{proof}
\subsection{Inflection and long edges are but a few}\label{sec:few}Using the local freedom of inflection and long edges, formalized above, we now show that in a \ddp there are only a couple of inflection+long edges. Say that inflection edges $ab$ and $cd$ have \e{similar turns} w.r.t.\ \P if the turn at $a$ is the same as the turn at $c$ --- both right of both left (and hence the turns at $b$ and $d$ are also the same --- both left or both right).
\begin{lemma}\label{lem:il}A \ddp cannot have
\begin{enumerate}
\item\label{item:2i}two inflection edges with similar turns, unless the edges are parallel
\item\label{item:li}a non-inflection long edge and an inflection edge
\item\label{item:2l}two long non-inflection edges
\end{enumerate}\end{lemma}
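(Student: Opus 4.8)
The plan is to prove all three parts by a single ``compensated double‑perturbation'' that keeps \u{} and \v{} fixed while shortening the path. Assume for contradiction that a \ddp \P contains two forbidden edges $e_1,e_2$, with $e_1$ preceding $e_2$ along \P (the reverse order is symmetric: reversing a \ddp yields a \ddp and preserves every edge's type; also $e_1,e_2$ are non‑adjacent by Lemmas~\ref{lem:ln} and~\ref{lem:infl}, so at least one edge lies strictly between them). Let $\hat u_i$ be the unit vector along $e_i$ in \P's direction of travel. First I would translate the part of \P from the second endpoint of $e_1$ onward by a small vector $w$, using the freedom of $e_1$ (Lemma~\ref{lem:Ifreedom} if $e_1$ is an inflection edge, Lemma~\ref{lem:breakLong} if $e_1$ is a long non‑inflection edge); then translate the part of \P from the second endpoint of $e_2$ onward back by $-w$, using the freedom of $e_2$ the same way. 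The net effect rigidly shifts only the portion strictly between $e_1$ and $e_2$ and leaves \u, \v, and everything outside this portion in place, so the result is again a \uu-\vv path; it is feasible by the freedom lemmas; and its length exceeds $|\P|$ by $\langle w,\hat u_1-\hat u_2\rangle$ to first order, plus a non‑negative second‑order correction (the only lengths that change are those of $e_1$ and $e_2$, or of the two pieces each splits into when it is long). The admissible $w$ are those lying in the open half‑plane of directions pointing to the side of $e_1$ on which its successor lies --- all of the plane if $e_1$ is long, by Lemma~\ref{lem:breakLong} --- with the analogous condition imposed on $-w$ by $e_2$.

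For item~\ref{item:2i} both edges are inflection edges, so $w$ is confined to a wedge $A_1\cap(-A_2)$ bounded by lines parallel to $\hat u_1$ and to $\hat u_2$; the \emph{similar‑turns} hypothesis determines on which sides $A_1$ and $A_2$ lie, and a short case check then shows this wedge meets the open half‑plane $\{w:\langle w,\hat u_1-\hat u_2\rangle<0\}$ exactly unless $\hat u_1=\hat u_2$. For such a $w$, scaled down until the linear (negative) term dominates the quadratic correction, we obtain a feasible \uu-\vv path shorter than \P --- contradicting optimality. When $\hat u_1=\hat u_2$ one has $A_1=A_2$, hence $A_1\cap(-A_2)=\emptyset$ and the move cannot even be set up; this is precisely why parallel inflection edges with similar turns are a genuine exception.

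For items~\ref{item:li} and~\ref{item:2l}, at least one of $e_1,e_2$ is a long non‑inflection edge, which by Lemma~\ref{lem:breakLong} imposes no constraint on $w$, so $w$ ranges over a half‑plane (item~\ref{item:li}) or the whole plane (item~\ref{item:2l}), and the same computation gives a shorter feasible \uu-\vv path whenever $\hat u_1\neq\hat u_2$. It remains to handle $\hat u_1=\pm\hat u_2$. The anti‑parallel case $\hat u_1=-\hat u_2$ is immediate: by Lemma~\ref{lem:Lfreedom} (and Lemma~\ref{lem:IlengthFreedom} if $e_2$ is an inflection edge) we can shorten both $e_1$ and $e_2$ a little while holding \v{} fixed, so \P was not shortest. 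The remaining sub‑case --- $e_1\parallel e_2$ with the \emph{same} direction of travel --- is, I expect, the main obstacle, since every local move above is exactly length‑neutral there. I would attack it by observing that the direction of travel is then unchanged from $e_1$ to $e_2$, so the turn accumulated between them is a multiple of $2\pi$: a nonzero multiple means the sub‑path between $e_1$ and $e_2$ runs a full loop that should be excisable in favour of a strictly shorter feasible path, while a zero multiple forces turns of both signs between $e_1$ and $e_2$, hence an inflection edge there, reducing the situation to item~\ref{item:2i} (iterating if that inflection edge is itself parallel to $e_1$). Turning ``excise the loop'' into an explicit feasible modification of \P, and checking the bookkeeping when $e_1$ is \P's first edge or $e_2$ its last (where the start/end configuration plays the role of the missing neighbouring edge), is the delicate step.
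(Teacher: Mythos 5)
Your perturbation is, up to repackaging, exactly the paper's: rigidly translate the subpath strictly between the two offending edges $e_1,e_2$ (so \uu{} and \vv{} stay fixed) and let each of $e_1,e_2$ absorb the motion at its own end via Lemma~\ref{lem:Ifreedom}/Lemma~\ref{lem:breakLong}. Your first-order bookkeeping $\langle w,\hat u_1-\hat u_2\rangle$ plus a nonnegative quadratic remainder is a correct calculus-style formalisation of what the paper packages as ``by the triangle inequality the modification shortens the path.'' For item~\ref{item:2i} your argument is sound: the admissible cone $A_1\cap(-A_2)$ for \emph{similar} turns is nonempty precisely when the edges are not parallel, and it always meets the descent half-plane $\{w:\langle w,\hat u_1-\hat u_2\rangle<0\}$ because two open half-planes through the origin with distinct boundary lines always intersect. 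The paper reaches the same conclusion by the shorter observation that one of the two vectors $\vec{cd}$, $\vec{ba}$ must be admissible for the other edge; this avoids any expansion or cone intersection and is worth internalising, but your version is not wrong.

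Where a real issue remains is exactly where you honestly flag it: in items~\ref{item:li} and~\ref{item:2l}, when $\hat u_1=\hat u_2$ every first-order term vanishes, and one can check that the shift is length-\emph{neutral} to all orders (the broken long edge lengthens by precisely the amount that $e_2$ shortens, and the kink at the break degenerates because $a,c,b'$ stay collinear). So the slide shows $\le$, not $<$, in this degenerate alignment --- which is also the one case in which the paper's invocation of the triangle inequality fails to be strict, so you have not uncovered a flaw unique to your write-up so much as a tacit sub-case in the original argument. Your proposed fix (the cumulative turn between the two edges is a multiple of $2\pi$; excise a loop or produce an intermediate inflection edge and fall back on item~\ref{item:2i}) is plausible, but as you say it is a sketch: excising a discrete loop while keeping all turn, length, and turn-over-length constraints is not obviously a local modification, and the fall-back to item~\ref{item:2i} requires the intermediate inflection edge to have \emph{similar} turns to $e_2$ and to be non-parallel to it, neither of which is automatic. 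Until that sub-case is closed, items~\ref{item:li} and~\ref{item:2l} are not fully proved by your argument (nor, on a literal reading, by the paper's one-line ``by the triangle inequality'').

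One small slip near the end: having disposed of all $\hat u_1\neq\hat u_2$, the only residual case is $\hat u_1=\hat u_2$; the anti-parallel case $\hat u_1=-\hat u_2$ is already covered by the first-order argument (there $\hat u_1-\hat u_2=2\hat u_1\neq0$), so singling it out as a ``remaining'' case is redundant, though the direct-shortening you describe for it is also fine.
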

\begin{list}{}{}
\item[\e{Proof of~\ref{item:2i}}.]
    Let $P'$ be the part of \P between $b$ and $c$; since inflection edges with similar turns cannot be adjacent, $P'$ is non-empty. By the triangle inequality, each of the following two local modifications shortens \P (Fig.~\ref{fig:2i}): (1)~slide $P'$ along $cd$ moving $c$ towards $d$ and rotating $ab$ around $a$ so as to keep it connected to $b$; (2)~change the roles of $ab$ and $cd$, i.e., slide $P'$ along $ba$ moving $b$ towards $a$ and rotating $dc$ around $d$ so as to keep it connected to $c$. Moreover, one of these modification keeps the path feasible because either $\vec{cd}$ is admissible for $ab$, or $\vec{ba}$ is admissible for $dc$ (this is because of any two non-parallel vectors, one is to the left of the other).
\begin{figure}\centering\includegraphics{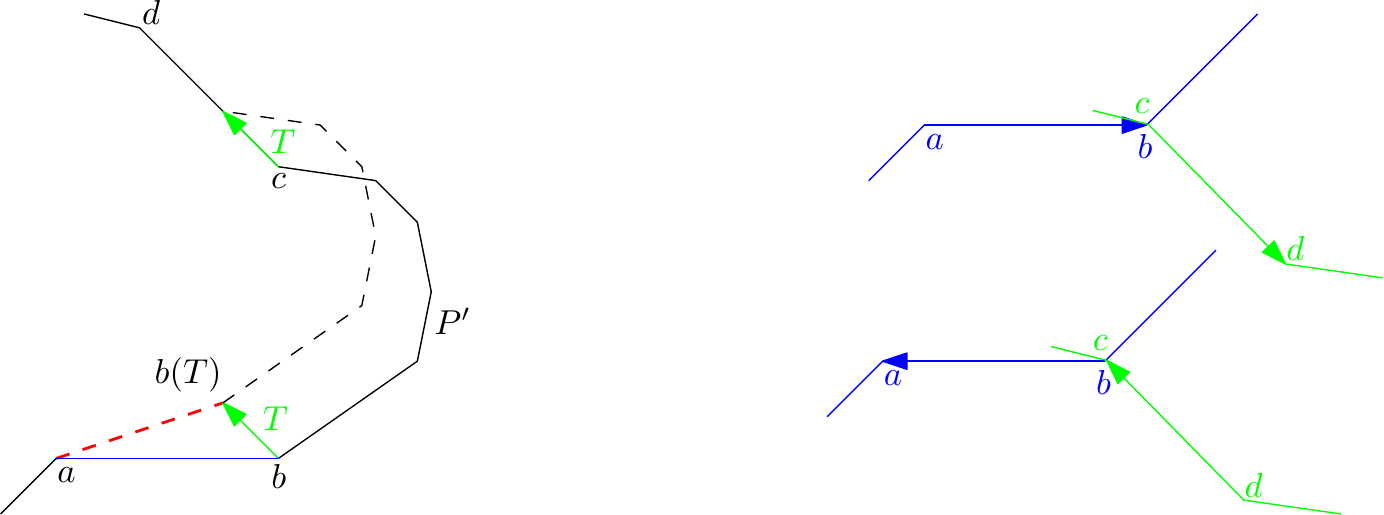}\caption{Left: $P'$ is moved along one edge while rotating the other; {\color{blue}{blue}}+{\color{green}{green}} is replaced by {\color{red}{red}}. Right: If $\vec{cd}$ is not admissible for $ab$, then $\vec{ba}$ is admissible for $dc$.}\label{fig:2i}\end{figure}
\item[\e{Proof of~\ref{item:li}}.]By Lemma~\ref{lem:infl}, the edges are not adjacent. Rigidly shift the subpath between the edges along the inflection edge (shortening the edge); a small shift is feasible due to Lemma~\ref{lem:IlengthFreedom}. Keep the path connected by breaking the long edge as appropriate (Lemma~\ref{lem:breakLong}). By the triangle inequality, this modification shortens the path (Fig.~\ref{fig:li}).\begin{figure}\centering\includegraphics{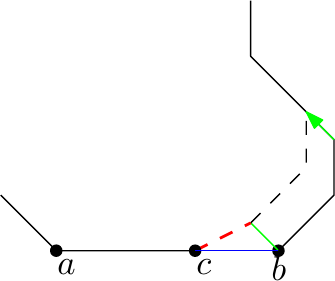}\caption{The modification replaces {\color{blue}{blue}}+{\color{green}{green}} by {\color{red}{red}}.}\label{fig:li}\end{figure}
\item[\e{Proof of~\ref{item:2l}}.]Analogous to~\ref{item:li}, using Lemma~\ref{lem:Lfreedom} in place of Lemma~\ref{lem:IlengthFreedom}.
\end{list}\qed

\section{Discrete arcs}\label{sec:arcs}The results from the previous section imply that, loosely speaking, pretty much all that is left for a \ddp is to have long chains of normal non-inflection edges, connected by few inflection and/or long and/or short edges. In this section we introduce ``discrete circular arcs'', which are essentially chains of normal edges with turns~\th.
\begin{definition}A \e{discrete circle} is a regular $\frac{2\pi}\th$-gon with side~\ll.\end{definition}
In what follows, we often omit the modifier ``discrete'' and call discrete circles just circles, discrete arcs just arcs, etc.

We define a discrete arc as a maximal subpath of \P that is also a subpath of a discrete circle; the definition is somewhat involved because formally, any length-at-most-\ll edge of \P is a subpath of a discrete circle -- and we want to avoid calling each and every short edge an arc:
\begin{definition}A \e{discrete circular arc} is a subpath $\pi=(p_1,\dots,p_m)$ of a curvature-constrained path \P, with the following properties:
\begin{enumerate}
\item $\pi$ is a subpath of some discrete circle;
\item $\pi$ is maximal in the sense that it is not contained in another subpath of \P which is also a subpath of some discrete circle;
\item if $m=2$, then at least one of $p_1,p_2$ is a vertex of \P, and in addition
\begin{enumerate}
\item if $p_1$ is a vertex of \P, but $p_2$ is not, then $p_1=u$ and the turn from the pre-edge to $p_1p_2$ is exactly \th (i.e., $\angle u'up_2=\th$);
\item if $p_2$ is a vertex of \P, but $p_1$ is not, then $p_2=v$ and the turn from $p_1p_2$ to the post-edge is exactly \th (i.e., $\angle p_1vv'=\th$);
\item if both $p_1,p_2$ are internal vertices of \P (i.e., $p_1\ne u,p_2\ne v$), then $|p_1p_2|=\ll$;
\item if both $p_1,p_2$ are vertices of \P and $p_1=u$, then the turn from the pre-edge to $p_1p_2$ is exactly \th or $|p_1p_2|=\ll$;
\item if both $p_1,p_2$ are vertices of \P and $p_2=v$, then the turn from $p_1p_2$ to the post-edge is exactly \th or $|p_1p_2|=\ll$.
\end{enumerate}\end{enumerate}
\end{definition}
Few remarks are
appropriate here (refer to Fig.~\ref{fig:example}):
\begin{itemize}
\item If $m=2$ and $p_1p_2$ is short, then one of $p_1,p_2$ is a terminal vertex of \P (e.g., in Fig.~\ref{fig:example} the short edge $ua$ is an arc since the turn onto it from the pre-edge is \th). If $m=2$ and $p_1,p_2$ are both internal vertices of \P (e.g., $g$ and $h$ in Fig.~\ref{fig:example}), then (by the maximality of the arc) the turn of \P at each of $p_1,p_2$ is less than~\th.
\item If $m=3$, then either $|p_1p_2|=\ll$ or $|p_2p_3|=\ll$ (since the arc is a subpath of a discrete circle, it has no long edges; no two short edges are adjacent).
\item If $m>3$, then the $m-3$ internal edges of an arc are normal, and the turn of \P at each internal vertex of the arc is exactly~\th.
\item Every vertex at which \P turns by $\th$, is part of an arc, and every edge incident to a turn-$\th$ vertex, will share a portion with an arc. If a short inflection edge makes angles \th with both adjacent edges, the short edge is a part of two arcs; that is, consecutive arcs along \P may overlap.
\end{itemize}
Note that only the internal vertices of an arc must be vertices of \P; the terminal vertices of an arc may lie in the middle of \P's edges (in Fig.~\ref{fig:example}, such are both terminal vertices $b,e$ of the arc $bc\-de$, as well as the terminal vertex $d$ of the arc $de\-f$). We use such vertices below when defining the type of a path, as well as in the local modifications by which we eliminate forbidden pairs of edges (we also used such a vertex earlier when identifying the possibility to break a long edge in Lemma~\ref{lem:breakLong}).

\subsection{Arcs and inflection}One important aspect in which discrete curvature-constrained paths differ form their smooth counterparts is that several consecutive discrete arcs may be similarly oriented (all going clockwise or all counterclockwise); in a smooth path adjacent arcs necessarily have a common inflection tangent. We show that, nevertheless, having two adjacent similarly-oriented arcs in a sequence of arcs is quite restrictive for a \ddp{} -- essentially, \e{all} arcs must then be similarly oriented (Lemma~\ref{lem:monot}). In addition, a path of more than 3 similarly oriented arcs may be either shortened or transformed to an equal-length path with only 2 arcs (Lemma~\ref{lem:circ}).

Specifically, for adjacent arcs $X,Y$ let $X\-Y$ denote the subpath going from the start of $X$ to the end of $Y$. The (long and technical) proof of the following lemma can be found in Appendix~\ref{app:lem:monot}:
\begin{lemma}\label{lem:monot}Let $X,Y,Z$ be consecutive arcs. If the subpath $X\-Y$ has no inflection edge, then neither does $Y\-Z$.\end{lemma}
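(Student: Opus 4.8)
\textbf{Proof proposal for Lemma~\ref{lem:monot}.}

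The plan is to argue by contrapositive: assuming that $Y\-Z$ contains an inflection edge, I will produce a length-non-increasing local modification that either shortens \P (contradicting optimality) or, if no shortening is possible, forces $X\-Y$ to also contain an inflection edge. The natural object to focus on is the edge $f$ shared (or the vertex shared) by $Y$ and $Z$: since $Y$ and $Z$ are \emph{consecutive} arcs, they meet at a transition point, and the inflection edge of $Y\-Z$ is either inside $Y$, inside $Z$, or is exactly the transition edge/vertex. The maximality clause in the definition of a discrete arc is what pins down the geometry at the transition: if $Y$ and $Z$ do not overlap, then the turn of \P at the transition vertex is strictly less than \th, and this ``slack'' is precisely the degree of freedom I want to exploit.

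First I would set up coordinates along the inflection edge $e=ab$ witnessed in $Y\-Z$ and recall, via Lemma~\ref{lem:infl}, that the edges flanking $e$ are non-inflection normal edges, so locally the picture near $e$ is: an arc coming in on one side of the supporting line of $e$, and an arc leaving on the other side. Then I would use the ``length freedom'' of inflection edges (Lemma~\ref{lem:IlengthFreedom}) to slide the tail subpath $P'$ (everything after $b$) along $ab$ toward $a$, contracting $e$; by Lemma~\ref{lem:il}\eqref{item:2i}, if there were another inflection edge in $Y\-Z$ with similar turns it would have to be parallel to $e$, and Lemma~\ref{lem:il}\eqref{item:li},\eqref{item:2l} rule out long edges coexisting with the inflection edge, so the arc structure on either side of $e$ is ``rigid'' in the sense that it is a genuine subpath of a discrete circle. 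The key computation is to track what happens at the two transition points (start of $Y$, end of $Z$, and the $X$--$Y$ junction) as $e$ shrinks: the turn-$\th$ interior vertices of each arc are unaffected, and the only vertices whose turn can change are the endpoints of $e$ and possibly the transition vertices where arcs abut — and at those, sliding toward $a$ only decreases turns (or keeps them fixed), so feasibility is preserved for small \eps. The triangle inequality gives that this shrink is length-non-increasing, and strictly decreasing unless the relevant subpath is already degenerate.

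The main obstacle — and the reason the proof is ``long and technical'' — is the bookkeeping of \emph{which} configuration of inflection edge inside $Y\-Z$ actually arises, and ruling out the possibility that contracting the inflection edge in $Y\-Z$ creates a \emph{new} inflection edge inside $X\-Y$ without shortening the path; this is exactly the ``transfer'' of an inflection from the $Y\-Z$ side to the $X\-Y$ side that the lemma statement implicitly permits. I expect to handle this by a case analysis on the position of the inflection edge relative to the arc boundaries: (i) $e$ strictly inside $Y$ or strictly inside $Z$ — here a pure slide shortens \P outright, a contradiction, so this case cannot occur at all; (ii) $e$ is the $Y$--$Z$ transition edge — here the slide is still available and again yields a strict shortening unless the two arcs are positioned so that no slack exists, which via the maximality clause forces the turn at the transition to be \th, i.e. the arcs overlap and $e$ is simultaneously a part of both, a configuration one then checks directly; (iii) the inflection in $Y\-Z$ is realized only through the arc-junction vertex rather than a single edge — here one uses the freedom at the $X$--$Y$ junction symmetrically, which is what ultimately deposits the inflection onto the $X\-Y$ side. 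Throughout, the workhorses are Lemmas~\ref{lem:ln},~\ref{lem:infl},~\ref{lem:IlengthFreedom},~\ref{lem:breakLong}, and~\ref{lem:il}, together with the definition of a discrete arc; the novelty is only in the careful choice of which subpath to translate and in the triangle-inequality accounting that shows the modification never lengthens \P.
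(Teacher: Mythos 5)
Your proposal identifies the right ingredients — the inflection edge must sit at the $Y$-$Z$ junction, the rotational slack at the $X$-$Y$ junction is the resource to exploit, and some triangle-inequality accounting should close the argument — but the combination you describe does not actually produce a valid local shortening, and the real difficulty of the lemma is missing.

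The central gap is endpoint preservation. You propose to use the length freedom of inflection edges (Lemma~\ref{lem:IlengthFreedom}) to slide the tail subpath past the inflection edge toward the other endpoint, contracting the inflection edge. Lemma~\ref{lem:IlengthFreedom} does exactly that, but it rigidly translates the \emph{entire} tail, so the final configuration $\vv$ moves; it is a feasibility statement, not a length-decreasing modification of a $u$-$v$ path, and by itself it yields no contradiction with optimality. The paper's modification instead moves only the subpath $P_{bc}$ between the $X$-$Y$ junction vertex $b$ and the near endpoint $c$ of the inflection edge $cd$, keeping both $a$ (the vertex of $X$ before $b$) and $d$ fixed; the change is absorbed at one end by a rotation about $a$ (sometimes about $b$, sometimes by a translation in an admissible direction) and at the other end by sliding $c$ along $cd$. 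These two ends must be played off against each other, and that coupling is where the triangle inequality does its work. Your sketch never couples them.

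Once you try to couple the two ends you hit exactly the difficulty the paper flags as ``long and technical'': whether the coupled modification is simultaneously feasible (turn constraints at $a$, $b$, $c$) and strictly length-decreasing depends on the direction of $cd$ relative to the legs at $a$, $b$. The paper splits on whether $cd$ lies in the second or third quadrant with respect to $cb$, and within the second quadrant on whether $\angle abc$ is acute or obtuse and on which side of a reference ray ($cc'\perp ab$ in the acute case, $cc^*\parallel ab$ in the obtuse case) the edge $cd$ lies. The hardest subcase (obtuse $\angle abc$, $cd$ outside the wedge $bcc^*$) cannot be handled by any single slide or rotation; it needs a clockwise rotation about $b$ followed by a compensating counterclockwise rotation about $a$, together with a quantitative fact (Claim~\ref{claim:2rotations}) that the second rotation angle is strictly smaller than the first and that $cd$ strictly shortens. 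Your case split (i)/(ii)/(iii) on where the inflection edge sits does not surface any of this: case (i) is ruled out simply because arcs have no inflection edges (not by a slide argument), and cases (ii) and (iii) collapse to the single configuration ``inflection at the $Y$-$Z$ boundary,'' leaving the directional analysis — which is the actual content — untouched.

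Two smaller points. If $a=u$ and the turn from the pre-edge onto $ab$ is exactly $\th$ to the opposite side, the rotational slack at $a$ disappears; the paper handles this boundary case separately by appending the pre-edge and invoking Lemma~\ref{lem:il}(\ref{item:2i}), and any correct proof needs something similar. And there is no ``transfer'' of an inflection from $Y\-Z$ to $X\-Y$ to worry about: the lemma is proved by exhibiting a \emph{strict} shortening under the hypothesis that $X\-Y$ is inflection-free and $Y\-Z$ is not, so the modification must actually shorten the path, not merely relocate the inflection.
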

Recall (Section~\ref{sec:freedom}) that a long edge has more freedom than an inflection edge (compare Lemmas~\ref{lem:Ifreedom} and Lemmas~\ref{lem:Lfreedom}); in a sense, a long edge
(which is capable of being broken)
has ``twice as much freedom'' as an inflection edge -- admissible vectors for the latter lie in a halfplane while for the former \e{any} vector is ``admissible''. In particular, the arguments in the proof of the above lemma can be carried out for long edges in place of inflection, leading to
\begin{lemma}\label{lem:monotL}Let $X,Y,Z$ be consecutive arcs. If the subpath $X\-Y$ has no inflection edge, then the subpath $Y\-Z$ does not have a long edge.\end{lemma}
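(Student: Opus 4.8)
The plan is to mimic, essentially verbatim, the argument used for Lemma~\ref{lem:monot}, exploiting the observation recorded just before the statement: a long edge has strictly more ``freedom'' than an inflection edge, since by Lemma~\ref{lem:breakLong} \emph{every} vector $T$ is admissible for a long non-inflection edge (at the cost of breaking it into a normal edge plus a remainder), whereas admissible vectors for an inflection edge fill only a halfplane. Concretely, I would argue by contradiction: suppose $X\-Y$ has no inflection edge but $Y\-Z$ contains a long edge $e$. Since $X\-Y$ is inflection-free, all of $X$ and $Y$ are similarly oriented, so $e$ cannot lie in the portion of $Y$ shared with $X$; thus $e$ is an edge of the part of $Y\-Z$ strictly past the start of $Y$, and in particular lies in $Y$ or in $Z$ (or is the bridge-like edge between them).

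The core of the argument is the same local-modification/shortcutting scheme as in Lemma~\ref{lem:monot}. One picks the direction $T$ in which the relevant endpoint of the long edge is to be moved — chosen so that the triangle inequality guarantees the modified path is strictly shorter — and then invokes Lemma~\ref{lem:breakLong} to realize an $\eps T$-translation of the tail $P'$ while keeping the path connected and feasible (breaking $e$ at distance $\ll$ from the appropriate endpoint and rotating the remainder). The point is that \emph{every} direction $T$ that the monotonicity argument might need is available for a long edge, so wherever the proof of Lemma~\ref{lem:monot} would have said ``either $\vec{cd}$ is admissible for the inflection edge, or $\vec{ba}$ is admissible for the other one,'' here the long edge simply accommodates whichever direction is required; there is no case where both candidate modifications are blocked. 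Hence the configuration ``$X\-Y$ inflection-free, $Y\-Z$ contains a long edge'' is not shortest, a contradiction.

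I would present this not by recopying the whole Appendix~\ref{app:lem:monot} proof but by stating which role the long edge plays in place of the inflection edge there, and then citing Lemma~\ref{lem:breakLong} at the one spot where admissibility of $T$ was needed. Two small things need checking: first, that the break point of $e$ can be chosen so that $P'(\eps T)$ moved past it does not create two adjacent short edges — this follows exactly as in Lemma~\ref{lem:breakLong}, since a long edge in a \ddp is adjacent only to normal edges by Lemma~\ref{lem:ln}; second, that $e$ really does lie strictly past the start of $Y$ (so that the modification does not disturb the already-inflection-free block $X\-Y$ in a way that reintroduces an inflection), which is immediate from $X\-Y$ being inflection-free and hence $X,Y$ similarly oriented. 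The main obstacle, such as it is, is purely expository: translating the (long and technical) case analysis of Lemma~\ref{lem:monot} into the long-edge setting cleanly enough that the reader is convinced the substitution ``inflection edge $\leadsto$ long edge, Lemma~\ref{lem:Ifreedom} $\leadsto$ Lemma~\ref{lem:breakLong}'' goes through in every case — there is no new geometric idea required, only care that the extra vertex introduced by breaking the long edge never violates a length or turn-over-length constraint.
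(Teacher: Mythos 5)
Your proposal is correct and, in fact, is precisely the approach the paper itself intends. The paper does not give a standalone proof of Lemma~\ref{lem:monotL}: immediately before its statement, the authors observe that any vector is admissible for a long non-inflection edge (Lemma~\ref{lem:breakLong}), whereas admissible vectors for an inflection edge fill only a halfplane, and then assert that ``the arguments in the proof of the above lemma can be carried out for long edges in place of inflection.'' Your write-up spells out exactly this substitution (inflection edge $\leadsto$ long edge, Lemma~\ref{lem:Ifreedom} $\leadsto$ Lemma~\ref{lem:breakLong}, Lemma~\ref{lem:il}(\ref{item:li}) $\leadsto$ Lemma~\ref{lem:il}(\ref{item:2l}) in the flush case), so you are reconstructing the intended argument rather than departing from it.
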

We now show that a path consisting of 3 similarly oriented arcs may be transformed to an equal-length path ``with more \th-turns''. Say that a path is \e{flush} if the turn from the pre-edge onto the first edge of the path is exactly~\th. The (technical) proof of the next lemma can be found in Appendix~\ref{app:lem:circ}:
\begin{lemma}\label{lem:circ}A non-flush 
\ddp consisting of 3 consecutive arcs and having no inflection edge can be transformed to an equal-length path that is either flush, or consists of at most 2 arcs.\end{lemma}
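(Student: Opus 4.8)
The plan is to exploit the ``length freedom'' and ``break'' operations (Lemmas~\ref{lem:Ifreedom}, \ref{lem:breakLong}, and especially the analogues used for long edges and the modifications in the proof of Lemma~\ref{lem:monot}) to slide mass around among the three arcs without changing total length, and show that unless the path is already flush we can push one arc down to zero size. Let the three consecutive arcs be $X$, $Y$, $Z$; since there is no inflection edge, by Lemma~\ref{lem:monot} all three are similarly oriented, say all turning left, so the whole path is a convex chain of normal edges (plus possibly a short or long terminal/bridging edge) turning a total of at most $3\cdot(2\pi/\th)\cdot\th$; because the path is a geodesic, Lemma~\ref{lem:il} already forbids two long non-inflection edges, so at most one edge is long. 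First I would set up coordinates at the ``joint'' between $X$ and $Y$: the transition from $X$ to $Y$ happens either at an internal vertex where $\P$ turns by less than $\th$ (a genuine bridge vertex) or across a short/long edge. In each case I would identify a one-parameter family of feasible deformations that keeps the two endpoint configurations $\uu,\vv$ fixed and keeps total length constant, parametrized by how much ``turning'' is allocated to $X$ versus $Y\-Z$.

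The key steps, in order. Step 1: reduce to the case where the interfaces $X\mid Y$ and $Y\mid Z$ are each realized by a single non-$\th$ vertex or a single non-normal edge (a ``bridge'' in the sense of Section~\ref{sec:typeANDcanon}); the results of Section~\ref{sec:edges} guarantee there are only a constant number of non-normal, non-inflection edges, and since there are no inflection edges, the only irregularities are at most one long edge and any number of short edges separated by normal edges — but a short edge interior to a similarly-oriented chain that makes angle $<\th$ on both sides can be absorbed, and if it makes angle exactly $\th$ on a side it is part of the adjacent arc, so effectively each interface is a single bridge. Step 2: on the first bridge, apply the translation/rotation modification from the proof of Lemma~\ref{lem:monot} (sliding the subpath $Y\-Z$ rigidly along the direction of the first edge of $Y$, rotating the last edge of $X$ about its far endpoint to maintain connectivity, and symmetrically the other direction): by the triangle-inequality computation there, one of the two directions does not increase length, and because the path is a geodesic, that direction cannot strictly decrease length either — so it is length-preserving. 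Step 3: push this length-preserving deformation to its extreme: either the turn at the far end of $X$ onto the pre-edge reaches $\th$ (the path becomes flush, since $X$'s first edge is then the first edge of $\P$ and the relevant turn is $\th$), or arc $X$ shrinks until it has $\le 1$ edge and merges into $Y$, leaving $\le 2$ arcs. If neither extreme is reached, the deformation can continue — the only way it halts is one of these two events or the reappearance of an inflection edge, which Lemma~\ref{lem:monot} forbids along the way. Step 4: handle the long-edge case by the same argument with Lemma~\ref{lem:breakLong}/\ref{lem:Lfreedom} in place of the inflection-edge freedom, exactly as Lemma~\ref{lem:il}(\ref{item:2l}) mirrors (\ref{item:li}).

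The main obstacle I expect is Step 3 — showing the length-preserving deformation actually terminates in one of the two desired configurations rather than getting ``stuck'' at an intermediate stage where some other constraint (a short edge becoming adjacent to another short edge, or a second edge turning by exactly $\th$ and thereby changing which vertices count as arc-internal) becomes tight. Managing the bookkeeping of arc boundaries as edges cross the $\th$-turn threshold, and checking that when a short bridge edge is involved the turn-over-length constraint stays satisfied throughout the slide (this is where the hypothesis ``no inflection edge'' is essential, since it lets us invoke the turn-over-length bound rather than ignore it), is the delicate part; I would organize it as a case analysis on the type of each of the two bridges (short edge, long edge, or plain sub-$\th$ vertex), as in the appendix proof of Lemma~\ref{lem:monot}, and in each case exhibit explicitly the extreme configuration reached. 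The ``non-flush'' hypothesis is used precisely to guarantee there is slack at the $X$-side interface with the pre-edge, so that the deformation can get started at all.
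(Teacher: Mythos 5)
Your high-level plan---deform the three-arc path through a one-parameter, length-preserving family until either an arc shrinks away or the pre-edge turn becomes $\th$---is on the right track and matches the paper's intent. But the specific deformation you choose in Step~2 does not produce such a family, and that is a genuine gap.

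You propose sliding the subpath $Y\-Z$ rigidly along the direction of $Y$'s first edge while rotating only $X$'s last edge to maintain connectivity, and then arguing: by the triangle inequality this does not increase length in one of two directions, and since $\P$ is a geodesic it cannot strictly decrease either, hence it is ``length-preserving.'' This reasoning breaks down. A triangle-inequality shortcut of the form you invoke changes length strictly unless the relevant points are already collinear; when applied from a genuine geodesic it either yields a contradiction or leaves the path unchanged, and in neither case does it generate a continuous family of distinct equal-length paths. Moreover, the slide-along-an-edge freedom you rely on is established in the paper only for inflection edges, long edges, and bridges (Lemmas~\ref{lem:Ifreedom}, \ref{lem:Lfreedom}, \ref{lem:bridge})---precisely the objects an $AAA$-path with no inflection and (by Lemma~\ref{lem:monotL}) no long edges does \emph{not} contain. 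The interfaces $X\cap Y$ and $Y\cap Z$ are plain vertices with turn strictly between $0$ and $\th$, not bridges, so there is no ``slide'' move available there; trying to interpret a short interface edge as a bridge does not help, since in an $AAA$ decomposition there is no $B$ to exploit.

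The paper's own proof instead exploits a \emph{rigid linkage} deformation that preserves every edge length exactly and so never needs a triangle-inequality argument: it fixes $u$, rotates $c$ around $v$ by $\eps$, and sets $b(\eps)$ to be the appropriate intersection of the circle of radius $|ub|$ around $u$ with the circle of radius $|cb|$ around $c(\eps)$. Because $u,b,c$ are not collinear, this intersection varies smoothly on a nondegenerate interval $\cal{E}$, and the small angular slack at each of $u,b,c,v$ (non-flush at $u$, non-$\th$ turns at $b,c$, and a possibly one-sided slack at $v$) gives a sub-interval $\cal{E}'$ of feasible configurations, each of the same total length as $\P$. Pushing to an endpoint of $\cal{E}'$ is what forces either collinearity of $u,b,c$ (yielding a shortcut or a forbidden long edge), a $\th$-turn at $v$ to the ``wrong'' side (an inflection edge, hence a shortcut), a $\th$-turn at $u$ (flush), or a $\th$-turn at $b$ or $c$ (arcs merge). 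If you want to repair your argument you should replace the translation in Step~2 by this circle-intersection rotation, which is the mechanism that makes ``equal length throughout the deformation'' true by construction rather than by a heuristic optimality argument.
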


\section{Path typification and canonization}\label{sec:typeANDcanon}Smooth Dubins paths can be classified as either arc-segment-arc or arc-arc-arc type. In this section, we develop a notion of type for a discrete curvature-constrained path, similar to the type of a smooth Dubins path, by labeling subpaths with $A$'s (discrete circular arcs) and $B$'s (bridges -- segments connecting consecutive arcs). Specifically, the type of a path is defined as follows:
\begin{enumerate}
\item Identify all arcs, and label each arc by an~$A$.
\item The unlabeled subpaths are now a set of straight segments disjoint other than possibly at endpoints; each such segment is called a \e{bridge} and labeled with a~$B$.
\end{enumerate}
The \e{length} of a type is the number of letters in it.

Remarks:
\begin{itemize}
\item We emphasize again that a bridge/arc may start or end in the middle of an edge of the path; i.e., the turn of \P at a bridge endpoint may be~0 (such is, e.g., the endpoint $b$ of the bridge $ab$ in Fig.~\ref{fig:example}). However, no edge of the path can support more than one bridge.
\item (As mentioned above,) consecutive arcs may overlap (e.g., arcs $bc\-de$ and $de\-f$ in Fig.~\ref{fig:example} have a common edge $de$). Bridges, on the contrary, have disjoint interiors, which are also disjoint from arcs.
\item The main feature of the bridge is the ``freedom to rotate locally'' (formalized in Lemma~\ref{lem:bridge}).
\item A type is uniquely defined.
\end{itemize}
The type is defined for arbitrary discrete curvature-constrained paths, not only for shortest such paths (\ddps). Since our ultimate goal is the characterization of the shortest paths, from now on we will consider only \ddps.

\subsection{Canonical form}In our development so far, we tacitly assumed that a path makes a non-zero turn at every vertex, i.e., that the path has no vertices in the middle of its edges. This is a usual and natural assumption in the treatment of polygonal paths, and we
lose
no generality by making it: any 0-turn vertex could be removed without violating the feasibility of the path -- removing such a vertex does not influence the path turns and does not decrease the edge lengths). On the contrary, \e{adding} vertices in the middle of edges may, in general, render a path infeasible (e.g., if a vertex is added on a normal edge, the edge splits into two adjacent short edges). Still, when proving that a path may have only one bridge, we would like to have vertices at certain points internal to edges. Specifically, we postulate that the endpoints of the bridges are \e{always} path vertices (by default, this may not be the case -- like, e.g., for point $b$ in Fig.~\ref{fig:example}); we need this to be able to ``break'' the path there (this is similar to how we broke long edges in Lemma~\ref{lem:breakLong}). That is, even if the endpoint of a bridge is in the middle of an edge, we add a vertex of the path there. The next lemma shows that such addition of vertices is feasible; the (straightforward) proof of the lemma appears in Appendix~\ref{app:lem:canonical}:
\begin{lemma}\label{lem:canoncial}Let $\P$ be a \ddp. The path $\P'$ obtained by adding to $\P$ as vertices all bridge endpoints is a \ddp.\end{lemma}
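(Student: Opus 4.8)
The plan is to add the bridge endpoints one at a time and verify that each addition preserves all three constraints in Definition~\ref{def:dccp}, as well as the start/end conditions at \uu and \vv. So let $b$ be a bridge endpoint that is not already a vertex of \P; then $b$ lies in the interior of some edge $e$ of \P, and since no edge supports more than one bridge, inserting $b$ splits $e$ into two collinear pieces $e',e''$. First I would observe that the turn at $b$ in $\P'$ is $0$, and the turns at the old endpoints of $e$ are unchanged (the supporting line of each piece is the supporting line of $e$); hence the turn constraints hold trivially. The only thing that can go wrong is the interaction between the two short pieces $e',e''$ created by the split, and the way they interact with the turn-over-length constraint.

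The key step is to argue that $e$ is a \emph{long} edge, so that $b$ can be chosen (or rather, is forced by the arc/bridge structure) to split $e$ into a normal piece and another piece, exactly as in Lemma~\ref{lem:breakLong}. Here is where I would invoke the earlier structural results: by Lemma~\ref{lem:ln} a long or short edge is adjacent only to normal edges, and by definition a bridge is a maximal subpath \emph{not} labeled by an $A$, so a bridge endpoint that is interior to an edge $e$ must be the point where $e$ stops being part of an arc. Since an arc is a subpath of a discrete circle and hence contains no long edge, while the edge $e$ is split by the bridge endpoint into an ``arc part'' (which must be normal, being internal to a discrete circle) and a ``bridge part'', the edge $e$ itself is longer than \ll, i.e.\ long. [If instead $e$ were normal, it would lie entirely inside one arc or be a bridge with both endpoints already vertices — so no interior bridge-endpoint could occur on it; if $e$ were short, an interior bridge endpoint is likewise excluded because a short edge adjacent to an arc is itself (a degenerate) arc by the $m=2$ clause, or is a bridge whose endpoints are the incident vertices.] Thus the insertion of $b$ is precisely the ``breaking'' operation of Lemma~\ref{lem:breakLong} with $T=0$: the two pieces of $e$ have lengths \ll and $|e|-\ll>0$; the \ll-piece is normal; and the other piece, even if short, is adjacent on its far side to the normal arc-edge (by Lemma~\ref{lem:ln}) and on its near side to the \ll-piece, hence is never adjacent to a short edge — so the length constraints survive. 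The turn-over-length constraint is vacuous for both new pieces since each makes a $0$-turn at $b$ and hence the ``turn from one neighbor to the other'' across each piece equals a turn already present in \P.

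Finally I would dispose of the boundary cases: a bridge endpoint equal to \u or \v is already a path vertex by hypothesis, so nothing to do; and a bridge endpoint interior to the \emph{first} or \emph{last} edge of \P is handled exactly as above, with the pre-edge (resp.\ post-edge) playing the role of the outer normal neighbor — the start condition at \uu is stated purely in terms of the turn at \u and the turn from the pre-edge onto the second edge, neither of which is affected by inserting a $0$-turn vertex strictly later along \P. Iterating over all bridge endpoints (finitely many, since there are finitely many bridges) yields $\P'$, which is feasible; and $\P'$ has the same length as \P, so it is still a shortest such path, i.e.\ a \ddp.

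The main obstacle I anticipate is not any single inequality but the bookkeeping of \emph{why} a bridge endpoint interior to an edge forces that edge to be long — i.e.\ carefully ruling out that such an endpoint sits on a normal or short edge. This rests on unwinding the definition of arc (in particular the $m=2$ clauses) together with Lemmas~\ref{lem:ln} and~\ref{lem:infl}; once that is pinned down, the feasibility check is a direct application of Lemma~\ref{lem:breakLong}.
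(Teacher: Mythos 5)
Your argument is correct and rests on the same facts the paper uses: an interior bridge endpoint forces the containing edge to be long, the arc-side piece of the split has length exactly $\ell$, and Lemma~\ref{lem:ln} prevents either new piece from being flanked by a short edge. The paper packages this as a contradiction argument (assume two adjacent short edges in $\P'$ and trace back to a forbidden long–short or long–long adjacency in $\P$), while you verify each insertion directly; beyond that presentational difference — and a small imprecision where you cite Lemma~\ref{lem:ln} for the far-side neighbour even when that neighbour was itself just created by the previous insertion of the other endpoint of the same bridge, in which case it is normal by construction rather than by that lemma — the two proofs are essentially identical.
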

We say that the path $\P'$ is in \e{canonical} form. Where needed, we will assume that a \ddp is in canonical form, i.e., we will often jump between the original path \P and its canonical form (the above lemma allows us to do this). In particular, when necessary, we will assume that bridges are edges of the path, and can be rotated around their endpoints. Note that the during the canonization, vertices are appended only in the middle of long edges, which is similar to Lemma~\ref{lem:breakLong}.

We emphasize that we first do path typification (splitting the path into $A$s and $B$s), and only then -- canonization (adding $A$'s endpoints to vertices of the path). This is natural, since the type is defined for arbitrary paths, while the canonical form -- only for shortest paths.

One useful property of a path in canonical form is that it can be split at vertices into subpaths, each of which is discrete bounded-curvature path. If in addition, the pre-edge and post-edges are long or normal, 
then it is also a \ddps itself; we state this as a lemma without proof:
\begin{lemma}\label{lem:breakCanonical}Let $\pi$ the subpath between vertices $b,c$ of a \ddp in canonical form; let $a$ be the vertex preceding $b$, and $d$ be the vertex following $c$ (if $b=u$, then $a=u'$; if $c=v$, then $d=v'$). Then $\pi$ is a discrete bounded curvature path. If in addition $\vec{ab},\vec{cd}$ are long or normal, then $\pi$ is in fact a \ddp starting at $(b,\vec{ab} / |\vec{ab}|)$  and  ending at $(c,\vec{cd}/ |\vec{cd}|)$.\end{lemma}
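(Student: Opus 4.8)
The plan is to verify the two assertions of Lemma~\ref{lem:breakCanonical} in turn, both being essentially bookkeeping over the three families of constraints in Definition~\ref{def:dccp}, exploiting the fact that $\pi$ is obtained from a (canonical) \ddp by restriction. For the first assertion, I would observe that every turn of $\pi$ at an internal vertex of $\pi$ coincides with the corresponding turn of \P, and hence is at most \th; the only turn constraint not inherited verbatim is the one at the endpoints $b,c$, but these are now \e{terminal} vertices of $\pi$ and therefore carry no turn constraint at all (they only matter once a configuration is prescribed, which is the content of the second assertion). The length constraints for $\pi$ are literally a subset of those for \P: two short edges of $\pi$ adjacent in $\pi$ are two short edges of \P adjacent in \P. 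Likewise, a short non-inflection non-terminal edge $e$ of $\pi$ has both its neighbors inside $\pi$ (since $e$ is non-terminal in $\pi$), so the turn-over-length quantity for $e$ is computed from exactly the same two edges as in \P, hence is at most \th. This settles that $\pi$ is a discrete curvature-constrained path; it is a \e{shortest} one between its endpoints because any shorter feasible $\pi'$ with the same endpoints and the same two terminal edges could be substituted back into \P (the substitution changes no turn, no edge length, and no turn-over-length quantity outside $\pi$), contradicting optimality of \P.

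For the second assertion I would bring in the pre-edge/post-edge mechanism from the definition of ``starts at \uu / ends at \vv''. The configuration at $b$ is $(b,\vec{ab}/|\vec{ab}|)$, whose associated pre-edge is a length-\ll copy of $\vec{ab}$ attached at $b$; I must check that appending this pre-edge to $\pi$ yields a feasible path, and the only new constraints are (i) the turn at $b$ from the pre-edge to the first edge of $\pi$, and (ii) if that first edge is short, the turn-over-length from the pre-edge to the second edge of $\pi$. Both of these are equal to the corresponding quantities in \P evaluated at $b$ (the pre-edge is parallel to $ab$, and in \P the edge $ab$ was present with length $|\vec{ab}|\ge\ll$), which hold because \P is feasible and in canonical form. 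Here the hypothesis ``$\vec{ab}$ is long or normal'' is exactly what guarantees the pre-edge direction is well-defined with the right length and that the length constraint between the pre-edge and the first edge of $\pi$ is not an issue; symmetrically for $d,c$ and the post-edge. Once feasibility of the augmented path is established, optimality transfers as before: any shorter discrete curvature-constrained path from $(b,\vec{ab}/|\vec{ab}|)$ to $(c,\vec{cd}/|\vec{cd}|)$ could replace $\pi$ in \P, again altering nothing outside $\pi$, contradicting the optimality of \P.

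The main obstacle I anticipate is not any single constraint but the careful handling of the \e{boundary} between $\pi$ and the rest of \P—precisely the reason the paper insists on canonical form. If $b$ were in the interior of an edge of \P, then ``the edge of \P containing $b$'' would straddle the cut, and splicing a replacement subpath could create two short edges at the seam or change a turn-over-length value of a straddling short edge; canonicity (Lemma~\ref{lem:canoncial}) rules this out by making $b$ and $c$ genuine vertices. A second, smaller subtlety is the degenerate case $b=u$ or $c=v$: then $a=u'$, $d=v'$ are the conceptual pre-/post-edge endpoints of \P itself, and one has to note that the starting configuration of $\pi$ is then simply the starting configuration \uu of \P (and similarly at the other end), so there is nothing to prove beyond what \P already satisfies. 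Everything else is a routine verification that each of the three constraint families restricts to a sub-collection of instances already known to hold, plus the two extra endpoint constraints coming from the prescribed configurations, and I would present it compactly in that constraint-by-constraint form.
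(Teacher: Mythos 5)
The paper states Lemma~\ref{lem:breakCanonical} explicitly \emph{without proof} (``we state this as a lemma without proof''), so there is no argument in the paper to compare against. On its own terms your proof is essentially correct: every internal turn, every adjacency of short edges, and every turn-over-length instance of $\pi$ is literally an instance of the corresponding constraint in \P, and canonicity is exactly what lets you cut at $b$ and $c$ without an edge of \P straddling the seam; appending the pre-/post-edges only re-creates the two boundary checks (turn at $b$, and turn from the pre-edge direction onto the second edge of $\pi$ if the first edge is short), both of which are direction-only and so coincide with what \P already satisfies since the pre-edge is parallel to $ab$; and the optimality transfers by the substitution argument.

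Two small points deserve correction. First, the optimality claim in your opening paragraph overreaches: the first assertion only claims feasibility, and your phrase ``the same endpoints and the same two terminal edges'' misstates what a \ddp ranges over---competitors share the start/end \emph{configurations}, whose pre-/post-edges have length \ll and generally do not coincide with the true edges $ab,cd$ of \P. Second, your account of why the hypothesis ``$\vec{ab},\vec{cd}$ long or normal'' is needed (well-definedness of the pre-edge, a length clash with the first edge of $\pi$) is off the mark, since the pre-edge always has length exactly \ll and is never short; the hypothesis is really needed for the \emph{reverse} substitution. If $ab$ were short, a competitor $\pi'$ feasible from $(b,\vec{ab}/|\vec{ab}|)$ could begin with a short edge (creating two adjacent short edges once spliced against $ab$) or could change the turn-over-length quantity of the short edge $ab$ itself (which ranges over $a^-a$ and the first edge after $b$, and is not controlled by $\pi'$'s own feasibility). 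With $ab,cd$ long or normal those boundary constraints simply do not arise, and the substitution goes through exactly as you describe.
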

Lemma~\ref{lem:breakCanonical} allows one in some cases to speak about a particular ``subtype'' --  a type subsequence, as a path in it's self. This is helpful when eliminating forbidden subsequence $AAAA$ (Section~\ref{sec:elimination}).
\subsection{``Freedom'' of bridges}We defined the bridges so that their vertices have certain freedom similar to (or, actually, exceeding) the freedom of inflection and long edges (Section~\ref{sec:freedom}). Specifically, let $ab$ be a bridge in a \ddp \P, and let $P'$ be the part of \P starting from $b$. For a vector $T$ let $\P(T)$ be the path obtained from \P by rigidly translating $P'$ by $T$, and connecting $a$ to $b'=b+T$ (the new position $b$) (Fig.~\ref{fig:Bfreedom}, left). The next lemma is proved in Appendix~\ref{app:lem:bridge}:
\begin{figure}\centering\includegraphics{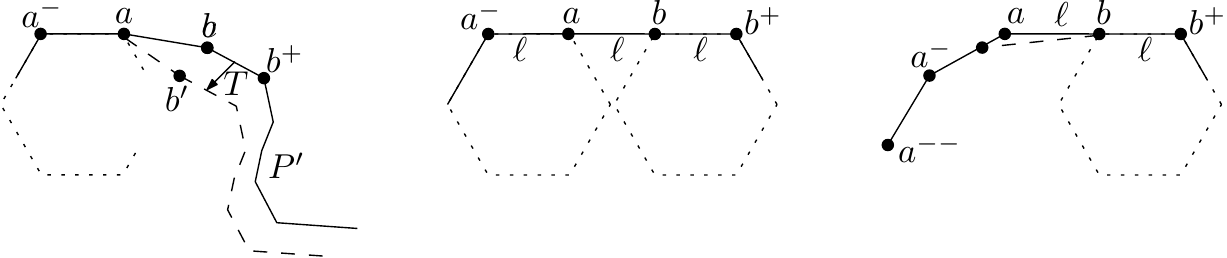}\caption{Left: The modification. The turn constraints are satisfied by definition of a bridge. The length and turn-over-length constraints can break only if $ab$ is normal, $ab'$ is short, and one of the edges adjacent to the bridge is short. Middle: Both $a$ and $b$ were added during canonization. Right: Only $b$ was added during the canonization: if $a^-a$ is short, a shortcut (dashed) is possible.}\label{fig:Bfreedom}\end{figure}
\begin{lemma}\label{lem:bridge}For any $T$ there exists $\eps>0$ such that $\P(\eps T)$ is a feasible discrete curvature-constrained path.\end{lemma}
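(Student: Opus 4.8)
The plan is to work in the canonical form of \P (Lemma~\ref{lem:canoncial}), so that the bridge $ab$ is an edge that can be rotated about $a$, and then to check the three families of constraints in Definition~\ref{def:dccp} one at a time. The crucial observation is that the passage $\P\mapsto\P(\eps T)$ is purely \emph{local}: the part of \P before $a$ is untouched, the part $P'$ from $b$ on is a rigid translate by $\eps T$, and the only edge that is genuinely altered is $ab$, which becomes $ab'$ with $b'=b+\eps T$. Consequently the only turns of $\P(\eps T)$ that differ from those of \P are at $a$ and at $b'$, and the only edge whose length changes is $ab$. In particular, every short non-inflection edge of $\P(\eps T)$ other than (possibly) $ab'$ is a translate of a short non-inflection edge of \P with the same incident directions, so its turn-over-length constraint is inherited; the single exception is the first edge $bb^+$ of $P'$, whose left neighbour changed from $ab$ to $ab'$, and this I would bundle with the length constraint discussed below.

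For the turn constraints, the turns at $a$ and at $b'$ depend continuously on $\eps$ and at $\eps=0$ equal the turns of \P at $a$ and $b$, which are \emph{strictly} less than \th. Indeed, at a bridge endpoint the turn of \P cannot be exactly \th: if the endpoint was introduced during canonization it lies in the interior of an edge of the un-canonized path, so the turn there is $0$; and if it is a genuine vertex of \P, a \th-turn there would, by the maximality of the adjacent arc (together with Lemmas~\ref{lem:ln} and~\ref{lem:infl} to handle the cases where $ab$ is short or long, and the remark that a short inflection edge making angle \th with both neighbours lies on two arcs), pull the bridge edge, or a normal prefix of it, into that arc — contradicting the fact that $ab$ is a bridge. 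Hence for $\eps$ small the turns at $a$ and $b'$ stay below \th, and the turn constraints hold; this is what is meant by ``satisfied by definition of a bridge'' in the caption of Fig.~\ref{fig:Bfreedom}.

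For the length constraint (and the one remaining turn-over-length constraint, at $bb^+$), note $|ab'|\to|ab|$ as $\eps\to 0$: if $ab$ is long it stays long, and if $ab$ is short then $ab'$ is short but retains the neighbours $a^-a$ and $bb^+$, which are non-short in \P by Lemma~\ref{lem:ln}, so nothing breaks. The only dangerous case is $ab$ \emph{normal} with $|ab'|<\ll$; then $ab'$ is short in $\P(\eps T)$ and we need its neighbours $a^-a$, $bb^+$ to be non-short. If both $a$ and $b$ came from canonization then $ab$ lies in the interior of a single (long) edge of \P and the arcs flanking the bridge end with full normal sub-edges along that edge, so $a^-a$ and $bb^+$ are normal and there is nothing to prove. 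The remaining case — a genuine bridge endpoint, say $a$, with $a^-a$ short — is where I expect the real difficulty: one must derive a contradiction with the optimality of \P by a local shortcut around the short edge $a^-a$ (the dashed shortcut in the right panel of Fig.~\ref{fig:Bfreedom}), and the delicate point, as always in Section~\ref{sec:edges}, is to verify that the shortcut is itself feasible — that it creates no adjacent short edges and no turn exceeding \th — which requires exploiting that $a^-a$'s other neighbour is normal (Lemma~\ref{lem:ln}) and, if $a^-a$ is an inflection edge, that it too is adjacent only to non-inflection normal edges (Lemma~\ref{lem:infl}). Once that configuration is ruled out, $\P(\eps T)$ is feasible for all sufficiently small $\eps$, which is the assertion of the lemma.
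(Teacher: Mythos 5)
Your decomposition matches the paper's: the modification is local, so only the turns at $a$ and $b'$ and the length of the bridge can change; the turn constraints follow from the bridge never making a \th-angle with its neighbours; and the length and turn-over-length constraints reduce, after a case split on which of $a,b$ were appended during canonization, to a single ``dangerous'' configuration. So far this is the paper's argument.

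But you leave that last configuration open. You write that when $a$ is a genuine vertex of \P and $a^{-}a$ is short, ``one must derive a contradiction with the optimality of \P by a local shortcut around the short edge $a^{-}a$,'' and you flag the feasibility of that shortcut as ``the real difficulty.'' That is a gap: you do not actually rule out the configuration, and the route you sketch is more work than is needed. The clean observation is that if $a$ is genuine and $ab$ is a normal bridge, then $b$ was appended at the interior of an edge of the \emph{un-canonized} path, so that edge is $ab \cup bb^{+}$ with $bb^{+}$ a normal arc edge; hence it has length at least $2\ll$ and is a \emph{long} edge of \P. A short edge $a^{-}a$ adjacent to a long edge of \P is already excluded by Lemma~\ref{lem:ln}. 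No new shortcut has to be constructed and no fresh feasibility check is required — the dashed shortcut in Fig.~\ref{fig:Bfreedom} is just the one that proves Lemma~\ref{lem:ln}, and that lemma has already done the verification for you. Replace the ``real difficulty'' paragraph with this one-line reduction and the proof closes.

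Two smaller points: you never dispose of the case where both $a$ and $b$ are genuine vertices — it is trivial (then $ab$ would have been absorbed into an arc, so it was never a bridge), but it should be said; and when you ``bundle'' the turn-over-length constraint with the length constraint, the cleaner statement is the paper's: a normal bridge has incident turn $0$ at whichever endpoint was appended, so for small $\eps$ the turn from $a^{-}a$ to $bb^{+}$ across $ab'$ stays at most \th by continuity.
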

\subsection{Bridges are but a few}We append bridges to the list of rare occurrences in a \ddp (the start of the list is in Lemma~\ref{lem:il}). Say that two bridges $ab,cd$ have \e{similar direction} w.r.t.\ \P if they are parallel and the vectors $\vec{ab},\vec{cd}$ point in the same direction. In Appendix~\ref{app:lem:b} we prove the following:
\begin{lemma}\label{lem:b}A \ddp may not have
\begin{enumerate}
\item\label{item:bi}a bridge and an inflection edge
\item\label{item:lb}a long edge and a bridge
\item\label{item:2b}two bridges, unless they have similar direction.
\end{enumerate}\end{lemma}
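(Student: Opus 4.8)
The plan is to follow the template of Lemma~\ref{lem:il}: for each item, assume the two features coexist in a \ddp \P, show that they cannot be adjacent, and then exhibit a small local modification that keeps both endpoints of \P fixed and all constraints satisfied while strictly shortening \P, contradicting optimality. The one genuinely new ingredient is that a bridge enjoys \emph{more} freedom than an inflection or a long edge: by Lemma~\ref{lem:bridge} one may reconnect \P across a bridge after an \emph{arbitrary} small translation of the downstream portion, whereas an inflection edge tolerates only translations into a half-plane (Lemma~\ref{lem:Ifreedom}) and a long edge must first be broken (Lemma~\ref{lem:breakLong}). Non-adjacency is easy: a bridge is never a normal edge, because every normal edge lies in an arc (a two-vertex arc if its turns are both smaller than $\th$, a sub-arc of a longer arc if it is incident to a turn-$\th$ vertex), so every bridge is short or long; and a short or long edge is adjacent only to a normal edge (Lemma~\ref{lem:ln}) while an inflection edge is adjacent only to a non-inflection normal edge (Lemma~\ref{lem:infl}) -- hence a bridge can abut neither an inflection edge, nor a long edge, nor another bridge.

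For the modification, put \P in canonical form (Lemma~\ref{lem:canoncial}) and write the relevant stretch as $x\,y\,[P']\,z\,w$, with $xy$ and $zw$ the two special edges and $y,z$ the endpoints of the (nonempty) intermediate subpath $P'$. Translate $P'$ rigidly by $\eps T$: everything before $x$ and after $w$ stays put, so \P's endpoints do not move; $xy$ becomes $x(y+\eps T)$, $zw$ becomes $(z+\eps T)w$, and the lengths inside $P'$ are unchanged. Breaking a long edge among $xy,zw$ via Lemma~\ref{lem:breakLong} if necessary, feasibility at the displaced endpoints holds for small $\eps$ by the pertinent freedom lemma, \emph{as long as} $T$ lies in the appropriate admissible set -- the whole plane for a bridge or a broken long edge, a half-plane for an inflection edge. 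The length of \P changes by $|\vec{xy}+\eps T|-|\vec{xy}|+|\vec{zw}-\eps T|-|\vec{zw}|$, whose first-order part is $\eps\langle T,\widehat{xy}-\widehat{zw}\rangle$. If $\widehat{xy}\neq\widehat{zw}$, the vectors $T$ for which this is negative form an open half-plane with bounding normal $\widehat{zw}-\widehat{xy}$, and this normal has a nonzero component along $\vec{zw}$; hence the half-plane meets the admissible half-plane of whichever special edge is an inflection edge (the only one constraining the direction of $T$), and a feasible, strictly shorter modification exists. Thus every sub-case is settled except the one in which $xy$ and $zw$ are parallel and co-oriented.

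The parallel co-oriented case is what I expect to be the real obstacle: the first-order term vanishes, and a short second-order computation shows that every translation of $P'$ of the above kind is length-non-decreasing. For item~\ref{item:2b} this is by design, since two bridges with similar direction form precisely the stated exception. For items~\ref{item:bi} and~\ref{item:lb}, I would instead slide $P'$ along the common direction, obtaining a one-parameter family of \emph{equal-length} feasible paths along which length is traded between $xy$ and $zw$; pushing the slide far enough turns the short bridge into a long edge, which, coexisting with the long edge of item~\ref{item:lb} or with the inflection edge of item~\ref{item:bi}, contradicts Lemma~\ref{lem:il} (two long non-inflection edges, respectively a long edge with an inflection edge -- noting that a long bridge cannot itself be an inflection edge, since breaking it and using its inflection-type freedom would already shorten \P, so in item~\ref{item:lb} the bridge may be taken short). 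In the remaining degenerate situation, where the chord of $P'$ is also parallel to $xy$ and $zw$ so that $x,y,z,w$ are collinear, one replaces $x\,[P']\,w$ by the straight segment $xw$, which is strictly shorter and, after checking that no adjacent edge's length or turn-over-length constraint is violated, feasible. That verification, together with the bookkeeping needed to keep the global slides feasible throughout, is the technical core of the lemma and the reason it is relegated to the appendix; the overall strategy, though, exactly parallels Lemma~\ref{lem:il}.
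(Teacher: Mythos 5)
Your overall strategy---first rule out adjacency, then translate the intermediate subpath $P'$ to shorten the path, using the bridge's full-plane freedom, the inflection edge's half-plane freedom, and \texttt{breakLong} for long edges---is the same as the paper's. And your first-order calculation showing that the strict-shortening directions form an open half-plane whose boundary is not parallel to the inflection edge's supporting line, hence meets the admissible half-plane, is correct and matches the mechanism the paper invokes under the phrase ``by the triangle inequality.'' But there are two genuine gaps.

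First, the non-adjacency argument rests on the claim that ``a bridge is never a normal edge,'' and this claim is false. The arcs are identified on the \emph{original} path, before canonization, and a bridge endpoint need not be a vertex of the original path; so your observation that ``every normal edge lies in an arc'' applies to original, vertex-to-vertex normal edges, not to a bridge whose endpoints were added during canonization. Concretely, a length-$3\ll$ long edge of $\P$ whose two endpoints have turn $\th$ is split by the two adjacent arcs into three pieces of length $\ll$, and the middle one is a normal bridge. The paper's adjacency argument (Appendix~\ref{app:lem:b}) therefore does not try to rule out normal bridges; instead it traces each adjacent configuration back to the pre-canonical path and derives a contradiction with Lemma~\ref{lem:ln} or Lemma~\ref{lem:infl} (e.g.\ for item~\ref{item:bi}: $b$ must be original, $ab$ must be normal, hence $a$ was added at canonization inside a long edge $a^-b$, so the original path had a long edge adjacent to an inflection edge). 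You need an argument of this form; the shortcut ``bridges are never normal'' does not hold.

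Second, your treatment of the parallel co-oriented case---slide $P'$ along the common direction ``far enough'' to turn the bridge into a long edge---is not carried out. Lemmas~\ref{lem:bridge}, \ref{lem:Ifreedom}, and \ref{lem:breakLong} only furnish some $\eps>0$ small, and extending to a fixed finite slide requires an induction (each slid path is again a \ddp because the slide is length-preserving, hence the lemmas reapply) that you neither state nor justify; you would also need to verify that during the slide the bridge stays a bridge, the inflection edge stays an inflection edge (it shortens but must not vanish before the bridge becomes long), no new forbidden feature appears, and the final configuration really is one excluded by Lemma~\ref{lem:il}. Your own closing sentence acknowledges this verification as the ``technical core'' you have not done. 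The paper does not take this route at all: for items~\ref{item:bi} and~\ref{item:lb} it performs a single one-sided slide along the inflection (resp.\ long) edge and appeals to the triangle inequality, and for item~\ref{item:2b} the parallel co-oriented case is left as the stated exception. So your parallel-case machinery (including the collinear-collapse step) is an added layer that would need its own justification, which is not supplied.
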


\section{Types of \ddps}\label{sec:typesDDP}We define a set of ``true types'' and a set of ``forbidden subtypes'' (Section~\ref{sec:trueANDforbidden}); using the facts that inflection edges, long edges and bridges are rare (Lemmas~\ref{lem:il} and~\ref{lem:b}) and that consecutive arcs must inflect (Lemmas~\ref{lem:monot} and~\ref{lem:circ}), we show that forbidden subtypes can never appear in a \ddp (Section~\ref{sec:elimination}). It will follow that for any start and end configurations there always exists a \ddp of a true type.
\subsection{True types and Forbidden subtypes}\label{sec:trueANDforbidden}We prove that every \ddp with \e{minimal-length} type is of the types $\cal{T}=\{B,A,AB,BA,AA,ABA,AAA\}$ -- the \e{true} types. This will be shown by considering the set $\cal{F}=\{BB,BAB,AAB,BAA,AAAA\}$ of \e{forbidden subsequences} and demonstrating that if the type has a member of \cal{F} as a subsequence, then either the path can be strictly shortened or modified to a same-length path with a shorter type. More specifically, we show that the forbidden subtypes $\cal{F}\setminus AAAA$ can \e{never} appear in a \ddp, while for the subtype $AAAA$ there always exists a true-type path with the same or smaller length. By symmetry we only need to prove that the subsequences $BB,BAB,AAB,AAAA$ can be excluded.
\subsection{Eliminating forbidden subsequences}\label{sec:elimination}
\subsubsection*{BB}Two adjacent bridges must belong to the same edge of \P (otherwise the bridges are not parallel -- a contradiction to Lemma~\ref{lem:b}(\ref{item:2b})), which is impossible by definition of a bridge.
\subsubsection*{BAB}By Lemma~\ref{lem:b}(\ref{item:2b}), the two bridges must have similar direction; this is possible only if there is an inflection edge between them, contradicting Lemma~\ref{lem:b}(\ref{item:bi}).
\subsubsection*{AAB}Let \ccone, \cctwo be the supporting circles of the two adjacent arcs. There are 3 ways of how the two arcs may connect: (1)~\ccone, \cctwo are flush, sharing a whole edge or part of an edge (Fig.~\ref{fig:AABee}); (2)~they share a single vertex (Fig.~\ref{fig:AABvv}); (3)~an edge of one circle
pivots
on a vertex of the other (Fig.~\ref{fig:AABve}). We consider the 3 cases separately.
\begin{list}{}{}
\item[\e{Edge-edge}]Let $e=\ccone\cap\cctwo$ be the common part of the two arcs. If $e$ is normal (i.e., \ccone, \cctwo share a whole edge), then $e$ is an inflection edge (for otherwise $\ccone=\cctwo$) -- a contradiction to Lemma~\ref{lem:b}(\ref{item:bi}). If $e$ is short, it is either a proper subset of a long edge of the original path or is an inflection edge (Fig.~\ref{fig:AABee}). However, having a long edge and a bridge contradicts Lemma~\ref{lem:b}(\ref{item:lb}), and having an inflection edge a bridge contradicts Lemma~\ref{lem:b}(\ref{item:bi}).
    \begin{figure}
    \begin{minipage}[c]{.45\columnwidth}\centering\includegraphics{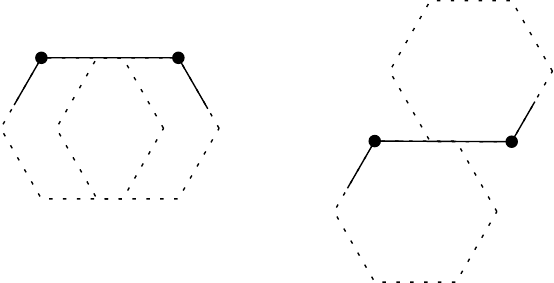}\caption{Edge-edge: the original path had a long edge.}\label{fig:AABee}\end{minipage}\hfill
    \begin{minipage}[c]{.45\columnwidth}\centering\includegraphics{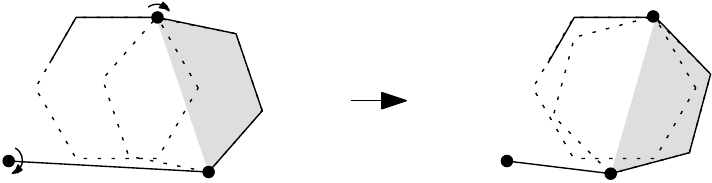}\caption{Vertex-vertex: rotate the second arc as a rigid body; rotate the bridge to keep the path connected.}\label{fig:AABvv}\end{minipage}
    \end{figure}
\item[\e{Vertex-vertex}]
It is easy to see that the second arc has the local freedom to rotate, in either direction, around the common vertex of the arcs without violating the feasibility of the path (Fig.~\ref{fig:AABvv}). Due to the rotation, the other endpoint of the arc may move slightly, which is no problem because it is bridge endpoint (Lemma~\ref{lem:bridge}). Rotating the arc in one of the directions decreases the length of the bridge (and hence the length of the path), unless the bridge belongs to the line through the arc's endpoints, in which case either the last edge of the arc is an inflection edge (contradiction to Lemma~\ref{lem:b}(\ref{item:bi})) or the bridge is inside \cctwo (contradiction to the turn constraint); refer to Fig.~\ref{fig:AABvv-aligned}.
\begin{figure}
\begin{minipage}[c]{0.4\columnwidth}\centering\includegraphics{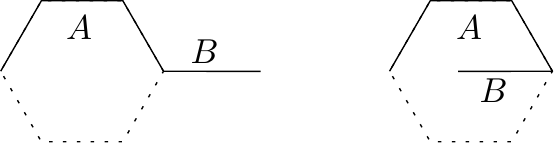}\caption{Bridge aligned with the arc endpoints.}\label{fig:AABvv-aligned}\end{minipage}\hfill
\begin{minipage}[c]{0.5\columnwidth}\centering\includegraphics{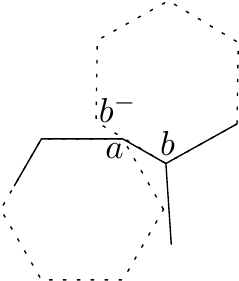}\caption{Vertex-edge: if the turn at $b$ is to the right, the turn over $ab$ is too sharp; if the turn at $b$ is to the left, then $ab$ is an inflection edge.}\label{fig:AABve}\end{minipage}
\end{figure}
\item[\e{Vertex-edge}]Suppose that an edge $b^-b$ of \cctwo rocks on a vertex $a$ of \ccone (Fig.~\ref{fig:AABve}). Then $ab$ is short (for otherwise, we are in the vertex-vertex or edge-edge case), and hence the second arc has more then one edge ($b$ cannot be a terminal vertex -- there is a bridge following the arc). But then $ab$ is an inflection edge (for otherwise, the turn-over-length constraint is violated for it -- the turn at $b$ is already \th). Having an inflection edge and a bridge contradicts Lemma~\ref{lem:b}(\ref{item:bi}).
\end{list}
\subsubsection*{AAAA}We show that a subpath consisting of 4 arcs $A_1A_2A_3A_4$ can either be shortened or transformed to an equal-length path with fewer arcs. Incidentally, \e{all} existing proofs of the structure of smooth Dubins paths go through proving non-optimality of $AAAA$ paths \cite[Lemma~2]{d-cmlca-57}, \cite[Lemma~11]{bcl-spbcp-94}, \cite[Lemma~26]{st-sprsc-91}, \cite[Lemma~7]{rs-opctg-90}; we are no exception. We do case analysis based on the number of inflection edges in an $AAAA$ path.
\paragraph{More than 2 inflection edges}In this case two inflection edges have similar turns -- a contradiction to Lemma~\ref{lem:il}(\ref{item:2i}).
\paragraph{Two inflection edges}Clearly, for any arcs $A_i,A_{i+1},i=1,2,3$, the subpath $A_i\-A_{i+1}$ can contain at most 1 inflection edge. Now, no matter which subpaths contain the two inflection edges, we can always apply Lemma~\ref{lem:monot} to reach a contradiction.
\paragraph{One inflection edge}If the subpath $A_1\-A_2$ has no inflection edge, then, by Lemma~\ref{lem:monot}, neither do $A_2\-A_3$ and $A_3\-A_4$, so there is no inflection edge at all. On the other hand, having the inflection edge in the subpath $A_1\-A_2$ (or, equivalently, in $A_2\-A_1$) contradicts Lemma~\ref{lem:monot} applied to the arcs $A_3,A_2,A_1$.
\paragraph{There are no inflection edges}Similarly to the AAB case above, there are 3 ways of how two consecutive arcs may connect: sharing (part of) an edge, sharing a single vertex, or with an edge of one arc pivoting on a vertex of the other (see Figs.~\ref{fig:AABee}, \ref{fig:AABvv} and~\ref{fig:AABve}). Again similarly to AAB, in the first case the original path (i.e., the path before the canonization) had a long edge, which constradicts Lemma~\ref{lem:monotL}. Still similarly to AAB, in the third case the turn from one arc to the next is too sharp. Thus, all three connections between the consectuive arcs in AAAA are vertex-vertex types, and the edges incident to the connection vertices are normal (Fig.~\ref{fig:AAAA}).
\begin{figure}\centering\includegraphics{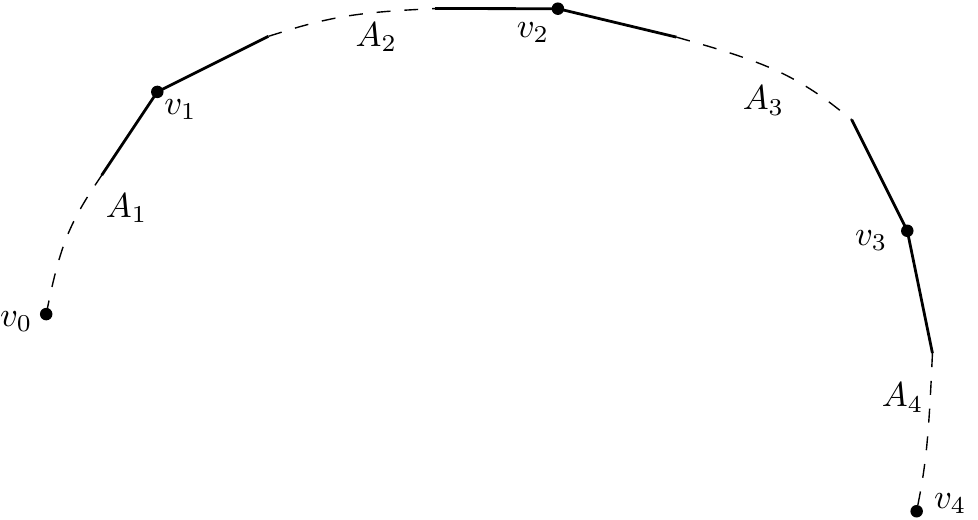}\caption{The arc $A_i,i=1\dots4$ goes from $v_{i-1}$ to $v_i$. The edges incident to $v_1,v_2,v_3$ are normal.}\label{fig:AAAA}\end{figure}

Consider now the subpath $\pi=v_1-\dots-v_4$ consisting of 3 arcs. The edge of $A_1$, incident to $v_1$ is normal; thus, by Lemma~\ref{lem:breakCanonical}, $\pi$ is a \ddp. The turn at $v_1$ is not \th (for otherwise $A_1$ and $A_2$ would have been the same arc); thus, $\pi$ is not flush. By Lemma~\ref{lem:circ}, either $\pi$ can be replaced by a same-length 2-arc path, or $\pi$ can be made flush (without changing length); in the latter case $A_1,A_2$ merge into a single arc. In both cases, the number of arcs in $\pi$ decreases to 2, and the $AAAA$ subpath becomes $AAA$.


\subsection{Structure of a \ddp}\label{sec:main}We are now ready to show that \ddps have the same structure as Dubins paths; in the next section we use this to obtain properties of the (smooth) Dubins paths as a limiting case.
\begin{theorem}\label{dccg}For any configurations \uu, \vv the collection of discrete curvature-constrained \uu-\vv paths contains a path of minimum length that consists of a discrete circular arc followed by a segment followed by a discrete circular arc, or is a sequence of at most 3 discrete circular arcs.\end{theorem}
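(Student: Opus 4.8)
The plan is to assemble Theorem~\ref{dccg} from the machinery developed in Sections~\ref{sec:edges}--\ref{sec:typesDDP}, reducing the statement about the geometric structure of a \ddp to the purely combinatorial statement that its type lies in $\cal{T}=\{B,A,AB,BA,AA,ABA,AAA\}$. First I would fix configurations \uu, \vv and observe that at least one \ddp exists between them (the collection of discrete curvature-constrained \uu-\vv paths is nonempty and has a shortest element; this is implicit in the definition of a \ddp). Among all shortest paths, I would pick one whose type has \emph{minimal length}, which is legitimate because every type is a finite word and the length of a type is a nonnegative integer; call this path \P and put it in canonical form using Lemma~\ref{lem:canoncial}, so that all bridge endpoints are vertices of \P.

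The core step is to show that the type of \P belongs to \cal{T}. I would argue by contradiction: if the type is not in \cal{T}, then, since every type is a word over $\{A,B\}$, it must contain one of the forbidden subsequences in $\cal{F}=\{BB,BAB,AAB,BAA,AAAA\}$ as a (contiguous) subsequence --- indeed, a quick case check on words over a two-letter alphabet shows that the only words avoiding all of $BB$, $BAB$, $AAB$, $BAA$, $AAAA$ (and, by the left-right symmetry that reverses the path, their reverses) are exactly the seven words of \cal{T}. Then I would invoke the elimination results of Section~\ref{sec:elimination}: the subtypes $BB$, $BAB$, $AAB$ (and their mirror images $BAA$, etc.) can \emph{never} occur in a \ddp, contradicting the existence of \P outright; and if $AAAA$ occurs, the $AAAA$-elimination argument produces a \uu-\vv path of the same length whose type is strictly shorter (the four arcs collapse to three, or the path is shortened), contradicting the minimality of the type length of \P. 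Either way we reach a contradiction, so the type of \P is in \cal{T}.

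Finally I would translate "type in \cal{T}" into the geometric conclusion. A type of the form $A$, $AA$, or $AAA$ means \P is a concatenation of one, two, or three discrete circular arcs --- a sequence of at most $3$ arcs, as required. A type $AB$ or $BA$ is a single arc together with a single segment, which is a degenerate instance of arc-segment-arc (one of the arcs being empty, or equivalently a sequence of at most $3$ arcs where we treat the bridge as a zero-curvature arc --- but it is cleaner simply to note that $AB$ and $BA$ are subcases of $ABA$). The type $B$ is a single segment, again a degenerate $ABA$ (or a single trivial arc). And $ABA$ is exactly arc-segment-arc. Hence \P has the claimed structure, and since \P was chosen among the minimum-length paths, the collection of \uu-\vv paths contains a minimum-length path of the desired form.

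The main obstacle I anticipate is not in this assembly --- which is essentially bookkeeping --- but rather that it rests entirely on the hard lemmas imported from the appendices, in particular Lemma~\ref{lem:monot} (and its long-edge variant Lemma~\ref{lem:monotL}) and Lemma~\ref{lem:circ}, which drive the $AAAA$ elimination, together with the "rarity" Lemmas~\ref{lem:il} and~\ref{lem:b} that kill the bridge- and inflection-heavy subtypes. A secondary subtlety to handle carefully is the interaction between typification and canonization: the type must be read off the \emph{original} path (where some forbidden configurations, like a long edge underlying an edge-edge arc junction, are visible) before vertices are added, so when I split \P at vertices and apply Lemma~\ref{lem:breakCanonical} to treat a sub-type as a \ddp in its own right (as in the $AAAA$ case), I must check the hypothesis that the bounding edges are long or normal. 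Apart from that, the proof is a short deduction.
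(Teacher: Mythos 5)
Your combinatorial assembly is correct in outline and matches the paper's Section~\ref{sec:trueANDforbidden}--\ref{sec:elimination} exactly: the words over $\{A,B\}$ avoiding all of $BB$, $BAB$, $AAB$, $BAA$, $AAAA$ really are $\cal{T}=\{B,A,AB,BA,AA,ABA,AAA\}$, and choosing a shortest path of minimal type length correctly handles the $AAAA$ case, where the elimination argument only produces an equal-length path with a shorter type rather than a strictly shorter path. Your reading of what ``type in $\cal{T}$'' means geometrically is also fine.

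However, there is a genuine gap right at the start, and it is precisely the step the paper's own proof is devoted to. You write that the collection of discrete curvature-constrained \uu-\vv paths ``has a shortest element; this is implicit in the definition of a \ddp.'' It is not. The definition of a \ddp is conditional --- a \ddp is \emph{by definition} a shortest such path \emph{if one exists} --- and it in no way guarantees that the infimum of lengths is attained. The paper flags this explicitly before giving its proof: the local shortcutting modifications of Sections~\ref{sec:edges}--\ref{sec:typesDDP} only show that a suboptimal path can be improved; they do not rule out an infinite descending sequence of improvements never reaching a minimizer. What remains to be shown, and what constitutes essentially the whole body of the paper's proof, is a compactness argument: one bounds the number of vertices of any candidate minimizer (using Lemma~\ref{lem:ln} to show a path with $n$ vertices has length at least roughly $(n/2)\ll$), restricts attention to the compact set of $n$-vertex configurations with bounded length in $\mathbb{R}^{2n}$, and verifies that the turn, length, and turn-over-length constraints each define a closed subset, so the length functional attains its minimum. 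Without this step your ``pick one whose type has minimal length'' is picking from a possibly empty set, and the theorem is not proved. The remainder of your argument is sound given existence, so the fix is simply to supply this compactness argument rather than to assert existence by fiat.
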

\begin{proof}We only have to prove that the shortest path exists; by forbidden subsequence elimination (Section~\ref{sec:elimination}), the path must be of true type. (Note that the local shortenings that we exhibited above, do not prove existence of a shortest path; potentially it could
happen
that our local modifications continue ad infinitum, never reaching the minimum.)

For $n\in\mathbb N$ let $l_n$ be the infimum of the lengths of \uu-\vv paths with at most $n$ vertices (if no such path exists, $l_n=\infty$). A \uu-\vv path cannot be shorter than $l^*=\inf_{n\in\mathbb{N}}l_n$. We will first prove that there exists $n_0\in\mathbb{N}$ such that any path with more than $n_0$ vertices is strictly longer than $l^*$; this implies that $l^*=\inf_{n\le n_0}l_n=l_{n_0}$. We next prove that for any $n$, $l_n$ attains its minimum, i.e., that there exists a path with at most $n$ vertices whose length is exactly $l_n$. It will follow that there exists a path with length $l^*=l_{n_0}$.

Clearly, there exists at least one turn-constrained \uu-\vv path with finite length; let the length be $L$. Any path with $n$ vertices has at least $\lfloor \frac{n-3}{2} \rfloor$ long or normal edges (there are $n-1$ edges, at most 2 of them are terminal; the rest have at most one short per long or normal by Lemma~\ref{lem:ln}). Thus the length of the path is at least $\lfloor (n-3)/2 \rfloor \ll$. In particular, a path with more than $n_0=\lceil 2L / \ll \rceil + 3$ vertices will be strictly longer than~$L$.

Any path with at most $n$ vertices may be specified by a point in $(x_i,y_i)_{i=1}^n\in\mathbb{R}^{2n}$. The set of all paths of length at most $L$ is a compact subset of $\mathbb R^{2n}$. Let \cal{P} be the set of all discrete curvature-constrained \uu-\vv paths; since the length is a continuous function on \cal P, the existence of the shortest path will follow as soon as we can show that \cal P is closed. We do it by examining the constraints one-by-one:
\begin{list}{}{}
\item[\e{Turn constraints}]Each constraint can be written as a \e{non-strict} inequality; hence turn-constrained paths form a closed set.
\item[\e{Length constraints}]The condition that $i$th edge of the path is short defines an open set $E_i\subset\mathbb R^{2n}$; thus the constraint that the $i$th and the $(i+1)$st edges are not both short, excludes an open set $E_i\cap E_{i+1}$. Since there are finitely many (pairs of consecutive) edges, the length constraints for all of them exclude an open set, leaving the feasible set closed.
\item[\e{Turn-over-length constraints}]Let $T_i\subset E_i$ be the set of points for which the turn-over-length constraint is \e{not} satisfied for the $i$th edge; since $T_i$ is open, the set of infeasible points (w.r.t.\ the turn-over-length constraints) is open, meaning that its complement---the feasible set---is closed.
\end{list}
\end{proof}

\section{Smooth curvature-constrained paths}\label{sec:limit}From now on, we turn our attention to the smooth (``usual'', non-discrete) bounded-curvature paths. Following Dubins \cite{d-cmlca-57} seminal work, we will not require that paths have curvature defined at every point; instead, we consider paths whose \e{mean} curvature is bounded everywhere (see the first paragraph in Dubins paper for a discussion of this technicality). Specifically, let \g be a smooth (continuously differentiable) path, parameterized by its arclength. For any $t$ in the domain of \g, let $\g'(t)$ denote the derivative of \g at $t$ -- the unit vector tangent to \g at $t$. We require that the average curvature of \g is at most 1, which is formalized as follows: for any $t<s<t+\pi$ let $\a{s}{t}$ denote the angle between the directions of $\g'(s)$ and $\g'(t)$; the bound on the average curvature means that $\a{s}{t}\le s-t$. (In other words, $\g'$, viewed as mapping from the domain of \g to the unit circle, is 1-Lipschitz.) We call such paths \e{admissible}.

In this section we show how to discretize an admissible path into a discrete curvature-constrained path; the finer the discretization, the closer the discrete path is to the smooth one. This allows us to obtain the properties of shortest smooth paths as a limit of the properties of their discrete counterparts, which we established above (Theorem~\ref{dccg}). We discretize a smooth path by splitting it into 3 parts (with the first and the last part possibly empty). The middle part is discretized by choosing regularly spaced points, while the first and the last part are whatever remains from the regular discretization; the first and the last part are made equal-length. We prove that the resulting polygonal path has at most 2 short edges (the first and the last edge), and that it is a feasible discrete curvature-constrained path. The details follow.
\subsection{Preliminary lemmas}The correctness of our discretization hinges on the next two lemmas.
\begin{lemma}\label{lem:length}For any $t$ and $t<s<t+\pi$ in the domain of \g, $|\g(s)-\g(t)|\ge2\sin\frac{s-t}2$.\end{lemma}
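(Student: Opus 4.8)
The plan is to reduce the statement to a one-variable inequality about the length of a chord of a unit-speed curve whose tangent turns slowly. Fix $t$ and write $\delta = s - t \in (0,\pi)$. Since \g is parameterized by arclength, $\g(s) - \g(t) = \int_t^s \g'(\tau)\,d\tau$, so the quantity to bound is $\bigl|\int_t^s \g'(\tau)\,d\tau\bigr|$. The admissibility hypothesis says the tangent direction $\g'$ is $1$-Lipschitz as a map into the unit circle; equivalently, $\a{\tau}{t}\le \tau - t$ for all $\tau\in(t,s)$. Intuitively the chord is shortest when the tangent turns as fast as allowed, i.e.\ at unit rate, which corresponds to \g being an arc of the unit circle; for a unit-circle arc of angular length $\delta$ the chord length is exactly $2\sin\frac{\delta}{2}$. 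So the claim is that this extreme case is indeed the worst case.

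\textbf{Key steps.} First I would project onto a well-chosen direction. Choose the unit vector $w$ that bisects the tangent turn, namely the direction obtained by rotating $\g'(t)$ by angle $\delta/2$ toward $\g'(s)$ (or, more symmetrically, take $w$ to be a fixed direction and later optimize; the cleanest choice is the bisector of $\g'(t)$ and $\g'(s)$ when the total turn is at most $\delta<\pi$, but since we only control each $\a{\tau}{t}$ it is even cleaner to use the direction of $\g'(t+\delta/2)$-type midpoint — I will use the fixed direction $w = R_{\delta/2}\,\g'(t)$ where $R_\alpha$ is rotation by $\alpha$). Then $|\g(s)-\g(t)| \ge \langle \g(s)-\g(t),\, w\rangle = \int_t^s \langle \g'(\tau), w\rangle\, d\tau = \int_t^s \cos\bigl(\angle(\g'(\tau), w)\bigr)\, d\tau$. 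The second step is to bound the angle $\angle(\g'(\tau), w)$ pointwise. For $\tau$ in the first half $(t, t+\delta/2)$, since $\a{\tau}{t}\le \tau-t \le \delta/2$, the tangent $\g'(\tau)$ lies within angle $\tau - t$ of $\g'(t)$, hence within angle $|\delta/2 - (\tau-t)|$... one has to be slightly careful here because the turn need not be monotone, but the bound $\a{\tau}{t}\le\tau-t$ together with the triangle inequality on angles gives $\angle(\g'(\tau),w)\le \angle(\g'(\tau),\g'(t)) + \angle(\g'(t),w) \le (\tau - t) + \delta/2$, which is too weak; the right move is to also use $\a{s}{\tau}\le s-\tau$ and combine the two halves symmetrically. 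Concretely, on the first half use $\angle(\g'(\tau),w)\ge \delta/2 - (\tau-t)$ is false in general, so instead bound $\cos\angle(\g'(\tau),w)\ge \cos(\text{something})$ by noting $|\angle(\g'(\tau),w)|$ — I would instead split the integral at $t+\delta/2$ and on $(t,t+\delta/2)$ write $\g'(\tau)$ relative to $\g'(t)$ and pick $w$ so that $\angle(\g'(t),w)=\delta/2$; then $\cos\angle(\g'(\tau),w)\ge\cos(\delta/2)$ whenever $\angle(\g'(\tau),\g'(t))\le\delta/2$... The honest approach: bound $\angle(\g'(\tau), w)$ above by $\delta/2$ using that $\g'(\tau)$ is within $\tau - t\le\delta/2$ of $\g'(t)$ and $w$ is within $\delta/2$ of $\g'(t)$ — no, that gives $\delta$. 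Let me state the actual clean argument: choose $w$ as the direction of the chord of a reference unit-circle arc; the correct and standard trick is: $\langle\g'(\tau),w\rangle = \cos\bigl(\phi(\tau)\bigr)$ where $\phi(\tau)$ is the signed angle from $w$ to $\g'(\tau)$, and $|\phi(\tau)|\le |\tau - (t+\delta/2)|$ is NOT implied; what IS implied, taking $w=R_{\delta/2}\g'(t)$, is $\phi(\tau) \in [\,(\tau-t)-\delta/2\ \text{-ish}\,]$ only under monotonicity. So the genuine key step is to handle non-monotone turning.

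\textbf{Main obstacle.} The hard part is exactly that the tangent direction is only constrained to be $1$-Lipschitz, not monotone, so $\g'$ could oscillate. I expect the clean way around this is the following: the hypothesis $\a{s}{t}\le s-t$ for \emph{all} subintervals is equivalent to saying that there is a $1$-Lipschitz real lift $\psi$ of the tangent angle with $\psi$ having derivative in $[-1,1]$ a.e.; then $\langle\g'(\tau),w\rangle=\cos(\psi(\tau)-\psi_w)$ for a suitable constant $\psi_w$, and choosing $\psi_w$ to be the midrange $\tfrac12(\sup_{[t,s]}\psi+\inf_{[t,s]}\psi)$, or more simply $\tfrac12(\psi(t)+\psi(s))$ after noting $|\psi(\tau)-\tfrac12(\psi(t)+\psi(s))|\le \tfrac12|\psi(t)-\psi(s)| + \tfrac12(\text{oscillation slack})$. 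Since $\psi$ is $1$-Lipschitz on an interval of length $\delta$, for any $\tau\in[t,s]$ we get $|\psi(\tau)-\tfrac12(\psi(t)+\psi(s))|\le \max(|\tau-t|,|s-\tau|) - \tfrac12|t-s|\cdot(\dots)$; the clean bound is $|\psi(\tau) - c| \le \delta/2$ for $c:=\tfrac12(\psi(t)+\psi(s))$ fails, but $|\psi(\tau)-\psi(t)|\le \tau - t$ and $|\psi(\tau)-\psi(s)|\le s-\tau$ give, by averaging, $|\psi(\tau) - c| \le \tfrac12(\tau-t) + \tfrac12(s-\tau) + \tfrac12|\psi(t)-\psi(s)|$... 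I will instead just integrate directly: $\int_t^s\cos(\psi(\tau)-c)\,d\tau$ and use that $\cos$ is concave on $[-\pi/2,\pi/2]$ together with $|\psi(\tau)-c|\le$ the value attained by the extremal straight-line $\psi$ — formally, by a rearrangement/variational argument, the integral is minimized when $\psi$ is affine with slope $\pm1$, giving $\int_t^s\cos(\tau - (t+\delta/2))\,d\tau = 2\sin(\delta/2)$, which is the claim. So the main obstacle is justifying this "worst case is the unit-circle arc" reduction rigorously; I would do it either via the midrange-projection bound $|\psi(\tau)-c|\le\delta/2$ on $[t,s]$ (which does hold with $c$ the midrange value, since $\psi$ is $1$-Lipschitz on an interval of length $\delta$, so its range has length at most $\delta$), then $\cos(\psi(\tau)-c)\ge\cos(\delta/2)\ge$ ... still not tight enough pointwise but tight enough after integrating only if I keep the $\tau$-dependence — hence the final, correct route is the direct variational/rearrangement argument, and that is the step to get right.
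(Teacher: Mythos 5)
You have correctly diagnosed the difficulty but not resolved it. With $c$ the midrange of the $1$\textrm-Lipschitz lift $\psi$ on $[t,s]$ you do get $|\psi(\tau)-c|\le\frac{\delta}{2}$ and hence $\int_t^s\cos(\psi(\tau)-c)\,d\tau\ge\delta\cos\frac{\delta}{2}$; but $\delta\cos\frac{\delta}{2}\le 2\sin\frac{\delta}{2}$ on $(0,\pi)$, so any fixed projection direction together with a pointwise angle bound produces an estimate that runs the wrong way, as you observed. At that point the proposal invokes a ``direct variational/rearrangement argument'' without carrying it out, and this is not a routine rearrangement: the functional $\psi\mapsto\bigl|\int_t^s e^{i\psi(\tau)}\,d\tau\bigr|$ is not monotone under symmetric-decreasing rearrangement, and showing that affine $\psi$ with slope $\pm 1$ is the extremizer among $1$\textrm-Lipschitz lifts requires a genuine bang-bang or compactness-and-perturbation argument. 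As written, that step is a gap and the proof is incomplete.

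The paper sidesteps the whole issue by differentiating the \emph{squared} chord length in $s$, which amounts to projecting on a direction that \emph{moves with} $s$. With $t=0$, $\g(0)=0$, one has $(|\g(s)|^2)'=2\g(s)\cdot\g'(s)=2\int_0^s\g'(\tau)\cdot\g'(s)\,d\tau=2\int_0^s\cos\a{s}{\tau}\,d\tau\ge 2\int_0^s\cos(s-\tau)\,d\tau=2\sin s$, and integrating gives $|\g(s)|^2\ge 2(1-\cos s)=4\sin^2\frac{s}{2}$. Two features rescue this where your fixed-direction attempt failed: the projection direction is $\g'(s)$ itself, and the pointwise comparison $\cos\a{s}{\tau}\ge\cos(s-\tau)$ is used inside a derivative for each fixed $s$, so the subsequent integration in $s$ absorbs precisely the slack that defeated the pointwise estimate. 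If you wish to rescue your plan, this differential-inequality device is the missing ingredient; otherwise the rearrangement claim must actually be proved.
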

\begin{proof}Assume w.l.o.g.\ that $t=0$ and that $\g(t)$ is at the origin ($\g(0)=(0,0)$); the lemma is then equivalent to $|\g(s)|\ge2\sin\frac s2$ (Fig.~\ref{fig:angle}, left). We prove this by lower-bounding the derivative of $|\g(s)|^2$:
\[(|\g(s)|^2)'=2\g(s)\cdot\g'(s)=2\int_0^s\g'(\tau)\cdot\g'(s)\,d\tau=2\int_0^s\cos\a{s}{\tau}\,d\tau\ge2\int_0^s\cos(s-\tau)\,d\tau=2\sin s\]
Hence $|\g(s)|^2\ge2(1-\cos s)=4\sin^2\frac s2$.\end{proof}
\begin{figure}\centering\includegraphics{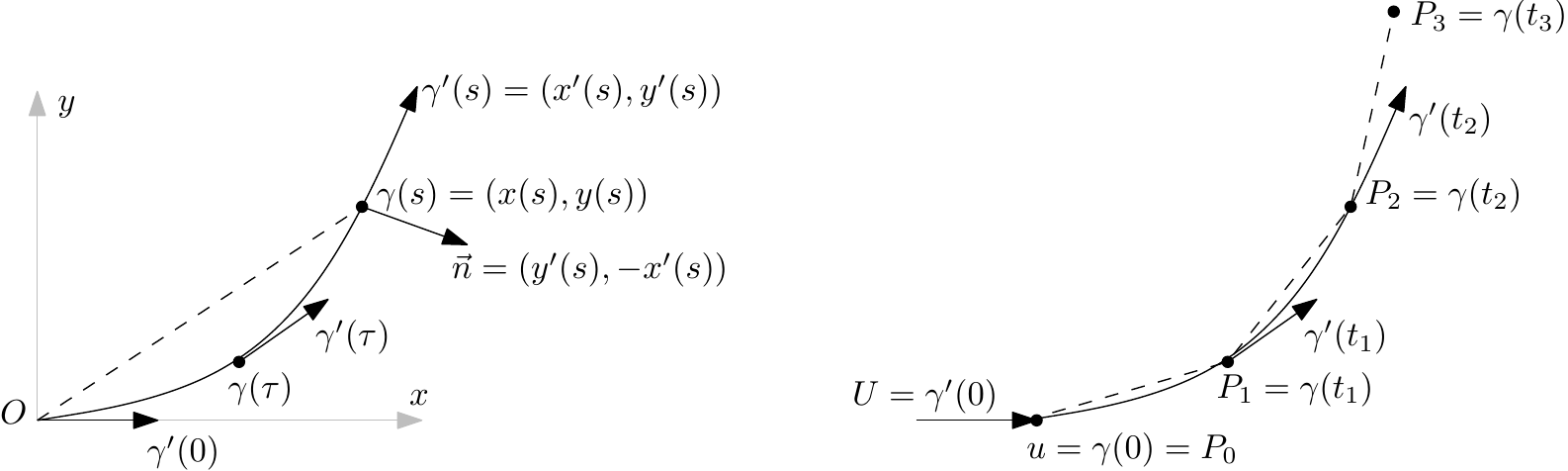}\caption{Left: $|O\g(s)|\ge2\sin\frac s2$, the slope of $O\g(s)$ is at most $\tan\frac s2$. Right: $P_1P_2$ and $P_2P_3$ are non-short. The turn from $P_1P_2$ onto $\g'(t_2)$ is at most \th/2; such is also the turn from $\g'(t_2)$ onto $P_2P_3$. If $\delta>0$, then $t_1=\delta/2<\th/2$, and $P_0P_1$ is short; in this case, the angles between $P_0P_1$ and the tangents to \g at $P_0,P_1$ are at most $\th/4$ each.}\label{fig:angle}\end{figure}
\begin{lemma}\label{lem:angle}The angle between $\g'(t)$ and the ray $\g(t)\g(s)$ is at most $\frac{s-t}2$.\end{lemma}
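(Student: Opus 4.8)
\textbf{Proof plan for Lemma~\ref{lem:angle}.}
The plan is to mimic the computation in the proof of Lemma~\ref{lem:length}, but now tracking the \emph{direction} of the chord $\g(t)\g(s)$ rather than just its length. As before, assume without loss of generality that $t=0$, that $\g(0)$ sits at the origin, and additionally rotate the picture so that $\g'(0)$ points along the positive $x$-axis. Then the admissibility condition $\a{\tau}{0}\le\tau$ says precisely that for every $\tau\in(0,s)$ the tangent $\g'(\tau)$ makes an angle of absolute value at most $\tau$ with the positive $x$-axis; in particular its $y$-component has absolute value at most $\sin\tau$ and its $x$-component is at least $\cos\tau>0$ (using $\tau<s<\pi$). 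I want to bound the slope of the chord, i.e.\ the ratio $\g_2(s)/\g_1(s)$ where $\g=(\g_1,\g_2)$.

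The key step is to estimate the two coordinates of $\g(s)=\int_0^s\g'(\tau)\,d\tau$ separately:
\[
|\g_2(s)|=\Bigl|\int_0^s \g_2'(\tau)\,d\tau\Bigr|\le\int_0^s|\g_2'(\tau)|\,d\tau\le\int_0^s\sin\tau\,d\tau=1-\cos s,
\]
while
\[
\g_1(s)=\int_0^s \g_1'(\tau)\,d\tau\ge\int_0^s\cos\tau\,d\tau=\sin s>0.
\]
Therefore the chord $\g(0)\g(s)$ makes with the positive $x$-axis (the direction of $\g'(0)$) an angle whose tangent is at most $(1-\cos s)/\sin s=\tan\frac s2$, which gives the claimed bound $\frac s2$ once I check that the relevant angle is in the range where $\tan$ is increasing. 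Since $0<s<\pi$ we have $0<\frac s2<\frac\pi2$, so $\tan$ is monotone there and the inequality on tangents transfers to the inequality on angles; this is exactly the statement illustrated on the left of Fig.~\ref{fig:angle}.

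The only real subtlety — and the step I would be most careful about — is bounding the \emph{signed} angle (i.e.\ ruling out that the chord swings around to the far side), not merely $|\g_2(s)|$: I need $\g_1(s)>0$ so that the chord genuinely lies in the right half-plane and the arctangent computation is legitimate, and this is where the hypothesis $s-t<\pi$ is essential (it guarantees $\cos\tau>0$ throughout, hence $\g_1(s)\ge\sin s>0$). Once that is in hand, combining $|\g_2(s)|\le 1-\cos s$ with $\g_1(s)\ge\sin s$ and the half-angle identity $\frac{1-\cos s}{\sin s}=\tan\frac s2$ finishes the proof. Note this simultaneously re-derives Lemma~\ref{lem:length}, since $\g_1(s)^2+\g_2(s)^2\ge\g_1(s)^2\ge\sin^2 s$ is weaker, but the two coordinate bounds together give $|\g(s)|\ge\sin s\,/\cos\frac s2=2\sin\frac s2$ as well, so the two lemmas are really one computation.
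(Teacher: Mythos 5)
Your argument has a genuine gap in the $y$-coordinate bound. Writing $\g(s)=(x(s),y(s))$ with $\g(0)$ at the origin and $\g'(0)=(1,0)$, you claim $|y'(\tau)|\le\sin\tau$ and hence $|y(s)|\le 1-\cos s$. This fails once $\tau>\pi/2$: admissibility only forces the tangent to make an angle $|\phi(\tau)|\le\tau$ with the $x$-axis, and for $\pi/2<\tau<\pi$ one may have $|\phi(\tau)|=\pi/2$, so $|y'(\tau)|=1>\sin\tau$. A concrete counterexample to the integrated bound: let $\g$ traverse a quarter of the unit circle and then go straight up, i.e.\ $\g(\tau)=(\sin\tau,\,1-\cos\tau)$ for $0\le\tau\le\pi/2$ and $\g(\tau)=(1,\,1+\tau-\pi/2)$ for $\pi/2\le\tau<\pi$. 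This path is admissible, yet at $s=3\pi/4$ we get $y(s)=1+\pi/4\approx1.785$, exceeding $1-\cos(3\pi/4)\approx1.707$. (The lemma's conclusion still holds there because $x(s)=1$ is comfortably larger than $\sin s$.) The underlying problem is that the extremal path for the lower bound on $x$ and the extremal path for the upper bound on $y$ are \emph{different}, so the slope bound cannot be manufactured as a quotient of two independently optimized coordinate bounds.

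The paper avoids this by never bounding $y$ at all: it bounds the \emph{derivative} of the slope $k=y/x$ via the Wronskian form $k'=(xy'-x'y)/x^2$. It keeps your estimate $x(s)\ge\sin s$ but replaces the $y$-estimate with the Wronskian bound $x(s)y'(s)-y(s)x'(s)\le 1-\cos s$, obtained by integrating $x'(\tau)y'(s)-y'(\tau)x'(s)\le\sin(s-\tau)$ over $\tau\in[0,s]$. That pointwise inequality comes from the Lipschitz condition measured against the tangent at the \emph{variable endpoint} $s$ rather than at $0$, and so stays sharp for all $s<\pi$ — exactly where your origin-anchored bound on $y'$ degrades. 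Combining the two gives $k'\le(\tan\frac s2)'$ and hence $k(s)\le\tan\frac s2$. As written, your proof does not go through; the $y$-coordinate bound needs to be replaced by a Wronskian-type estimate of this kind.
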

\begin{proof}Assume again that $t=0$ and that $\g(0)=O$; also assume w.l.o.g.\ that $\g'(0)$ is horizontal ($\g'(0)=(1,0)$). Let $\g(s)=(x(s),y(s))$, and let $k(s)=y(s)/x(s)$ be the slope of the ray $O\g(s)$ (Fig.~\ref{fig:angle}, left). Then the lemma is equivalent to $k(s)\le\tan\frac s2$, which we will prove by showing that $k'\le\frac{1-\cos s}{\sin^2s}=\frac1{2\cos^2(s/2)}=(\tan\frac s2)'$.

First of all, for any $s$ we have $x'(s)=(x'(s),y'(s))\cdot(1,0)=\g'(s)\cdot\g'(0)=\cos\a{s}{0}\ge\cos s$; thus, $x(s)\ge\sin s$. Next, consider the unit vector $\vec n=(y'(s),-x'(s))$, orthogonal to $\g'(s)$ (Fig.~\ref{fig:angle}, left). By definition, for any $\tau<s$ the angle between $\g'(\tau)$ and $\g'(s)$ is at most $s-\tau$; hence, the angle between $\g'(\tau)$ and $\vec n$ is at least $\pi/2-(s-\tau)$, from whence $(x'(\tau),y'(\tau))\cdot\vec n\le\cos(\pi/2-(s-\tau))$, or $x'(\tau)y'(s)-y'(\tau)x'(s)\le\sin(s-\tau)$. Integrating over $\tau$ from 0 to $s$, we get $x(s)y'(s)-y(s)x'(s)\le1-\cos s$. Combining this with $x(s)\ge\sin s$, we obtain what we need: $k'=(y/x)'=\frac{y'x-x'y}{x^2}\le\frac{1-\cos s}{\sin^2 s}$.
\end{proof}
\subsection{Discretization}Let $\th<|\gamma|$ be a number, and suppose that $|\gamma|=m\th+\delta$ for some $m\in\mathbb{N},\delta<\th$ (i.e., $m=\lfloor|\gamma|/\th\rfloor$ and $\delta$ is the remainder).
\begin{definition}\label{def:ndisc}The \th-discretization of $\gamma$ is a polygonal path $P$ with vertices $P_i=\gamma(t_i),i=0\dots k$ such that 
\begin{itemize}
\item $t_0=0$, $t_k=|\gamma|$.
\item If $\delta=0$, then $k=m=|\gamma|/\th$, and $t_i-t_{i-1} = \th$ for all $i=1\dots m$.
\item If $\delta>0$, then $k=m+2$, $t_i-t_{i-1} = \th$ for $i=1 \dots m+1$, and $t_1-t_0=t_{m+2}-t_{m+1}=\delta / 2$.
\end{itemize}
\end{definition}
Let $\ll=2\sin\frac\th2$.
By Lemma~\ref{lem:length}, whenever $t_i-t_{i-1}=\th$, we have that $P_iP_{i-1}\ge\ll$. Thus, all edges of $P$, except possibly for the first and the last one, are non-short. If the first and the last edges are short, the path has at least 3 edges, so the two short edges are non-adjacent. Thus, length constraints hold for~$P$.

We now examine the turns of $P$. Instead of directly looking at the angles between edges of $P$, we look at the angle that an edge makes at its endpoints with tangents to \g; we bound these angles using Lemma~\ref{lem:angle}. Specifically, let $P_{i-1}P_i,P_iP_{i+1}$ be two non-short edges. By Lemma~\ref{lem:angle}, the angle between $P_{i-1}P_i$ and the tangent $\g'(t_i)$ to \g at $P(i)$ is at most \th/2 (Fig.~\ref{fig:angle}, right); symmetrically, the angle between the tangent and $P_iP_{i+1}$ is also at most \th/2. Thus, the turn of $P$ at $P_i$ is at most \th, and the angle constraint is satisfied. Next, if $P_0P_1$ is a short edge, then $t_1-t_0<\delta/2<\th/2$, and by Lemma~\ref{lem:length}, the turn from $\g'(0)$ onto the edge is at most \th/4. Similarly, the turn from $P_0P_1$ to $\g'(t_1)$ is at most \th/4; finally the turn from $\g'(t_1)$ onto $P_1P_2$ is at most \th/2. Thus, overall, the turn of $P$ from $\g'(0)$ onto the first non-short edge, $P_1P_2$, is at most $\th/4+\th/4+\th/2=\th$.

Let $\cal U = (\gamma(0),\gamma(0)'),\cal V = (\gamma(|\gamma|),\gamma(|\gamma|)')$. It follows from the above that $P$ is a $\cal U\-\cal V$ discrete curvature-constrained path with parameters $\th$ and $\ll$, i.e., a path whose turns are constrained by $\th$ and lengths -- by $\ll$ (recall Definition~\ref{def:dccp} for the exact meaning of the parameters in the constraints).

Finally, for $n\in\mathbb N$ let $l_n=2\sin\frac\pi n$ be the side length of the regular $n$-gon inscribed in unit circle; let $\th_n=2\pi/n$. Suppose that $n$ is large enough so that $\th_n<|\gamma|$, 
and let $P_n$ be the $\th_n$-discretization of $\gamma$. The preceding discussion implies the following:
\begin{lemma}\label{lem:dis}$P_n$ is a \uu-\vv discrete curvature-constrained path with parameters $\th_n$ and $l_n$.\end{lemma}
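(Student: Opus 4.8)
The plan is to observe that Lemma~\ref{lem:dis} is essentially a bookkeeping corollary of the entire preceding subsection: all the analytic work has already been done, and what remains is to instantiate the general discretization machinery at the particular parameter sequence $\th_n = 2\pi/n$, $\ll_n = l_n = 2\sin\frac{\pi}{n}$, and check that the hypotheses line up. First I would note that $\th_n = 2\pi/n$ perfectly divides $2\pi$ (it is exactly $2\pi$ divided by the integer $n$) and satisfies $0 \le \th_n \le \pi/2$ once $n \ge 4$, so $\th_n$ is a legitimate value of the turn parameter \th in the sense of Definition~\ref{def:dccp}. Then, since we assume $n$ is large enough that $\th_n < |\g|$, the $\th_n$-discretization $P_n$ of \g is well-defined by Definition~\ref{def:ndisc}, with $m = \lfloor |\g|/\th_n \rfloor$ and $\delta = |\g| - m\th_n < \th_n$.

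Next I would invoke the identity $\ll = 2\sin\frac{\th}{2}$ used throughout the discretization discussion: with $\th = \th_n = 2\pi/n$ this gives $2\sin\frac{\th_n}{2} = 2\sin\frac{\pi}{n} = l_n$, so the length parameter $\ll_n$ produced by the general construction is exactly $l_n$, matching the statement. From here everything is already proved in the text preceding the lemma: by Lemma~\ref{lem:length} every edge $P_{i-1}P_i$ with $t_i - t_{i-1} = \th_n$ has length at least $\ll_n$, so only the first and last edges can be short, and when they are short the path has at least three edges (so the two short edges are non-adjacent), giving the length constraints; by Lemma~\ref{lem:angle} the angle between each non-short edge and the tangent to \g at its endpoints is at most $\th_n/2$, so interior turns are at most $\th_n$, and the special estimate $\th_n/4 + \th_n/4 + \th_n/2 = \th_n$ handles the turn from the pre-edge onto the first non-short edge (and symmetrically at the end), giving the turn and turn-over-length constraints. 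Finally, $\uu = (\g(0), \g'(0))$ and $\vv = (\g(|\g|), \g'(|\g|))$ are exactly the start and end configurations of $P_n$, since $P_n$ begins at $\g(0)$ with first edge direction within $\th_n/2 < \th_n$ of $\g'(0)$ (so the attached pre-edge leaves $P_n$ feasible) and symmetrically at the terminal vertex.

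The only thing that needs a sentence rather than a pointer is that the discretization discussion was phrased for a generic pair $(\th, \ll)$ with $\ll = 2\sin\frac{\th}{2}$, whereas the lemma wants the named sequence $(\th_n, l_n)$; so I would simply say that applying that discussion with $\th := \th_n$ yields precisely a \uu-\vv discrete curvature-constrained path with parameters $\th_n$ and $2\sin\frac{\th_n}{2} = l_n$, which is the claim. I do not expect any real obstacle here — the statement is a summarizing corollary and the proof is one short paragraph of substitution. If there is a subtle point, it is only making sure that for the range of $n$ under consideration we genuinely have $\th_n \le \pi/2$ (needed so that $\th_n$ is an admissible turn bound) and $\th_n < |\g|$ (needed for Definition~\ref{def:ndisc} to apply and for $m \ge 1$); both hold for all sufficiently large $n$, which is exactly the hypothesis of the lemma, so no extra argument is required.
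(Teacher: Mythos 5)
Your proof is correct and matches the paper's approach essentially verbatim: the paper gives no separate argument for Lemma~\ref{lem:dis}, merely stating that ``the preceding discussion implies'' it, and your write-up is exactly that discussion instantiated at $\th := \th_n = 2\pi/n$ and $\ll := 2\sin\frac{\th_n}{2} = l_n$, together with the easy sanity checks that $\th_n$ divides $2\pi$, satisfies $\th_n \le \pi/2$ for $n \ge 4$, and $\th_n < |\g|$ for $n$ large. Nothing more is needed.
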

\subsection{Dubins paths properties}We are now ready to give new proofs of Dubins result. Let \uu, \vv be two arbitrary configurations, and let $\mathbb C$ be the collection of \uu-\vv paths with bounded mean curvature. Say that a path $\gamma\in\mathbb C$ is \e{Dubins-type} if it consists of a circular arc followed by a segment followed by a circular arc, or is a sequence of at most 3 circular arcs.
\begin{theorem}[\cite{d-cmlca-57}]\label{dubins}$\mathbb C$ contains a Dubins-type path of minimum length.\end{theorem}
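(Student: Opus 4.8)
Fix the two configurations \uu, \vv (now with unit direction vectors, as is appropriate for the smooth setting). For each large $n$ we work with the discrete model with parameters $\th_n = 2\pi/n$ and $\ll_n = l_n = 2\sin\frac{\pi}{n}$; note $\ll_n = 2\sin\frac{\th_n}{2}$, exactly the relation used in the discretization section, and $\ll_n \to 0$, $\th_n \to 0$ as $n\to\infty$. By Theorem~\ref{dccg}, for each such $n$ there is a \ddp $P_n$ from \uu\ to \vv\ (with these parameters) whose type belongs to $\cal{T}=\{B,A,AB,BA,AA,ABA,AAA\}$. By Lemma~\ref{lem:dis}, discretizing \emph{any} admissible \uu-\vv path gives a feasible discrete path with parameters $\th_n, l_n$ whose length tends to the length of the smooth path (since the chords $2\sin\frac{\th_n}{2}$ approximate the arclength $\th_n$ with ratio $\to 1$, and there are $O(1/\th_n)$ of them plus two short leftover edges of total length $O(\th_n)$); hence $\operatorname{len}(P_n)$ is bounded above, uniformly in $n$, by (a quantity tending to) $\inf_{\gamma\in\mathbb{C}}\operatorname{len}(\gamma)$.

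\textbf{Next I would extract a convergent subsequence.} Since $\operatorname{len}(P_n)$ is uniformly bounded, each $P_n$ lies in a fixed compact region of the plane, and — crucially — the type is drawn from the \emph{finite} set $\cal{T}$, so by pigeonhole infinitely many $P_n$ share one fixed type $\tau\in\cal{T}$. A path of type $\tau$ is determined by a bounded number of real parameters (the turning angles swept by its one or two arcs, the length of its bridge if present, and the placement — but the endpoint configurations \uu,\vv\ are fixed, which pins most of these down). Passing to a further subsequence, these finitely many parameters converge; the limit object $\gamma^*$ is then a concatenation of at most three genuine (unit-radius, since $\ll_n/\th_n\to1$ forces the discrete circles to converge to the unit circle) circular arcs and at most one straight segment, arranged according to $\tau$ — i.e., $\gamma^*$ is Dubins-type. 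One must check that $\gamma^*$ is admissible (mean curvature $\le 1$): each discrete arc has turn exactly $\th_n$ per normal edge of chord-length $2\sin\frac{\th_n}{2}$, so its ``average curvature'' is $\th_n / (2\sin\frac{\th_n}{2}) \to 1$ from above — in the limit the arcs have curvature exactly $1$, and bridges have curvature $0$, so $\gamma^*\in\mathbb{C}$; and $\gamma^*$ starts at \uu\ and ends at \vv\ by continuity of the endpoint/tangent maps.

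\textbf{Finally I would verify optimality.} By lower semicontinuity of length under the convergence we set up (the length of a type-$\tau$ path is a continuous function of its finitely many parameters, and for arcs the discrete length $\ll_n \cdot(\text{number of edges})$ converges to the arc's arclength), $\operatorname{len}(\gamma^*) \le \liminf_n \operatorname{len}(P_n) \le \inf_{\gamma\in\mathbb{C}}\operatorname{len}(\gamma)$, where the last inequality is the upper bound from the discretization of an arbitrary competitor. Hence $\gamma^*$ is a minimum-length \uu-\vv path of bounded mean curvature, and it is Dubins-type, proving the theorem.

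\textbf{The main obstacle} is the convergence/compactness step: making precise the topology in which ``$P_n \to \gamma^*$'' and checking that limits of discrete arcs are true unit-circular arcs (not, say, degenerate points or arcs of the wrong radius) and that limits of bridges are segments, while simultaneously controlling the two leftover short edges at the ends so they vanish in the limit and the endpoint configurations are respected exactly. The finiteness of $\cal{T}$ does the heavy lifting — it reduces an a priori infinite-dimensional limit to a finite-dimensional one — but one still has to handle the bookkeeping of which arc absorbs which turning angle, rule out ``collapse'' of an arc to zero length in a way that changes the type in the limit (acceptable, since dropping to a shorter true type is still Dubins-type), and confirm that the curvature bound passes to the limit in the right direction (the discrete average curvature $\th_n/\ll_n$ exceeds $1$ for finite $n$ but tends to $1$, so no admissibility is lost).
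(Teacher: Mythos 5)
Your proposal follows the same argument as the paper's proof of Theorem~\ref{dubins}: take a \ddp $\tau_n$ for parameters $\th_n,\ll_n$, invoke Theorem~\ref{dccg} and the finiteness of the true-type set to pass to a subsequence of constant type, let the discrete arcs converge to unit circular arcs (and the bridge, if any, to a segment) to get a Dubins-type limit $\tau^*$, and establish optimality from the sandwich $|\tau_n|\le|P_n|\le|\g|$ where $P_n$ is the $\th_n$-discretization of an arbitrary competitor $\g$. You spell out the compactness, admissibility and non-collapse details that the paper compresses into one sentence, but the method is the same.
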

\begin{proof}Let $\tau_n$ be a \uu-\vv \ddp with parameters $\th_n$ and $l_n$. By Theorem~\ref{dccg} there are only finitely many types of \ddps; thus, we may assume, possibly changing to a subsequence, that all paths $\tau_n$ have the same type. Suppose, for instance that $\tau_n$ are of type $AAA$ (the other cases are similar). 
As $n\rightarrow\infty$ each of the discrete arcs converges to a circular arc.
Denote the limit curve, consisting of the 3 circular arcs, by $\tau^*$.

Let $\g\ne\tau^*$ be another \uu-\vv path with bounded mean curvature. For the sequence $P_n$ of $\th_n$-discretizations of \g we have $|\tau_n|\le|P_n|\le|\g|$. Taking the limit, we obtain $|\tau^*|\le|\g|$, which proves that $\tau^*$ is a curve of minimum length.\end{proof}


\section{Conclusion}\label{sec:discussion}We studied a discrete model of curvature-constrained motion. We chose one particular way of defining the discrete motion; many other versions are possible. Also, even within our framework, the definition of \ddp may be
modified
in many ways (the pre- and post-edges $\u\u',\v\v'$ may be mandated to be edges of \P, all edges may be required to have length at least \ll, the turn constraints may be imposed differently, etc.); some ways might lead to more natural (and possibly, shorter) definitions than ours. Our choice of the definition was prompted merely by technical details -- we found it easier to allow short edges etc.; since short edges disappear in the limit $\ll\rightarrow0$, the length and turn constraints at terminal edges are not a significant factor.

In any case, we admit that there may exist other definitions of discrete curvature-constrained motion; perhaps, the only objective requirements for the model are that smooth paths can be discretized, and that the model contains smooth paths as a limiting case of the discretization. This allows one to obtain results concerning curvature-constrained motion via the limiting argument similar to ours. As one example of such a result we mention the theorem that a curvature-constrained path must have length at least $\pi$ in order to be able to enter the interior of the unit circles tangent to the path's initial configuration. The theorem was proved using ``continuous'' methods in several prior works (\cite[Proposition~6]{d-cmlca-57}, \cite{m-pmcl-61,g-pdeppc-64,r-ca-70}, \cite[Lemma~2]{gunt-cfdp-04}); the discretize-and-take-the-limit technique may give yet another proof (of course, we do not claim to have invented the technique in this paper -- it has been used for centuries, e.g., for isoperimetric problems.)

\paragraph{Acknowledgements}We thank Sergey Bereg, Stefan Foldes, Irina Kostitsyna and Joe Mitchell for discussions.
{{{\bibliographystyle{abbrv}\bibliography{awkward}}}
\appendix
\section*{Appendix}
\section{Flight paths}\label{app:flightpaths}
\begin{figure}[h!]\centering\includegraphics[width=.8\columnwidth]{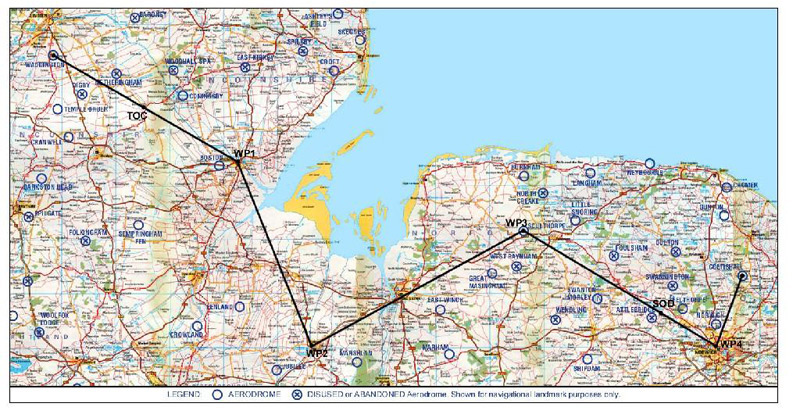}\\\includegraphics[width=.35\columnwidth]{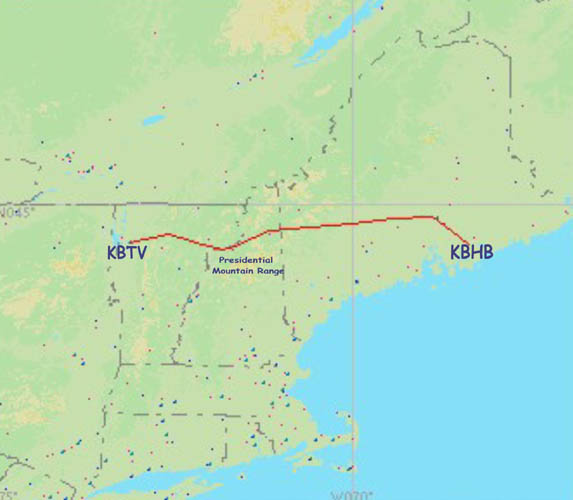}\quad\includegraphics[width=.35\columnwidth]{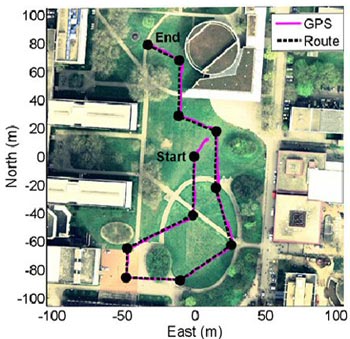}
\caption{Images from avsim.com, flightsim.com, spyzone.com}\label{fig:flighplans}\end{figure}

\section{Proof of Lemma~\ref{lem:monot}}\label{app:lem:monot}
If $X$ and $Y$ are ``flush'', i.e., if the last edge of $Y$ has the same supporting line as the first edge of $Y$, then the connection between the arcs is a long edge (Fig.~\ref{fig:monot}, left). In this case, Lemma~\ref{lem:il}(\ref{item:li}) implies that there can be no inflection edge anywhere in the path, in particular -- in $Y\-Z$, so we are done. Thus we will assume that $X$ and $Y$ intersect in a vertex (Fig.~\ref{fig:monot}, middle).

Let $b=X\cap Y$ be the common vertex of $X$ and $Y$, and suppose that $Y\-Z$ has an inflection edge $cd$. Since $b$ is the first vertex of $Y\-Z$, and an inflection edge in a subpath cannot be its first edge (an inflection edge is an internal edge), we have that $b\ne c$. Let $P_{bc}$ denote the subpath between $b$ and~$c$. Since arcs do not have inflection edges, $P_{bc}$ is a subpath of $Y$ (possibly, $P_{bc}=Y$ --- if $cd$ is not an edge of $Y$; in that case, however, $cd$ must be an edge of $Z$, like in Fig.~\ref{fig:monot}, middle).
\begin{figure}\centering\includegraphics{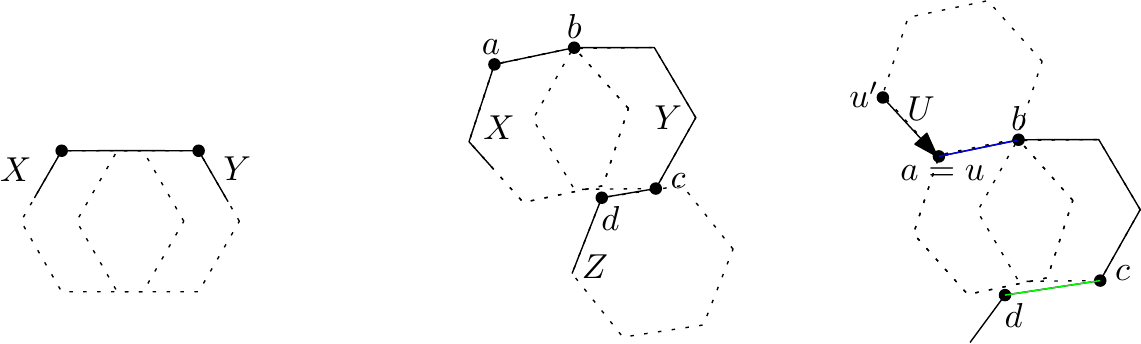}\caption{Left: If $X,Y$ are flush, there is a long edge. Middle: $b=X\cap Y$. $b\ne c$. $c$ is a vertex of $Y$, $d$ is a vertex of $Z$. $cd$ is aligned either with an edge of $Y$ or with an edge of $Z$ (or both --- $Y$ and $Z$ are flush). Right: If $a$ has no angular freedom to rotate \ccw, then $ab$ is the second inflection edge in the path $u'u\textrm-\P$; apply the argument of Lemma~\ref{lem:il}(\ref{item:2i}) to the path.}\label{fig:monot}\end{figure}

Let $a$ be the vertex (of $X$) preceding $b$, and let $P_{ac}$ denote the subpath between $a$ and $c$. Suppose that at $b$ (and hence also at all internal vertices of $X\-Y$) the path turns to the right. We would like to use an angular freedom at $a$ to rotate $P_{ac}$ \ccw around $a$. This is always possible, except when $a=u$ and the turn from the pre-edge onto $ab$ is to the left and is exactly \th (Fig.~\ref{fig:monot}, right). In the latter case, however, we can shorten \P by sliding $P_{bc}$ either along $ba$ or along $cd$ as in the proof of Lemma~\ref{lem:il}(\ref{item:2i}) -- this is, essentially, an application of the lemma to the path obtained by appending the pre-edge to \P (the appended path has 2 inflection edges -- $ab$ and $cd$). Thus in what follows we will assume that $P_{ac}$ can be rotated around~$a$.

\begin{wrapfigure}{r}{.2\columnwidth}\centering\includegraphics{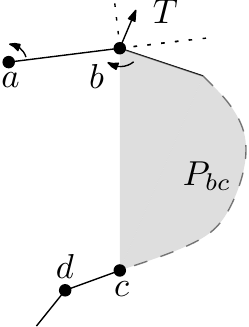}\end{wrapfigure}
Assume that $c$ is at the origin and that $cb$ is vertical and points up. We will do different local modifications depending on the direction of $cd$. All modifications will move $P_{bc}$ rigidly -- either rotating it around $b$ or translating. The rotation is no problem at $b$ -- there is a rotational freedom in any direction. In translating $P_{bc}$ we will ensure that the translation vector $T$, when applied to $b$, is to the left of the directed line $ab$ and makes an obtuse angle with $ab$ -- this way $ab$ will be rotated \ccw and also will increase length (so that the length and turn-over-length constraints continue hold). Some of our modifications rotate $ac$ \ccw about~$a$, and one rotates $P_{ac}$ \ccw about~$a$.

First of all, since $X\-Y$ is a right-turning path, both $ab$ and $cd$ are to the right of $cb$; thus, $cd$ cannot live in the first or fourth quadrant. If $cd$ is in the third quadrant, we simply rotate $P_{bc}$ \cw. This will locally move $c$ in the $-x$ direction, which is admissible for $dc$ (Definition~\ref{def:admissible}) and also shortens the edge. So from now on we assume that $cd$ is in the second quadrant. We separately consider the cases of $abc$ being non-obtuse and obtuse.

If $\angle abc$ is non-obtuse (Fig.~\ref{fig:monotModifIIacute}), drop the perpendicular $cc'$ from $c$ to (the supporting line of) $ab$. There are two possibilities of where $cd$ lies w.r.t.\ $cc'$:
\begin{list}{}{}
\item[\e{$cd$ belongs to the (closure of the) wedge $bcc'$ (Fig.~\ref{fig:monotModifIIacute}, left)}]Slide $P_{bc}$ along $cd$ towards $d$ by a small vector $T$. Since $cd$ is to the right of $cc'$, we have that $bb(T)$ is to the right of $bb'$, where $b(T)=b+T$ and $bb'$ is parallel to $cc'$ (and hence perpendicular to $ab$); thus, the modification does not shorten $ab$, and the length and turn-over-length constraints are satisfied. Since $cd$ is to the left of $cb$, $bb(T)$ is to the left of the vertical ray starting at $b$, and hence $abb(T)$ has a left turn at $b$, which means that the turn at $a$ decreases, so the turn constraints are also satisfied for the modified path.
\item[\e{$cd$ is strictly outside the wedge $bcc'$ (Fig.~\ref{fig:monotModifIIacute}, right)}]Rotate $ab$ about $a$ \ccw, translating $P_{bc}$ (parallel to itself, without rotation) -- locally this means that $P_{bc}$ is translated in the direction $cc'$, which is admissible for $cd$ by the assumption that $cc'$ is to the right of $cd$. To see that the length of $cd$ decreases note that the angle $\angle dcc'$ is acute -- this is because both $cd$ and $cc'$ live in the same (the third) quadrant.
\end{list}
\begin{figure}\centering\includegraphics{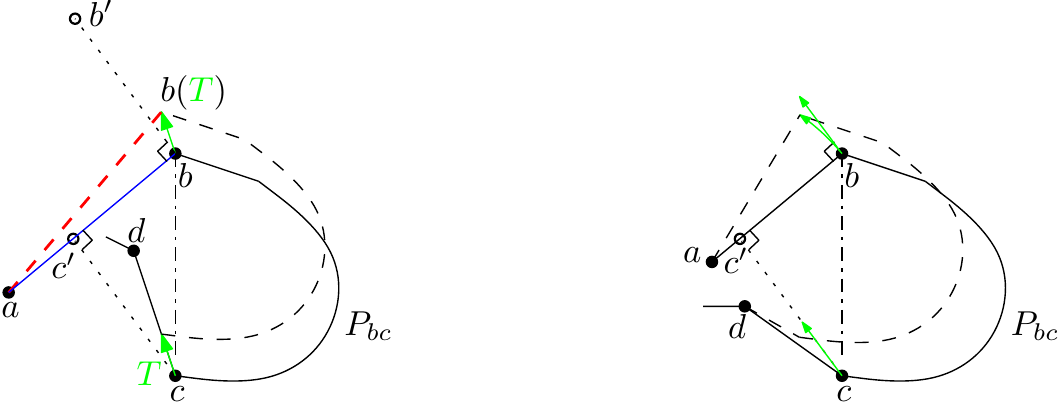}\caption{$cc'\bot ab$. (The figure is schematic: e.g., $ab$ is too long.) If $\angle bcd\le\angle bcc'$ (left), take $T=\eps\vec{cd}$; the modification replaces {\color{blue}{blue}}+{\color{green}{green}} by {\color{red}{red}}. If $\angle bcd>\angle bcc'$ (right), rotate $ba$ around~$a$; this locally moves $P_{bc}$ in the direction $cc'$ -- admissible for $cd$ and shortening $cd$.}\label{fig:monotModifIIacute}\end{figure}

If $\angle abc$ is obtuse (Fig.~\ref{fig:monotModifIIobtuse}), draw the ray $cc^*$ parallel to $ab$. Again, there are two possibilities of where $cd$ lies w.r.t.\ $cc^*$:
\begin{list}{}{}
\item[\e{$cd$ belongs to the (closure of the) wedge $bcc^*$ (Fig.~\ref{fig:monotModifIIobtuse}, left)}]Rotate $ab$ about $a$ \ccw, translating $P_{bc}$ (parallel to itself, without rotation) -- locally this means that $P_{bc}$ is translated in the direction $cc'$ normal to $ab$, which is admissible for $cd$ because $cd$ is to the left of $cb$ (since $cd$ is in the third quadrant) while $cc'$ is to the right of $bc$ (since $\angle abc$ is obtuse). To see that the length of $cd$ decreases note that the angle $\angle dcc'$ is acute -- this is because $cd$ is to the right of $cc^*$ (since $\angle bcd\le\angle bcc^*$).
\item[\e{$cd$ is strictly outside the wedge $bcc^*$ (Fig.~\ref{fig:monotModifIIacute}, right)}]This is the hardest case as no single transformation from the usual suspects---rotation around $a,b$, shift along $cd$---works alone (either the path becomes infeasible or lengthens). We thus carefully apply two transformations in succession.

Specifically, let $l$ be the supporting line of $cd$; we fix $l$ and \e{do not} move it as $cd$ changes during our transformations (in fact, after \e{both} our transformations are applied, $cd$ gets back aligned with $l$). Our first transformation is a rotation of $P_{bc}$ \cw around $b$ by a small angle \pb; since we rotate \cw, the turn from any edge of $P_{bc}$ to $l$ decreases by \pb, and since $cd$ is in the second quadrant, the rotation moves $c$ below $l$. (Note the difference: the path's turn angle at $c$ \e{increases} by some unspecified amount -- the turn direction is not admissible for $cd$; still, the turn onto the line $l$, which stays put and is not moved with the modification, \e{decreases} by exactly \pb.) Our second transformation is a rotation of $P_{ac}$ \ccw around $a$; we rotate until $c$ comes up to $l$. Let \pa be the angle of the rotation.

Let us see what happened with the path after the two transformations. The turns at all vertices except $a,b,c$ stayed the same. At $a$ and $b$ we had the freedom, so the turn constraints there are not violated; at $c$ the turn decreased by $\pb-\pa$. Claim~\ref{claim:2rotations} below assures that $\pb>\pa$, and so the turn constraints are satisfied. Now, the only edge that changed the length is $cd$, but it is an inflection edge, so the turn-over-length constrains are irrelevant for it, and the length constraints are satisfied because an inflection edge is adjacent to normal edges (Lemma~\ref{lem:infl}). Thus, the transformed path stays feasible. In addition, Claim~\ref{claim:2rotations} shows that our mission is complete -- the transformations make $cd$ shorter.
\end{list}
\begin{figure}\centering\includegraphics{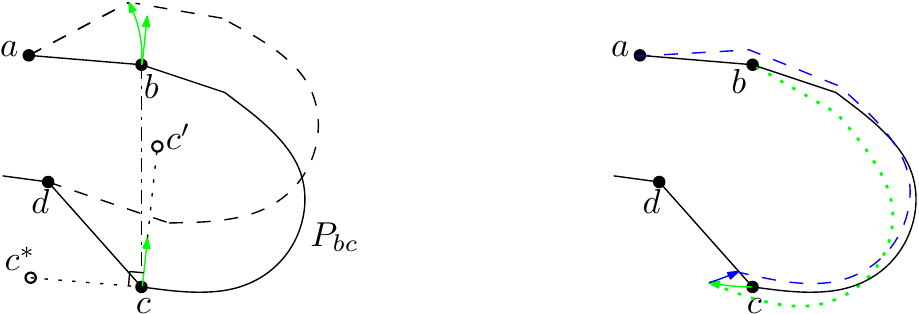}\caption{$cc^*\||ab$. If $\angle bcd\le\angle bcc^*$ (left), rotate $ba$ around~$a$; this locally moves $P_{bc}$ in the direction $cc'$ -- admissible for $cd$ and shortening $cd$. If $\angle bcd>\angle bcc'$ (right), first rotate $P_{bc}$ around $b$ and then rotate $P_{ac}$ \ccw around $a$ restoring the direction of $cd$.}\label{fig:monotModifIIobtuse}\end{figure}
This finishes proof Lemma~\ref{lem:monot}.\qed
\begin{claim}\label{claim:2rotations}$|c_2d|<|cd|$, where $c_2$ is the final position of $c$. Also, $\pb>\pa$.\end{claim}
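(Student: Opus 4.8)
\textbf{Proof proposal for Claim~\ref{claim:2rotations}.}

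The plan is to set up coordinates as in the surrounding text ($c$ at the origin, $cb$ vertical pointing up, $l$ the fixed supporting line of $cd$ lying in the second quadrant), and to track the single point $c$ through the two rotations while everything else moves rigidly with either $P_{bc}$ or $P_{ac}$. The key observation is that $a$, $b$, and the line $l$ are the only ``anchors'': $b$ is fixed through the first rotation, $a$ is fixed through the second, and $l$ is fixed throughout. So $c$ first travels along a small circular arc of radius $|bc|$ centered at $b$ (moving \e{below} $l$, since $cd$ is in the second quadrant and we rotate \cw), and then along a circular arc of radius $|ac|$ centered at $a$ until it returns to $l$ at the new position $c_2$. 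I would first argue the \emph{direction} claims: the \cw rotation about $b$ decreases the turn from $cb$ (hence from every edge of $P_{bc}$, which is rigid) onto $l$ by exactly \pb; the \ccw rotation about $a$ then restores $c$ to $l$; and since the configuration $abc$ is being ``straightened'' (the turn at $b$ between $ab$ and $bc$ is obtuse, and $P_{ac}$ rotates \ccw while $P_{bc}$ rotated \cw, i.e.\ in opposite senses), the net effect is that $c$ ends up strictly closer to the foot of the relevant perpendicular, shortening $cd$.

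For the quantitative heart — that $\pb>\pa$ and $|c_2d|<|cd|$ — I would compare the two circular arcs that $c$ traverses. After the first rotation $c$ sits at some point $c_1$ below $l$ at (chordal) distance $|bc|\cdot 2\sin(\pb/2)$ from the origin; the second rotation swings $c_1$ about $a$ back up to $l$, through angle \pa. The crucial inequality $|ac|>|bc|$ (equivalently, in triangle $abc$ the side opposite the obtuse angle at $b$ is the longest) forces the radius of the second arc to exceed that of the first, so to cover the same ``vertical drop'' below and back to $l$ one needs a \emph{smaller} angular sweep: $\pa<\pb$. Making this precise is essentially a monotonicity statement about $r\mapsto$ (angle needed to move a point a fixed signed distance relative to a fixed line, along a circle of radius $r$ centered at a fixed point), for which I would write the displacement of $c$ perpendicular to $l$ as a function of the rotation angle and radius, differentiate, and read off the sign; alternatively a direct chord/sagitta comparison in the two isoceles triangles $\triangle b c c_1$ and $\triangle a c_1 c_2$ does the job. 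Once $\pa<\pb$ is in hand, the length drop $|c_2d|<|cd|$ follows because $c$ has been displaced \emph{toward} $d$ along $l$: the first rotation pushed $c$ off $l$ on the side toward which $cd$ points (second quadrant), and the second rotation, sweeping a flatter arc of larger radius, brings it back onto $l$ strictly past its original position in the direction of $d$ — here I would just compute the net horizontal shift of $c$ along $l$ and check its sign using $|ac|>|bc|$ together with the obtuseness of $\angle abc$.

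I expect the main obstacle to be bookkeeping the signs and the exact geometry of ``which side of $l$'' and ``which direction along $l$'' $c$ moves, since the two rotations are about different centers and in opposite orientations, and the inequality is genuinely tight in degenerate configurations (e.g.\ as $\angle abc\to\pi/2$ or as $a,b,c$ become collinear). To keep this robust I would prove everything for \emph{infinitesimal} \pb first — i.e.\ show the derivative of $|cd|$ with respect to \pb (after re-optimizing \pa to keep $c\in l$) is strictly negative and that $d\pa/d\pb<1$ at $\pb=0$ — which reduces both assertions to a single comparison of the two perpendicular-displacement rates, $|bc|\cos\alpha_b$ versus $|ac|\cos\alpha_a$ where $\alpha_b,\alpha_a$ are the angles the radii $bc$, $ac$ make with $l$; the obtuse-angle hypothesis at $b$ is exactly what guarantees these cosines have the right signs and that $|ac|\cos\alpha_a>|bc|\cos\alpha_b$. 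Since all the modifications in the lemma are only claimed for sufficiently small \eps (here small \pb), this infinitesimal version suffices, and it sidesteps the messy global trigonometry.
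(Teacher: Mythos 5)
Your plan is a valid alternative to the paper's argument and is genuinely different in method. The paper proves $\pa<\pb$ synthetically: it first uses the law of cosines in triangles $abc$ and $abc_1$ (with $|bc|=|bc_1|$ and $\angle abc_1=\angle abc-\pb$) to get $|ac_2|=|ac_1|<|ac|$, then places the foot of the perpendicular from $a$ to $l$ strictly on the $d$-side of $c$ (using both the obtuse angle at $b$ and the second-quadrant position of $cd$) to conclude $c_2$ lies between $c$ and $d$, and finally invokes an inscribed-angle/power-of-a-point argument (the circle through $b,c,c_1$, which for $\pb\to 0$ degenerates to the circle with diameter $bc$, has $a$ strictly outside) to get $\angle c_1ac<\angle c_1bc=\pb$. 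You instead differentiate at $\pb=0$, which is legitimate here because the whole construction is only claimed for small $\pb$; and your final comparison $|ac|\cos\alpha_a>|bc|\cos\alpha_b$ is exactly the paper's fact that the foot of the perpendicular from $a$ is farther from $c$ along $l$ than the foot from $b$ — the two approaches meet at the same projection inequality. Two caveats you should repair before counting this as a finished proof. First, the sentence claiming that $|ac|>|bc|$ alone ``forces \dots a smaller angular sweep'' is not correct as stated: a larger rotation radius does not by itself give a smaller angle for a fixed perpendicular drop, because the relevant rate is the radius times the cosine of the angle the radius makes with $l$; you correct this later with the $|ac|\cos\alpha_a$ vs.\ $|bc|\cos\alpha_b$ comparison, so you should simply delete the radius-only framing rather than present it as a first step. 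Second, you assert but do not derive that obtuseness of $\angle abc$ yields $|ac|\cos\alpha_a>|bc|\cos\alpha_b$; the cleanest route is to note that the difference equals the projection of $\vec{ba}$ onto $l$, and to check its sign using both $\angle abc>\pi/2$ and the fact that $cd$ lies in the second quadrant — this is precisely where the paper's chain ``$a'$ left of $b'$ left of $c$'' earns its keep, and you should spell out the same computation (you need both hypotheses, not obtuseness alone) to also pin down the direction of the net tangential displacement of $c$ and hence the sign in $|c_2d|<|cd|$.
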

\begin{proof}Refer to Fig.~\ref{fig:2rotations}. Let $a'$ be the foot of the perpendicular dropped onto $cd$ form $a$. We claim that $a'$ is to the left of $c$. Indeed, let $b'$ be the foot of the perpendicular dropped onto $cd$ form $b$. Since $a$ is to the left of $b$, $a'$ is to the left of $b'$. But $b'$ is to the left of $c$ because $\angle abc>\frac\pi2$.
\begin{figure}\centering\includegraphics{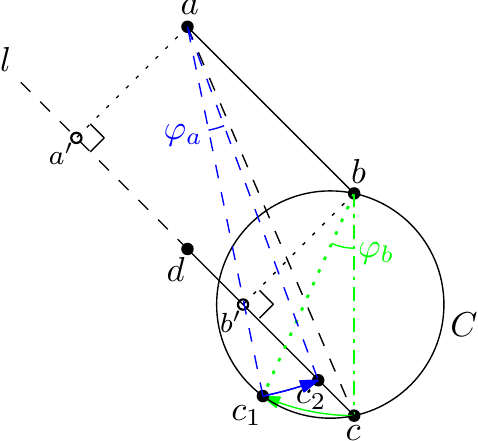}\caption{$a'$ is to the left of $b'$ to the left of~$c$. $\angle abc>\angle abc_1\Longrightarrow|ac|>|ac_1|=|ac_2|\Longrightarrow|c_2d|<|cd|$. $\pa=\angle c_1ac_2<\angle c_1ac<\angle c_1bc=\pb$.}\label{fig:2rotations}\end{figure}

Let $c_1,c_2$ be the images of $c$ after the first and the second transformation resp. For the triangles $abc,abc_1$ we have $|bc|=|bc_1|,\angle abc_1=\angle abc-\pb$, so by the cosine theorem $|ac_1|<|ac|$. Since $|ac_2|=|ac_1|$, we have $|ac_2|<|ac|$. Since the foot of the perpendicular from $a$ onto $cd$ lies to the left of $c$, we have that $c_2$ is closer to $d$ than $c$ (proving the first statement of the claim) and also that $\pa=\angle c_1ac_2<\angle c_1ac$.

Let $C$ be the circle through $b,c,c_1$. By continuity, for small enough \pb, $a$ is outside $C$ --- for $\pb=0$, the circle is the diametrical circle of $bc$, and $a$ is strictly outside it because $\angle bcd>\angle bcc^*$ (where $cc^*$ is the line parallel to $ab$ --- recall that we are still in this case). Thus, $\angle c_1ac<\angle c_1bc=\pb$.\end{proof}

\section{Proof of Lemma \ref{lem:circ}}\label{app:lem:circ}
Let \P be the path, and $X,Y,Z$ be the arcs of \P (Fig.~\ref{fig:circ}, left). Suppose that \P turns to the right at its internal vertices. If the turn from the last edge of $Z$ onto the post-edge is to the left, then \P augmented with the post-edge has an inflection edge (the last edge of $Z$), and we can shorten \P as in the proof of Lemma~\ref{lem:monot} --- this is, essentially, an application of the lemma to the augmented path consisting of $Y,Z$, and the post-edge treated as a separate arc. Thus, we will assume that the turn onto the post-edge is to the right; similarly, the turn from the pre-edge is to the right. By assumption the turn at $u$ is strictly less than $\th$.
\begin{figure}\centering\includegraphics{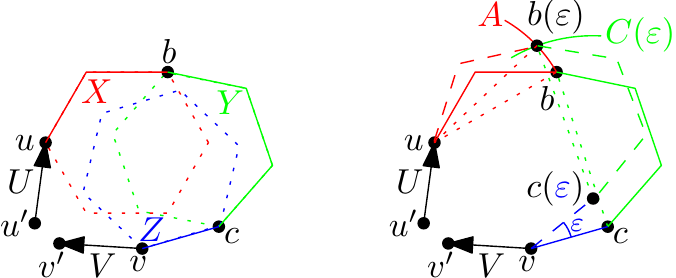}\caption{Left: The whole $u'\textrm-v'$ path is right-turning. Right: $|c(\eps)b(\eps)|=|cb|,|ub(\eps)|=|ub|$.}\label{fig:circ}\end{figure}

By Lemma~\ref{lem:monotL}, \P does not have long edges; thus the consecutive arcs intersect at vertices
(cf.\ Fig.~\ref{fig:circ}, left).
Let $b=X\cap Y,c=Y\cap Z$. We modify \P by a sequence of 3 rotations using the angular freedom at $u,b,c,v$. Specifically, for a real number \eps let $c(\eps)$ denote the image of $c$ after rotation by \eps around $v$, and let $C(\eps)$ be the circle of radius $cb$ the centered at $c(\eps)$ (Fig.~\ref{fig:circ}, right). Let $A$ be the circle of radius $|ub|$ centered at $u$, and let $b(\eps)=A\cap C(\eps)$. Since $u,b,c$ are not collinear, the circles $A$ and $C(0)$ intersect ``properly'' (formally, there is another point of intersection in addition to $b=b(0)$; we set $b(\eps)$ to be the point of the intersection closer to $b$). This means that there is some positive-length interval \cal{E} containing 0 in the interior, such that for any $\eps\in\cal{E}$ we have $A\cap C(\eps)\ne\emptyset$.

For $\eps\in\cal{E}$ let $\P(\eps)=X(\eps)Y(\eps)Z(\eps)$ be the path consisting of 3 arcs: $u\-b(\eps)$ arc $X(\eps)$, $b(\eps)\-c(\eps)$ arc $Y(\eps)$, and $c(\eps)\-v$ arc $X(\eps)$; the arcs are perturbed versions of $X,Y,Z$ and the length of $P(\eps)$ equals that of \P. Because of the non-zero freedom at each of $u,b,c,v$, there exists a positive-length interval $\cal{E}'\subset\cal{E}$ such that for any \eps in $\cal{E}'$ the path $P(\eps)$ is a feasible path. Indeed, since our perturbation does not change lengths, the only reason why \P may become infeasible is because of a turn constraint. But \P had non-zero freedom at $u,b,c$ \e{both} in \cw and \ccw directions -- by our assumption \P was not flush. As far as $v$ is concerned, if the turn of \P at $v$ onto the post-edge is not~\th, then \P has a similar (both \cw and \ccw) freedom at $v$; if the turn at $v$ is exactly~\th, then the freedom at $v$ is one-sided, i.e., $\cal{E}'$ has $0$ as an endpoint. Still, even in the latter case, $c$ can be rotated \cw around $v$ by a non-zero \eps while keeping the path feasible.

Let $\eps^*$ be an endpoint of $\cal{E}'$, such that the turn at $v$ onto the post-edge is not~\th and is to the right. (If the turn of \P at $v$ was~\th, then $\eps^*$ is the non-zero endpoint of $\cal{E}'$; otherwise, $\eps^*$ can be either of $\cal{E}'$'s endpoints.)

If $\eps^*$ is an endpoint of \cal{E}, then we can feasibly perturb \P so that $u,b$ and $c$ become collinear. If any of the arcs $X,Y$ has more than 1 edge, then collinearity of the arcs endpoints implies existence of an inflection edge in $P(\eps^*)$, which means that $P(\eps^*)$ can be shortened, and hence (since $P(\eps^*)$ and \P are equal-length) that \P was not a shortest path. On the other hand, if both $X=ub$ and $Y=bc$, then the collinearity of $u,b,c$ implies that $uc$ is a long edge of $P(\eps^*)$, which also means, by Lemma~\ref{lem:monotL}, that $P(\eps^*)$ is not shortest.

If $\eps^*$ is not an endpoint of \cal{E}, then an angular constraint must become tight at $\eps^*$  --- either the turn at $v$ becomes $\th$ and to the left (and then proceed as in the first paragraph), or the turn at one of $u(\eps^*), b(\eps^*),c(\eps^*)$ becomes~\th. In the latter case, either the path is flush at $u$, or the number of arcs in $P(\eps^*)$ becomes less than~3.

\section{Proof of Lemma \ref{lem:canoncial}}\label{app:lem:canonical}
Since we add turns of degree 0, the turn constraints and the turn-over-length constraints are not violated. To check the length constraints assume for the sake of contradiction that there exist adjacent short edges $ab,bc$ in~$\P'$.

If the turn at $b$ is not 0, at least one of the vertices $a,c$ must have been appended to \P (for otherwise already the original path \P had adjacent short edges $ab,bc$); suppose it is $a$ (Fig.~\ref{fig:canonical}, left). Let $a^-$ be the vertex of \P preceding $a$. By definition, no edge may contain two bridges; thus $ab$ is the only bridge on the edge $a^-b$, and the part $a^-a$ is a normal edge of some arc. Hence the edge $a^-b$ is long. By similar argument, if $c$ is appended, then $bc^+$ is long. If $c$ is not appended, then $bc$ is a short edge of the original path. In any case \P had a long edge $a^-b$ adjacent to a short or long edge $bc$ -- contradiction to Lemma~\ref{lem:ln}.

Assume now that the turn at $b$ is 0 (Fig.~\ref{fig:canonical}, right). Then $b$ has been appended, and thus is a bridge endpoint. An edge can contain at most two appended vertices, so one of $a,c$ is an original vertex of \P; suppose it is $c$. But then, since $b$ is an endpoint of an arc, $bc$ is a normal edge -- contradiction to our assumption that $ab,bc$ are short.
\begin{figure}\centering\includegraphics{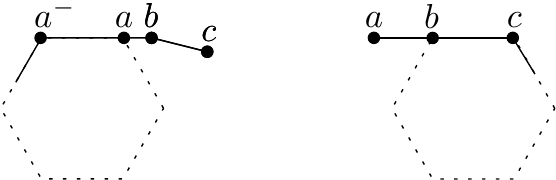}\caption{Left: If the turn at $b$ is not zero, then $a^-$ is a vertex of \P and $a^-a$ is a normal edge -- part of an arc, so \P had a long edge $a^-b$ adjacent to short edge $bc$. Right: If the turn at $b$ is zero, then $bc$ is a normal edge -- part of an arc.}\label{fig:canonical}\end{figure}

\section{Proof of Lemma~\ref{lem:bridge}}\label{app:lem:bridge}
Let us check that all constraints in the definition of a discrete curvature-constrained path are satisfied for $\P(\eps T)$ for small enough \eps.
\begin{list}{}{}
\item[\e{Turn constraints}]The bridge endpoints are the only vertices the turns at which may change due to the modification of the path. From definition of arc, a bridge never makes an angle \th with an adjacent edge (or else part of the bridge would have belonged to an arc). Thus, the turns at both $a$ and $b$ can change without violating the turn constraints, provided the changes are small.
\item[\e{Length constraints}]The bridge is the only edge whose length may change due to the modification; for that to possibly break the length constraints, the bridge must be adjacent to a short edge, and must itself become short in $P(\eps T)$ while being normal in \P (otherwise, if the bridge was long in \P, it will remain long also in $P(\eps T)$, for a sufficiently small \eps). However, having a normal bridge adjacent to a short edge contradicts optimality of \P. To see this, let us look more closely when a path can have a normal bridge:
    \begin{itemize}
    \item If both $a$ and $b$ were vertices of \P, then $ab$ would not have been a bridge (it would have been part of an arc).
    \item If none of $a,b$ was a vertex of \P (i.e., if both were added during the canonization), then $ab$ is adjacent to normal edges on both sides (the edges, together with the bridge, formed a length-3\ll edge of \P with turns of \th at both vertices); Fig.~\ref{fig:Bfreedom}, middle.
    \item If $a$ was originally a vertex of \P but $b$ was added during the canonization (Fig.~\ref{fig:Bfreedom}, right) then the edge $bb^+$ incident to $b$ is normal, so the only remaining problematic case is when the edge $a^-a$, incident to $a$ is short. But in this case the original path had a short edge adjacent to a long edge $bb^+$ contradictory to Lemma~\ref{lem:ln}.
    \item The situation when $b$ was a vertex of \P but $a$ was added during the canonization is symmetric to the above.
    \end{itemize}
\item[\e{Turn-over-length constraints}]Again, the only problematic case is when the bridge was normal in \P, became short in $P(\eps T)$, and the turn from $a^-a$ to $bb^+$ is larger then \th. Similarly to the above, if the bridge was normal, then its at least one incident turn was 0 in \P; hence for small \eps the turn-over-length constraints will not be violated.
\end{list}

\section{Proof of Lemma \ref{lem:b}}\label{app:lem:b}
\begin{list}{}{}
\item[\e{Proof of~\ref{item:bi}.}]Let $ab$ be a bridge, and $cd$ and inflection edge. First suppose that $ab$ and $cd$ are adjacent, say $b=c$ (Fig.~\ref{fig:bi}, left). If $b$ were not a vertex of \P before the canonization (i.e., if $b$ was added in the middle of an edge of \P), then $bd$ is not an inflection edge (the turn at $b$ is 0); so $b$ must be a vertex of \P. If $ab$ is not normal, then it cannot be adjacent to an inflection edge by Lemma~\ref{lem:infl}; so $ab$ must be normal. Then $a$ is not a vertex of \P, since otherwise $ab$, being a normal edge, would not have been a bridge (it would have been part of an arc). Thus, $a$ was added during the canonization in the middle of an edge $a^-b$ of \P. Since $|a^-a|=\ll$, we have that $ab$ is long, and is adjacent to an inflection edge $cd$ -- a contradiction to Lemma~\ref{lem:infl}.

Suppose now that $ab$ and $cd$ are not adjacent, and let $P'$ be the subpath of \P between $b$ and $c$ (Fig.~\ref{fig:bi}, right). We rigidly translate $P'$ so that $c$ slides towards $d$ and $ab$ rotates around $a$ keeping connectivity to $b$. By Lemmas~\ref{lem:IlengthFreedom} and~\ref{lem:bridge}, the path remains feasible. By the triangle inequality, the path shortens.
\begin{figure}
\begin{minipage}[c]{.5\columnwidth}\centering\includegraphics{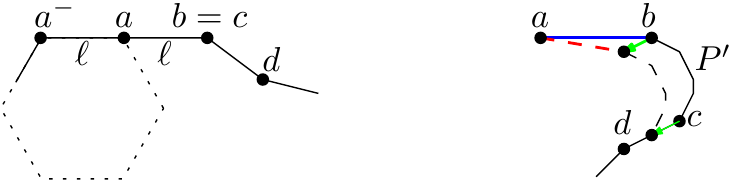}\caption{Left: $b=c$. It must be that $b$ is a vertex of \P and $ab$ is normal, implying that $a$ was added during canonization, and that \P had a long edge $a^-b$ adjacent to an inflection edge $bd$ -- contradiction to Lemma~\ref{lem:infl}. Right: The modification replaces {\color{blue}{blue}}+{\color{green}{green}} by {\color{red}{red}}.}\label{fig:bi}\end{minipage}\hfill
\begin{minipage}[c]{.4\columnwidth}\centering\includegraphics{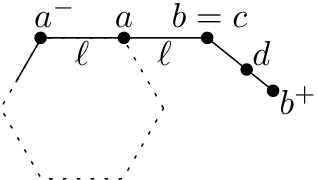}\caption{One of the adjacent bridges is normal, and hence is a proper subset of a long edge; the long edge is adjacent to a normal edge, and the second bridge has ``no space'' to fit on the normal edge.}\label{fig:2b}\end{minipage}
\end{figure}
\item[\e{Proof of~\ref{item:lb}}.]Similar to proof of~\ref{item:bi}: If the bridge and the long edge are adjacent, the bridge must be normal (by Lemma~\ref{lem:ln}), but then it belongs to a long edge of the original (non-canonized) path, which implies that the original path had adjacent long edges -- a contradiction to Lemma~\ref{lem:ln}. If the bridge and the long edge are not adjacent, the part between them can be moved along the long edge just as it was moved along the inflection edge in the proof of~\ref{item:bi}; Lemma~\ref{lem:Lfreedom} is used in place of Lemma~\ref{lem:IlengthFreedom} -- to ensure feasibility of the motion.
\item[\e{Proof of~\ref{item:2b}}.]Similar to proofs of~\ref{item:bi} and~\ref{item:lb}: First suppose that the two bridges $ab,cd$ are adjacent, i.e., $b=c$ (Fig.~\ref{fig:2b}). Since no edge can have two bridges, $b$ must be a vertex of the original path, which implies (Lemma~\ref{lem:ln}) that at least one of the bridges, say $ab$, is normal. The normal bridge must be a proper subset of a long edge $a^-b$ of the original path. Let $bb^+$ be the edge of the original path adjacent to $a^-b$; by Lemma~\ref{lem:ln}, $bb^+$ is normal. Now, if $d\ne b^+$, then $bd$ would be a short edge adjacent to a long edge $a^-b$ (contradicting Lemma~\ref{lem:ln}); on the other hand, if $d=b^+$ then $bd$ cannot be a bridge -- it should be (part of) and arc.

If the two bridges are not adjacent, the part between them can be moved along one of them just as it was moved along the inflection edge in the proof of~\ref{item:bi}; Lemma~\ref{lem:bridge} is used to ensure feasibility of the motion.
\end{list}\qed

\end{document}